\newtheorem{definition}{Definition}
\newtheorem{remark}{Remark}
\newtheorem{proposition}{Proposition}
\newtheorem{lemma}{Lemma}
\newcommand{\Tr}[0]{\mathrm{Tr}}
\newcommand{\cA}{\mathcal{A}}
\newcommand{\cC}{\mathcal{C}}
\newcommand{\cD}{\mathcal{D}}
\newcommand{\cH}{\mathcal{H}}
\newcommand{\cI}{\mathcal{I}}
\newcommand{\cL}{\mathcal{L}}
\newcommand{\cM}{\mathcal{M}}
\newcommand{\cN}{\mathcal{N}}
\newcommand{\cP}{\mathcal{P}}
\newcommand{\cQ}{\mathcal{Q}}
\newcommand{\cS}{\mathcal{S}}
\newcommand{\cT}{\mathcal{T}}
\newcommand{\cU}{\mathcal{U}}
\newcommand{\cV}{\mathcal{V}}
\newcommand{\cW}{\mathcal{W}}
\newcommand{\cX}{\mathcal{X}}
\begin{document}

\title{Measured Hockey-Stick Divergence and its Applications to Quantum Pufferfish Privacy} 
\author{%
  \IEEEauthorblockN{Theshani Nuradha\IEEEauthorrefmark{1}, Vishal Singh\IEEEauthorrefmark{2}, and Mark M.~Wilde\IEEEauthorrefmark{1}}
  \IEEEauthorblockA{\IEEEauthorblockA{\IEEEauthorrefmark{1}School of Electrical and Computer
  Engineering, Cornell University, Ithaca, New York 14850, USA.}
  \IEEEauthorrefmark{2} School of Applied and Engineering Physics, Cornell University, Ithaca, New York 14850, USA.}  
}

\maketitle

\begin{abstract} The hockey-stick divergence is a fundamental quantity characterizing several statistical privacy frameworks that ensure privacy for classical and quantum data.
In such quantum privacy frameworks, the adversary is allowed to perform all possible measurements. However, in practice, there are typically limitations to the set of measurements that can be performed. To this end, here, we comprehensively analyze the measured hockey-stick divergence under several classes of practically relevant measurement classes. We prove several of its properties, including data processing and convexity. We show that it is efficiently computable by semi-definite programming for some classes of measurements and can be analytically evaluated for Werner and isotropic states. Notably, we show that the measured hockey-stick divergence characterizes optimal privacy parameters in the quantum pufferfish privacy framework.
With this connection and the developed technical tools, we enable methods to quantify and audit privacy for several practically relevant settings. Lastly, we introduce the measured hockey-stick divergence of channels and explore its applications in ensuring privacy for channels.
\end{abstract}


\section{Introduction}
In classical and quantum information theory, divergences quantify the discrepancy between two probability measures and between two quantum states, respectively. 
Due to the non-commutative nature of operators, a wide spectrum of quantum divergences generalizes their classical counterparts.
 One route for introducing a quantum divergence is  to perform a measurement on the quantum states  and then evaluate the supremum of the  classical divergence of the resulting probability measures over all quantum measurements~\cite{donald1986relative,hiai1991proper,P09}. These are known as measured divergences. 

However, in practice, not all measurements can be implemented due to physical restrictions or design considerations. Hence, it is important to study settings in which only limited measurements are allowed. To this end, Refs.~\cite{matthews2009distinguishability,cheng2023discrimination} analyzed the measured trace distance under restrictive measurements while Ref.~\cite{rippchen2024locally} analyzed locally measured variants of R\'enyi divergences.

In this paper, we focus on measured variants of the hockey-stick divergence under practically relevant classes of measurements. The hockey-stick divergence has recently found application as one of the fundamental quantities characterizing optimal privacy parameters in various classical and quantum privacy frameworks~\cite{asoodeh2021local, hirche2023quantum,angrisani2023unifying,nuradha_QPP,hirche2023quantum_2}. Quantum differential privacy (QDP) is one such privacy framework, which ensures that it is difficult for an adversary to distinguish between two distinct quantum states under \textit{all possible measurements} that an adversary can perform~\cite{QDP_computation17,hirche2023quantum}.

Generalizing QDP, a flexible privacy framework termed quantum pufferfish privacy (QPP) was proposed, where it offers customization and flexibility to incorporate limitations related to the application~\cite{nuradha_QPP}. It generalizes the classical pufferfish privacy framework introduced and analyzed in~\cite{KM14, nuradha2022pufferfishJ,nuradha_MIPP}. One key feature of the QPP framework is the flexibility to specify the \textit{allowed set of measurements}. It was shown in~\cite{nuradha_QPP} that the QPP framework for the special case of \textit{all measurements} is equivalent to a constraint on the hockey-stick divergence,
by generalizing the connection established for QDP in~\cite{hirche2023quantum}.

Inspired by these connections, we study measured variants of the hockey-stick divergence. These variants can further provide insights into the study of privacy in quantum systems with QPP, taking into consideration practically motivated measurement classes. 
In particular, this analysis introduces avenues for quantifying optimal privacy parameters in the QPP setting and auditing privacy by computing the measured hockey-stick divergences, which were not addressed in previous works. 

The main goal of our work is to provide a comprehensive analysis of measured hockey-stick divergences under restricted measurements. 
We first show that, under an additive constraint on the measurement operators (satisfied by all measurement classes that allow classical post-processing), the measured hockey-stick divergence is achieved by two-outcome measurements. 
We then derive key properties satisfied by the measured hockey-stick divergence, including data processing, triangular inequality, and convexity.

Next, we provide computational methods for estimating the measured divergence. We show that the positive-partial-transpose (PPT) measured hockey-stick divergence can be computed by a semi-definite program (SDP). 
We arrive at analytical expressions for Werner and isotropic states under restricted measurements. There, we show that they coincide for different classes of measurements, including PPT measurements and measurements that use local operations and classical post-processing. 

Furthermore, we show that the measured hockey-stick divergence finds operational meaning in $(\varepsilon, \delta)$-QPP as the optimal $\delta$ that can be achieved for a fixed $\varepsilon$ privacy parameter, extending the connection between hockey-stick divergences and QDP established in~\cite{hirche2023quantum}. Then, we utilize the derived results to obtain optimal privacy parameters in specific privacy settings.

Lastly, we introduce the measured hockey-stick divergence for channels. To this end, we provide SDP-computable expressions for the channel divergence under PPT measurements and its connection to the divergence between the Choi states of channels for jointly-covariant channels. With that, we establish analytical expressions for the channel divergence for depolarizing channels. We also highlight an application of these channel divergences in ensuring privacy of channels.

\section{Notations and Definitions} \label{Sec:Notation}
\subsubsection{Notations} A quantum system~$R$ is associated with a finite-dimensional Hilbert space~$\mathcal{H}_R$. $\mathcal{L}(\mathcal{H}_R)$ denotes the set of linear operators acting on $\mathcal{H}_R$. 
We denote the transpose of $X$ by $T(X)$. Let $T_A(C_{AB})$ denote the partial transpose of $C_{AB} \in \mathcal{L}(\mathcal{H}_A \otimes \mathcal{H}_B)$ on the system~$A$. 
We denote the trace of~$C_{AB}$ by $\Tr\!\left[C_{AB} \right]$, and the partial trace of $C$ over the system $A$ by $\Tr_A \!\left[C_{AB}\right]$. $A \geq B$ indicates that $A-B$ is a positive semi-definite (PSD) operator, while $A > B$ indicates that $A-B$ is a positive definite operator, for Hermitian operators $A$ and $B$.

A quantum state $\rho_R\in\mathcal{L}(\mathcal{H}_R)$ 
is a PSD, unit trace operator acting on $\mathcal{H}_R$. We denote the set of all quantum states that act on $\mathcal{H}_R$ as $\mathcal{D}(\mathcal{H}_R)$. 
A quantum channel $\mathcal{N}: \mathcal{L}(\mathcal{H}_A ) \to \mathcal{L}(\mathcal{H}_B)$ is a linear, completely positive, and trace-preserving map (CPTP) from $\mathcal{L}(\mathcal{H}_A)$ to $\mathcal{L}(\mathcal{H}_B)$. We denote the Hilbert--Schmidt adjoint of $\mathcal{N}$ by $\mathcal{N}^\dagger$. A measurement of a quantum system $R$ is described by a
positive operator-valued measure (POVM) $\{M_y\}_{y \in \mathcal{Y}}$, which is defined as a collection of PSD operators satisfying $\sum_{y \in \mathcal{Y}} M_y= I_{R}$, where $I_{R}$ is the identity operator and $\mathcal{Y}$ is a finite alphabet. When applying the above POVM to a state $\rho$, the probability of observing the outcome~$y$ is equal to $\Tr\!\left[M_y \rho \right]$, known as the Born rule.  

\subsubsection{Hockey-Stick Divergences}
Let $\rho$ and $\sigma$ be states.
For $\gamma \geq 0$, the hockey-stick divergence is defined as~\cite{sharma2012strong,nuradha2024contraction}
\begin{align}\label{eq:hockey_stick_all}
    E_\gamma(\rho \Vert \sigma) & \coloneqq \sup_{0 \leq M \leq I} \operatorname{Tr}\!\left[ M(\rho -\gamma \sigma )\right] -(1-\gamma)_+,
\end{align}
where $(x)_+ \coloneqq \max\{0,x\}$. 
In this case, $M$ and $I-M$ are valid measurement operators that satisfy $0 \leq M \leq I $ and $0 \leq I-M \leq I$.
For the classical setting with (discrete) probability distributions $P$ and $Q$, the hockey-stick divergence for $\gamma \geq 0$ is defined as
\begin{align}\label{eq:classical_HS}
    E_\gamma(P \Vert Q) & \coloneqq 
    \sum_x \max\{0, P(x)- \gamma Q(x)\} - (1-\gamma)_+
\end{align}

In this paper, we are interested in restricted measurement operators, which is inspired by the practical feasibility of only certain measurements. Let $\mathcal{M}$ denote the restricted measurement operator set, and let us define
\begin{equation}\label{eq:M_2_set}
    \cM_2 \coloneqq \left\{ M: M, I-M \in \cM \right\}.
\end{equation}
Also, note that $\cM_2=\cM$ in many cases, including all possible measurements and PPT measurements.
With that, we define the measured hockey-stick divergence related to the measurement set $\cM$.

\begin{definition}[Measured Hockey-Stick Divergence] \label{def:HS_Measured}
Let $\rho$ and $\sigma$ be states. For $\gamma \geq 0$, we define the measured hockey-stick divergence based on the measurement set $\cM$ as follows:
\begin{equation}\label{eq:hockey_stick_define}
    E_\gamma^{\mathcal{M}}(\rho \Vert \sigma) \coloneqq \sup_{M\in \mathcal{M}_2} \left\{\operatorname{Tr}\!\left[ M(\rho -\gamma \sigma )\right]\right\} -(1-\gamma)_+,
\end{equation}
where $\cM_2$ is defined in~\eqref{eq:M_2_set}.   
\end{definition}
By choosing $\gamma=1$, we arrive at the measured trace distance, as  used in~\cite{matthews2009distinguishability,cheng2023discrimination}.

An alternative way to define the measured variant of the hockey-stick divergence is as follows: Let $\mu^{\rho,M}$ correspond to the classical probability distribution with atoms $\{ \operatorname{Tr}[M_x \rho]\}_x$,  and let $\mu^{\sigma,M}$ correspond to $\{ \operatorname{Tr}[M_x \sigma]\}_x$ for a POVM $\{M_x\}_x$. 
\begin{multline}\label{eq:hockey_stick_alter}
\widehat{E}_\gamma^{\mathcal{M}}(\rho \Vert \sigma) \coloneqq \\ \sup_{\{M_x \in \mathcal{M}\}_x} \left\{ E_\gamma\!\left( \mu^{\rho,M} \Vert \mu^{\sigma,M} \right):  M_x \geq 0, \ \sum_x M_x = I\right\},
\end{multline}
where the maximization is over $\{M_x \in \mathcal{M}\}_x$ POVMs 
and the classical hockey-stick divergence is defined as in~\eqref{eq:classical_HS}.

\section{Properties of 
Measured Hockey-Stick Divergence} 

\label{Sec:Measured_Divergence}

In this section, we show that the measured hockey-stick divergence defined in~\cref{def:HS_Measured} coincides with the alternative definition in~\eqref{eq:hockey_stick_alter} in many settings, and it satisfies several other properties, including data processing under measurement-compatible channels, the triangular inequality, and convexity.
In this work, we focus on the case when $\gamma \geq 1$, and note that our results extend to $\gamma \geq 0$ by utilizing~\eqref{eq:hockey_stick_define} and applying the equality $E_\gamma^\mathcal{M}(\rho \Vert \sigma) = \gamma E_{1/\gamma}^\mathcal{M}(\sigma \Vert \rho)$ proved in \cref{lem:hockey_stick_gamma_leq}.

Next, we establish conditions under which \cref{def:HS_Measured} coincides with the alternative definition in~\eqref{eq:hockey_stick_alter}.

\begin{proposition} \label{prop:equi_hockey_stick}
Let $\gamma \geq 1$, and let $\rho$ and $\sigma$ be states.
    If $M_1+M_2 \in \mathcal{M}$ holds for all $M_1, M_2 \in \mathcal{M}$ such that $M_1 +M_2 \leq I$  (coarse-graining), then the following equality holds:
\begin{equation}\label{eq:equi_hockey_stick}
        E_\gamma^{\mathcal{M}}(\rho \Vert \sigma) =\widehat{E}_\gamma^{\mathcal{M}}(\rho \Vert \sigma),
    \end{equation}
where $ E_\gamma^{\mathcal{M}}$ is defined in~\cref{def:HS_Measured} and $\widehat{E}_\gamma^{\mathcal{M}}$ is defined in~\eqref{eq:hockey_stick_alter}.
\end{proposition}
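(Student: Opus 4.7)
The plan is to prove the equality by establishing two matching inequalities between $E_\gamma^{\mathcal{M}}$ and $\widehat{E}_\gamma^{\mathcal{M}}$, using throughout that $\gamma \geq 1$ makes $(1-\gamma)_+ = 0$. For the direction $E_\gamma^{\mathcal{M}}(\rho \Vert \sigma) \leq \widehat{E}_\gamma^{\mathcal{M}}(\rho \Vert \sigma)$, I would fix any $M \in \mathcal{M}_2$ and feed the two-outcome POVM $\{M, I - M\}$ into~\eqref{eq:hockey_stick_alter}; this POVM is admissible by the very definition of $\mathcal{M}_2$. Applying the classical definition~\eqref{eq:classical_HS} to the resulting binary distributions yields a sum of two nonnegative terms, one of which is $\max\{0, \operatorname{Tr}[M(\rho - \gamma\sigma)]\} \geq \operatorname{Tr}[M(\rho - \gamma\sigma)]$. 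Taking the supremum over $M \in \mathcal{M}_2$ gives the inequality.

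For the reverse direction, I would take an arbitrary POVM $\{M_x\}_x$ with $M_x \in \mathcal{M}$ and partition the outcomes into $S \coloneqq \{x : \operatorname{Tr}[M_x(\rho - \gamma\sigma)] \geq 0\}$ and its complement. Since $\gamma \geq 1$, the classical expression~\eqref{eq:classical_HS} applied to the induced distributions collapses to
\[
\sum_{x \in S} \operatorname{Tr}[M_x(\rho - \gamma\sigma)] = \operatorname{Tr}[M(\rho - \gamma\sigma)],
\]
where $M \coloneqq \sum_{x \in S} M_x$. By iterating the coarse-graining hypothesis, both $M$ and $I - M = \sum_{x \notin S} M_x$ belong to $\mathcal{M}$, so $M \in \mathcal{M}_2$. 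Hence $\operatorname{Tr}[M(\rho - \gamma\sigma)] \leq E_\gamma^{\mathcal{M}}(\rho \Vert \sigma)$, and supremizing over POVMs closes the loop.

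The step most needing care is the iteration of the coarse-graining hypothesis itself: the pairwise condition ``$M_1 + M_2 \in \mathcal{M}$ whenever $M_1, M_2 \in \mathcal{M}$ and $M_1 + M_2 \leq I$'' must be chained by a short induction to conclude that arbitrary partial sums $\sum_{x \in S} M_x$ belong to $\mathcal{M}$. This is legitimate because every such partial sum is bounded above by $I$ (since the full POVM sums to $I$), so each intermediate operator produced by the induction is itself a valid element of $\mathcal{M}$ to which the hypothesis can be reapplied. Absent this coarse-graining property, a POVM with more than two outcomes cannot in general be reduced to a two-outcome measurement in $\mathcal{M}_2$, and the equality~\eqref{eq:equi_hockey_stick} need not hold.
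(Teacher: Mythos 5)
Your proposal is correct and follows essentially the same route as the paper's proof: the forward direction via the two-outcome POVM $\{M, I-M\}$ and the reverse direction by coarse-graining the outcomes with nonnegative $\operatorname{Tr}[M_x(\rho-\gamma\sigma)]$ into a single operator $M_+$ with $M_+, I-M_+ \in \mathcal{M}$. Your explicit remark that the pairwise coarse-graining hypothesis must be chained by induction (valid because every partial sum of POVM elements is bounded by $I$) is a detail the paper leaves implicit, but it is the same argument.
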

\begin{IEEEproof}
See~\cref{App:eqqui_hockey_stick_proof}.
\end{IEEEproof}
\begin{remark}[Measurement Sets Satisfying Equality]
    The coarse-graining process in~\cref{prop:equi_hockey_stick} can be achieved by classical post-processing of the measurement outcome. Therefore, as long as \textit{classical post-processing} of the measurement outcomes is allowed in any set $\mathcal{M}$, then~\eqref{eq:equi_hockey_stick} holds. The set of PPT measurements and local measurements with classical post-processing also satisfy the coarse-graining property (see~\cref{Sec:special_Setting} for details). 
\end{remark}

We define a channel $\cN$ to be $\cM$-compatible under the measurement class $\cM$ if
\begin{equation}
\label{eq:measurement_compatible_channels}
     M\in \cM \implies \cN^\dag(M) \in \cM. 
\end{equation}

\begin{proposition} [Properties of Measured Hockey-Stick Divergence] \label{prop:properties_m_HS}
Let $\cM$ be the set of allowed measurement operators, and suppose that $\rho$ and $\sigma$ are states.
    Then $E_\gamma^\cM$ 
    in~\cref{def:HS_Measured} satisfies the following propertes:
    \begin{enumerate}
        \item Data Processing under $\mathcal{M}$-compatible channels: 
        If $\cN$ satisfies~\eqref{eq:measurement_compatible_channels} under $\cM$,
        then 
        $E_\gamma^\cM(\rho \Vert \sigma) \geq E_\gamma^\cM\!\left( \cN(\rho) \Vert \cN(\sigma) \right)$.
        
        \item Triangular inequality: 
        Let $\gamma_1, \gamma_2 \geq 1$, and suppose that 
        $\tau$ is a state. Then \\
           $ E_{\gamma_1 \gamma_2}^\cM (\rho \Vert \sigma) \leq E_{\gamma_1}^\cM(\rho \Vert \tau) + \gamma_1 E_{\gamma_2}^\cM(\tau \Vert \sigma)$.

        \item Monotonicity in $\gamma$: 
       If $\gamma_1 \geq \gamma_2 \geq 1$, then \\
           $E_{\gamma_1}^\cM (\rho \Vert \sigma) \leq  E_{\gamma_2}^\cM (\rho \Vert \sigma).$

        \item Convexity: Let $\rho \coloneqq \sum_{x \in \cX} p_x \rho_x$ and $\sigma \coloneqq \sum_{x \in \cX} p_x \sigma_x$, where $p, q \in \cP(\cX)$ are probability mass functions and $(\rho_x)_x$ and $(\sigma_x)_x$ are tuples of states. Fix $\gamma \geq 1$. Then, \\
        $E_\gamma^\cM(\rho \Vert \sigma) \leq \sum_{x \in \cX} p_x E_\gamma^\cM( \rho_x \Vert \sigma_x)$.
    \end{enumerate}
\end{proposition}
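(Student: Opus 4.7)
The plan is to handle each of the four properties directly from Definition~\ref{def:HS_Measured}, exploiting the observation that for $\gamma \geq 1$ we have $(1-\gamma)_+ = 0$, so $E_\gamma^{\cM}(\rho\Vert\sigma) = \sup_{M \in \cM_2} \operatorname{Tr}[M(\rho - \gamma\sigma)]$. Since $\cM_2$ is defined by $M, I-M \in \cM$, the key bookkeeping task throughout will be to verify that the test operators constructed in each argument lie in $\cM_2$.

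For \emph{data processing}, I would take any $M \in \cM_2$ and use the Heisenberg adjoint identity $\operatorname{Tr}[M(\cN(\rho) - \gamma \cN(\sigma))] = \operatorname{Tr}[\cN^\dagger(M)(\rho - \gamma\sigma)]$. The $\cM$-compatibility hypothesis~\eqref{eq:measurement_compatible_channels} gives $\cN^\dagger(M) \in \cM$, while $\cN^\dagger(I - M) = \cN^\dagger(I) - \cN^\dagger(M) = I - \cN^\dagger(M)$ (using that $\cN$ is trace preserving, hence $\cN^\dagger$ is unital) also lies in $\cM$. Thus $\cN^\dagger(M) \in \cM_2$, and so each feasible $M$ on the output side is certified by a feasible operator on the input side, yielding the claimed inequality after taking the supremum.

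For the \emph{triangular inequality}, I would write the telescoping identity
\begin{equation}
\rho - \gamma_1 \gamma_2 \sigma = (\rho - \gamma_1 \tau) + \gamma_1(\tau - \gamma_2 \sigma),
\end{equation}
and apply any $M \in \cM_2$ to both sides to obtain $\operatorname{Tr}[M(\rho - \gamma_1\gamma_2\sigma)] \leq E_{\gamma_1}^\cM(\rho\Vert\tau) + \gamma_1 E_{\gamma_2}^\cM(\tau\Vert\sigma)$; the supremum over $M$ gives the stated bound (using $\gamma_1\gamma_2 \geq 1$ so the $(1-\gamma_1\gamma_2)_+$ term vanishes). For \emph{monotonicity in $\gamma$}, I would observe that for $M \in \cM_2$ (in particular $M \geq 0$) and $\gamma_1 \geq \gamma_2 \geq 1$,
\begin{equation}
\operatorname{Tr}[M(\rho - \gamma_1 \sigma)] = \operatorname{Tr}[M(\rho - \gamma_2 \sigma)] - (\gamma_1 - \gamma_2)\operatorname{Tr}[M\sigma] \leq \operatorname{Tr}[M(\rho - \gamma_2 \sigma)],
\end{equation}
since $\operatorname{Tr}[M\sigma] \geq 0$, and conclude by taking the supremum over $M \in \cM_2$. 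For \emph{convexity}, I would use linearity: for any $M \in \cM_2$,
\begin{equation}
\operatorname{Tr}[M(\rho - \gamma\sigma)] = \sum_{x \in \cX} p_x \operatorname{Tr}[M(\rho_x - \gamma\sigma_x)] \leq \sum_{x \in \cX} p_x E_\gamma^\cM(\rho_x \Vert \sigma_x),
\end{equation}
since this same $M \in \cM_2$ is feasible in each individual divergence; taking the supremum on the left finishes the argument.

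I do not expect any serious obstacle: each property reduces to a one-line algebraic identity combined with the feasibility check that the operators in play lie in $\cM_2$. The only subtle point worth emphasizing is why no coarse-graining assumption on $\cM$ is required (in contrast to Proposition~\ref{prop:equi_hockey_stick}): all four arguments use only a single measurement operator, never a sum of two, so the mere closure of $\cM_2$ under $M \mapsto I - M$ (which is built into its definition) is enough. The restriction $\gamma \geq 1$ is used solely to drop the $(1-\gamma)_+$ term and, via Lemma~\ref{lem:hockey_stick_gamma_leq}, is not a genuine restriction on the scope of the result.
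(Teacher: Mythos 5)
Your proposal is correct and follows essentially the same route as the paper's proof: the same adjoint/unitality argument for data processing, the same telescoping decomposition for the triangle inequality, and the same single-operator linearity arguments for monotonicity and convexity. No gaps; your closing remark about why no coarse-graining assumption is needed is accurate and consistent with the paper.
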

\begin{IEEEproof} See~\cref{App:properties_proof}.
\end{IEEEproof}
\section{Special Classes of Measurements and States} \label{Sec:special_Setting}
In this section, we consider special classes of measurements, $\cM$, and states in~\cref{def:HS_Measured}, such that $E_\gamma^\cM$ is computable analytically or by using a semi-definite program (SDP).

Let us define several special classes of measurements. 

The set of all measurements is defined as 
$\cM_{\operatorname{ALL}} \coloneqq \left\{ M: 0 \leq M \leq I\right\}$.
The set of local measurements (LO) is defined as 
 $\cM_{\operatorname{LO}} \coloneqq \{ E_A^{x} \otimes F_B^{y}: 
    \{E_A^x\}_x \textnormal{ and }  \{F_B^y\}_y \textnormal{ each form a POVM} \}$.
The set of local operations with classical post-processing is defined as 
\begin{equation}
    \cM_{\operatorname{LO}^\star} \coloneqq \left\{\begin{array}{c}
  \sum_{x,y}  T(x,y) E_A^{x} \otimes F_B^{y}:   \\
 \{E_A^x\}_x \textnormal{ and } \{F_B^{y}\}_y \textnormal{ each form a POVM}, \\  0 \leq T(x,y) \leq 1 \ \forall x,y
    \end{array}
\right\}.
\end{equation}
The set of local operations and one-way classical communication (1W-LOCC) based measurements is defined as 
\begin{equation}
    \cM_{\operatorname{1W-LOCC}} \coloneqq \left\{\begin{array}{c}
  \sum_x E_A^{x} \otimes F_B^{x,y}:   
 \{E_A^x\}_x \textnormal{ and } \\ \{F_B^{x,y}\}_y  \ \forall x \ \textnormal{each form a POVM} 
    \end{array}
\right\}.
\end{equation}
This set can also be understood as consisting of measurement operators that correspond to applying a 1W-LOCC channel and then  a local measurement followed by classical post-processing. 
Similarly, the local operations and classical communications (LOCC) measurement set $\cM_{\operatorname{LOCC}}$ is comprised of measurement operators generated by applying an LOCC channel followed by a local measurement and classical post-processing.
The set of PPT measurements is defined as
\begin{equation}
   \mathcal{M}_{\operatorname{PPT}}= \{M_{AB}: 0\leq M_{AB}, \mathrm{T}_B(M_{AB}) \leq I \} .
\end{equation}
By choosing $\cM=\mathcal{M}_{\operatorname{PPT}}$, we also have that $\mathcal{M}_{\operatorname{PPT}}=\cM_2$, where $\cM_2$ is defined in~\eqref{eq:M_2_set}. This follows because $M \in \cM_{\operatorname{PPT}}$ implies that $I-M \in \cM_{\operatorname{PPT}}$.

\cref{prop:equi_hockey_stick} implies that $ E_\gamma^{\mathcal{M}}(\rho \Vert \sigma) =\widehat{E}_\gamma^{\mathcal{M}}(\rho \Vert \sigma)$ for $\cM \in \{ \cM_{\operatorname{LO}^\star}, \cM_{\operatorname{1W-LOCC}}, \cM_{\operatorname{LOCC}}, \cM_{\operatorname{PPT}}\}$.
For the choice of PPT measurements, the measured hockey-stick divergence can be formulated as an SDP as follows.

\begin{proposition}[PPT measured Hockey-Stick Divergence] \label{prop:PPT_Measured_HS_SDP}
Let $\gamma \geq 1$, and let $\rho, \sigma \in \cD\!\left( \mathcal{H}_A \otimes \mathcal{H}_B\right)$. 
   The quantity $E_\gamma^{\operatorname{PPT}}$ can be expressed as the following SDPs:
    \begin{multline}
\sup_{0 \leq M_{AB} \leq I}\left\{  \operatorname{Tr}\!\left[ M_{AB}(\rho -\gamma \sigma )\right]: 0\leq \mathrm{T}_B(M_{AB}) \leq I   \right\} =\\
      \inf_{\substack{Y_i \geq 0}} 
        \left\{ \operatorname{Tr}[Y_4 +Y_3] : Y_3 - Y_1 +T_B(Y_4 -Y_2) \geq \rho -\gamma \sigma\right\}. 
    \end{multline}
\end{proposition}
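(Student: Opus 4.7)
The plan is to derive the stated identity as a standard primal--dual pair of semi-definite programs and then invoke Slater's condition to conclude strong duality.

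First, I would rewrite the left-hand side as the primal SDP
\begin{equation*}
\sup_{M_{AB} \geq 0}\left\{ \operatorname{Tr}[M_{AB}(\rho-\gamma\sigma)] : I - M_{AB} \geq 0,\ T_B(M_{AB}) \geq 0,\ I - T_B(M_{AB}) \geq 0 \right\},
\end{equation*}
and introduce PSD Lagrange multipliers $Y_1,Y_2,Y_3,Y_4 \geq 0$ associated with the four linear matrix inequalities $M_{AB} \geq 0$, $T_B(M_{AB}) \geq 0$, $I - M_{AB} \geq 0$, and $I - T_B(M_{AB}) \geq 0$, respectively. Using the self-adjointness of the partial transpose, i.e., $\operatorname{Tr}[A\,T_B(B)] = \operatorname{Tr}[T_B(A)\,B]$, the Lagrangian becomes
\begin{equation*}
\operatorname{Tr}[Y_3+Y_4] + \operatorname{Tr}\!\left[M_{AB}\!\left(\rho-\gamma\sigma + Y_1 - Y_3 + T_B(Y_2) - T_B(Y_4)\right)\right].
\end{equation*}
Taking the supremum over unconstrained Hermitian $M_{AB}$ forces the coefficient matrix to be $\leq 0$ in order to be finite, which yields the dual constraint
\begin{equation*}
Y_3 - Y_1 + T_B(Y_4 - Y_2) \geq \rho - \gamma\sigma,
\end{equation*}
and the dual objective $\operatorname{Tr}[Y_3+Y_4]$. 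This already establishes weak duality (that the primal value is $\leq$ the dual value).

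For the reverse inequality, I would invoke Slater's condition. A natural strictly feasible point for the primal is $M_{AB} = \tfrac{1}{2} I_{AB}$, since then $0 < M_{AB} < I$ and $T_B(M_{AB}) = \tfrac{1}{2} I$ also lies strictly between $0$ and $I$. As all primal inequality constraints are satisfied strictly and the problem is bounded, strong duality holds and the primal and dual optimal values coincide, proving the claimed equality. The main (very mild) obstacle is only the bookkeeping around the partial transpose when dualizing: one must be careful to apply $\operatorname{Tr}[A\,T_B(B)] = \operatorname{Tr}[T_B(A)\,B]$ consistently so that the dual constraint and objective match the stated form; otherwise the argument is a textbook SDP duality computation.
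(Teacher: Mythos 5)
Your proposal is correct and follows essentially the same route as the paper: cast the left-hand side and right-hand side as a standard primal--dual SDP pair, use the self-adjointness of the partial transpose to compute the adjoint of the constraint map (yielding $\Phi^\dag(Y)=Y_3-Y_1+T_B(Y_4-Y_2)$), and invoke Slater's condition for strong duality (the paper takes $M=(1-\delta)I_{AB}$ as the strictly feasible primal point where you take $M=\tfrac12 I_{AB}$; both work). One bookkeeping correction: the supremum over \emph{unconstrained} Hermitian $M_{AB}$ is finite only if the coefficient operator is exactly zero, not merely $\leq 0$; to arrive at the inequality-form dual constraint you should keep $M_{AB}\geq 0$ as an explicit constraint, so that $\sup_{M\geq 0}\operatorname{Tr}[MC]$ is finite if and only if $C\leq 0$ --- this is precisely the standard form the paper adopts, and your final dual is nonetheless the correct one.
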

\begin{IEEEproof}
See Appendix~\ref{app:PPT_Measured_HS_SDP}.
\end{IEEEproof}

\medskip
The above SDPs can be used to provide upper bounds on the locally measured hockey-stick divergence when the measurement set is comprised of LO, ${\operatorname{LO}^\star}$, 1W-LOCC, and LOCC, because
$E_\gamma^{\operatorname{LO}}(\rho \Vert \sigma) \leq   E_\gamma^{{\operatorname{LO}^\star}}(\rho \Vert \sigma) 
 \leq E_\gamma^{\operatorname{LOCC}}(\rho \Vert \sigma) \leq    E_\gamma^{\operatorname{PPT}}(\rho \Vert \sigma)$
 for all states $\rho$ and $\sigma$.

Next, we consider Werner states \cite{Wer89}, for which we characterize the measured hockey-stick divergence analytically.
We define a Werner state as follows:
\begin{equation}
    \omega^p \coloneqq p \Theta + (1-p) \Theta^\perp,
\end{equation}
where $p\in [0,1]$, and 
\begin{align} \label{eq:werner_extremes}
    \Theta  \coloneqq \frac{I+F}{d(d+1)}, \quad 
    \Theta^\perp  \coloneqq \frac{I-F}{d(d-1)}.
\end{align}
with $F \coloneqq \sum_{i,j=1}^d |i \rangle\!\langle j| \otimes  |j \rangle\!\langle i| $.

\begin{proposition}[Hockey-Stick Divergence for Werner States]\label{prop:HS_Werner_general}
    Let $p,q \in [0,1]$. For $\gamma \geq 1$, we have that 
\begin{equation}
    E_\gamma(\omega^q \Vert \omega^p) 
    =\max\{0,q-\gamma p, (1-q) -\gamma (1-p)\}.
\end{equation} 
\end{proposition}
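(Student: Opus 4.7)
The plan is to exploit the two-projector decomposition of Werner states and reduce the supremum in the hockey-stick definition to picking out the positive part of a self-adjoint operator.

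First, I would write the difference of the two Werner states in the basis $\{\Theta,\Theta^\perp\}$, namely
\[
\omega^q - \gamma\omega^p = (q-\gamma p)\,\Theta + \bigl((1-q)-\gamma(1-p)\bigr)\,\Theta^\perp .
\]
The key structural fact is that $\Theta$ and $\Theta^\perp$, as given in~\eqref{eq:werner_extremes}, are (renormalized) projectors onto the symmetric and antisymmetric subspaces of $\cH\otimes\cH$, and thus have orthogonal supports. Consequently, $\omega^q-\gamma\omega^p$ has exactly two distinct nonzero eigenvalues (generically), one constant on the symmetric subspace and one constant on the antisymmetric subspace.

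Next, since $\gamma\geq 1$, the term $(1-\gamma)_+$ in~\eqref{eq:hockey_stick_all} vanishes, so
\[
E_\gamma(\omega^q\Vert\omega^p)=\sup_{0\leq M\leq I}\Tr\!\left[M(\omega^q-\gamma\omega^p)\right],
\]
which is standardly achieved by choosing $M$ equal to the projector onto the positive eigenspace of $\omega^q-\gamma\omega^p$, giving the sum of its positive eigenvalues. Using the orthogonality of the supports of $\Theta$ and $\Theta^\perp$, together with $\Tr[\Theta]=\Tr[\Theta^\perp]=1$, the four possible choices $M\in\{0,\,\Pi_{\mathrm{sym}},\,\Pi_{\mathrm{asym}},\,I\}$ give values $0$, $q-\gamma p$, $(1-q)-\gamma(1-p)$, and $1-\gamma$ respectively, and the optimum is the maximum of $\max(0,q-\gamma p)+\max(0,(1-q)-\gamma(1-p))$.

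Finally, I would close the argument by noting the arithmetic identity $(q-\gamma p)+((1-q)-\gamma(1-p))=1-\gamma\leq 0$ for $\gamma\geq 1$. This forces at most one of the two quantities $q-\gamma p$ and $(1-q)-\gamma(1-p)$ to be strictly positive, so the sum of positive eigenvalues collapses to $\max\{0,\,q-\gamma p,\,(1-q)-\gamma(1-p)\}$. There is no real obstacle here; the only subtlety worth spelling out is that the two ``sectors'' cannot both contribute to the supremum once $\gamma\geq 1$, which is exactly why the answer takes the form of a three-way $\max$ rather than a sum of two independent positive parts.
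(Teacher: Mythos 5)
Your proof is correct, but it takes a more direct route than the paper. The paper first proves a twirling lemma (Lemma~\ref{lem:meas_hs_wer_symm}): it symmetrizes the measurement operator via the $U\otimes U$ twirl, reducing $E^{\cM}_\gamma(\omega^q\Vert\omega^p)$ to an expression involving only the overlaps $\Tr[M\Pi^{\operatorname{sym}}]$ and $\Tr[M\Pi^{\operatorname{asym}}]$, and then establishes the claimed value by separate upper bounds (using $0\le M\le I$ and the fact that at most one of $q-\gamma p$ and $(1-q)-\gamma(1-p)$ is positive) and lower bounds (choosing $M\in\{0,\Pi^{\operatorname{sym}},\Pi^{\operatorname{asym}}\}$). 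You bypass the twirl entirely: since $\Theta$ and $\Theta^\perp$ have orthogonal supports, $\omega^q-\gamma\omega^p$ is already diagonal in the symmetric/antisymmetric block decomposition with constant eigenvalue in each block, so the unrestricted supremum is exactly $\Tr[(\omega^q-\gamma\omega^p)_+]=(q-\gamma p)_+ + \bigl((1-q)-\gamma(1-p)\bigr)_+$, which collapses to the three-way max by the identity $(q-\gamma p)+\bigl((1-q)-\gamma(1-p)\bigr)=1-\gamma\le 0$. Both arguments ultimately rest on the same orthogonal decomposition and the same arithmetic observation; what the paper's longer route buys is that Lemma~\ref{lem:meas_hs_wer_symm} holds for an \emph{arbitrary} measurement class $\cM$ and is reused verbatim to prove the restricted-measurement result (Proposition~\ref{prop:Measured_HS_Werner}), whereas your positive-part shortcut is specific to the unrestricted case, where the optimal $M$ need not lie in a constrained set. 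One cosmetic nit: the phrase ``the optimum is the maximum of $\max(0,q-\gamma p)+\max(0,(1-q)-\gamma(1-p))$'' should read ``the optimum equals the sum of positive parts,'' which is what your spectral argument actually delivers before the final collapse.
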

\begin{IEEEproof}
See~\cref{App:Werner_States}.
\end{IEEEproof}
\begin{proposition}[Measured Hockey-Stick Divergence for Werner States]
\label{prop:Measured_HS_Werner}
Let $p,q \in [0,1]$.
     The following holds for $\gamma \geq 1$ and $\cM \in \left\{ \cM_{\operatorname{LO}^\star}, \cM_{\operatorname{1W-LOCC}}, \cM_{\operatorname{LOCC}}, \cM_{\operatorname{PPT}}\right\}$:
    \begin{equation}
 E_\gamma^{\cM}(\omega^q\Vert \omega^p)    =  \max\left\{0, \frac{2(q-\gamma p)} {d+1}, 1-\gamma- \frac{2(q-\gamma p)} {(d+1)}\right\}.
    \end{equation}
\end{proposition}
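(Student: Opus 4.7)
The strategy is to bound $E_\gamma^{\operatorname{PPT}}$ from above and $E_\gamma^{\operatorname{LO}^\star}$ from below by the same expression, then invoke the chain $E_\gamma^{\operatorname{LO}^\star} \leq E_\gamma^{\operatorname{1W-LOCC}} \leq E_\gamma^{\operatorname{LOCC}} \leq E_\gamma^{\operatorname{PPT}}$ recalled after \cref{prop:PPT_Measured_HS_SDP} to collapse the four quantities to a single value.

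For the upper bound, I exploit the $U\otimes U$ twirling map $\mathcal{T}(X) \coloneqq \int dU\,(U\otimes U) X (U\otimes U)^\dagger$. Werner states are fixed points of $\mathcal{T}$, and a direct calculation of the partial transpose on $B$ yields the intertwining identity $T_B\circ\mathcal{T} = \mathcal{T}'\circ T_B$, where $\mathcal{T}'$ is the $U\otimes\bar U$ twirling. Since $\mathcal{T}'$ is completely positive and unital, it preserves the conditions $0\leq T_B(M)\leq I$ defining the PPT set, so I may restrict the SDP in \cref{prop:PPT_Measured_HS_SDP} to operators in the commutant of $U\otimes U$. Such operators take the form $M = \alpha P_s + \beta P_a$ with $P_s \coloneqq (I+F)/2$ and $P_a \coloneqq (I-F)/2$, and the constraint $0\leq M\leq I$ reduces to $\alpha,\beta\in[0,1]$. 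Using $T_B(P_s) = (I + d\Phi)/2$ and $T_B(P_a) = (I - d\Phi)/2$, where $\Phi$ is the projector onto the maximally entangled state, the PPT condition simplifies to $0\leq \alpha(d+1)-\beta(d-1)\leq 2$. Invoking $\operatorname{Tr}[P_s \omega^r] = r$ and $\operatorname{Tr}[P_a \omega^r] = 1-r$, the objective becomes $\alpha(q-\gamma p) + \beta((1-q)-\gamma(1-p))$, yielding a planar LP whose feasible quadrilateral has vertices $(0,0)$, $(\tfrac{2}{d+1},0)$, $(\tfrac{d-1}{d+1},1)$, $(1,1)$. Evaluating the objective at each vertex and using $\gamma\geq 1$ to discard the nonpositive contribution $1-\gamma$ attained at $(1,1)$ produces exactly the claimed upper bound.

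For the matching lower bound, I exhibit the explicit operator $M_\star \coloneqq \sum_i |i\rangle\!\langle i|\otimes|i\rangle\!\langle i| \in \cM_{\operatorname{LO}^\star}$, obtained from computational-basis POVMs on each subsystem with post-processing $T(i,j) = \delta_{ij}$. Since every $|ii\rangle$ lies in the symmetric subspace, $\operatorname{Tr}[M_\star \omega^r] = \tfrac{2r}{d+1}$, so $\operatorname{Tr}[M_\star(\omega^q-\gamma\omega^p)] = \tfrac{2(q-\gamma p)}{d+1}$. The post-processing $T(i,j) = 1-\delta_{ij}$ yields $I - M_\star \in \cM_{\operatorname{LO}^\star}$, attaining $(1-\gamma) - \tfrac{2(q-\gamma p)}{d+1}$, while $M = 0$ attains $0$. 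Taking the maximum over these three $\operatorname{LO}^\star$ measurements matches the PPT upper bound, so all four measured divergences coincide at the stated value. The main obstacle is the PPT-preserving twirling reduction via the intertwining identity for $T_B$; once that symmetry step is in place, the remainder is a two-variable linear program and elementary expectation-value computations.
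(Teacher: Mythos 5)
Your proposal is correct and follows essentially the same route as the paper's proof: the same twirling symmetrization (exploiting $T_B(F)=|\Gamma\rangle\!\langle\Gamma|$ to extract the extra PPT constraint $0\le\alpha(d+1)-\beta(d-1)\le 2$, which is exactly the paper's condition $\operatorname{Tr}[M\Pi^{\operatorname{asym}}]\le\operatorname{Tr}[M\Pi^{\operatorname{sym}}]\le\operatorname{Tr}[M\Pi^{\operatorname{asym}}]+d$ in the coordinates $\alpha,\beta$), the same three $\operatorname{LO}^\star$ operators for the lower bound, and the same operator-inequality chain to collapse the four measurement classes. The only difference is presentational: you twirl the measurement operator itself (justified by the intertwining identity $T_B\circ\mathcal T=\mathcal T'\circ T_B$) and solve a two-variable LP at its vertices, whereas the paper keeps $M$ general and runs a sign-based case analysis on $q-\gamma p$; both are sound.
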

\begin{IEEEproof}
We prove the inequality ``$\geq$'' by specific choices of measurement operators that are $\operatorname{LO}^\star$: $M=0$, $M= \sum_{i=1}^d |i\rangle\!\langle i| \otimes |i\rangle\!\langle i|$, and $M=\sum_{i\neq j=1}^d |i\rangle\!\langle j| \otimes |i\rangle\!\langle j| .$ 
Then to obtain the inequality ``$\leq$'', we utilize the symmetry of Werner states and the PPT measurement operators. 
See~\cref{App:Werner_States} for a complete proof.
\end{IEEEproof}

\begin{remark}[High Dimensions]
   When $d \to \infty$ in \cref{prop:Measured_HS_Werner}, we see that the measured hockey-stick divergence converges to zero, while with all measurements, it can still be strictly positive, as shown in~\cref{prop:HS_Werner_general}. This highlights that Werner states are distinguishable with all measurements, while with limited measurements, that distinction  decays with increasing $d$. We utilize this property in~\cref{prop:QPP_Werner}.
\end{remark}

\begin{remark}[Isotropic States] \label{Rem:Isotropic_E_g}
    We obtain an analytical expression for the measured hockey-stick divergence of isotropic states in~\cref{App:Isotrpic_States}. They also coincide for different classes of measurements, similar to Werner states. 
\end{remark}

\section{Applications to Privacy}

In this section, we show how the measured hockey-stick divergence finds use in ensuring privacy for states, where  privacy is imposed by quantum pufferfish privacy introduced in~\cite{nuradha_QPP}. 
We also utilize the tools developed in Sections~\ref{Sec:Measured_Divergence} and~\ref{Sec:special_Setting} to study the privacy of quantum systems.

For simplicity, let us consider a special case of quantum pufferfish privacy (QPP). We call this special setting restricted quantum local differential privacy (QLDP).  
Note that the  results below equally apply to the general QPP setting with the appropriate optimizations \cite[Definition~4]{nuradha_QPP}. 

\begin{definition}[$(\cM,\cS)$-Restricted QLDP for States] \label{def:restricted_QLDP}
    Fix $\varepsilon \geq 0$ and $\delta \in [0,1]$. Let $\cS$ be a subset of quantum states (i.e., $\cS \subseteq \mathcal{D}(\mathcal{H})$).
    Let $\mathcal{A}$ be a quantum algorithm (viz., a quantum channel). Then~$\mathcal{A}$ is $ (\varepsilon, \delta)$-local differentially private under $(\mathcal{M},\cS)$ if  the following condition is satisfied:
\begin{equation} \Tr\!\left[M \mathcal{A}(\rho)\right] \leq e^\varepsilon \Tr\!\left[M \mathcal{A}(\sigma)\right] + \delta ,
\label{eq:QLDP-def-M}
\end{equation}
for all  $\rho, \sigma \in \mathcal{S}$ and all $ M \in  \mathcal{M}_2 $,
where $\cM_2$ is defined in~\eqref{eq:M_2_set} with the chosen $\cM$.
\end{definition}
\begin{proposition}[Equivalent Representation]\label{prop:QLDP_Equi_M}
 $ (\varepsilon, \delta)$-local differential privacy under $(\mathcal{M}, \cS)$ of a mechanism $\mathcal{A}$  (as defined in~\cref{def:restricted_QLDP}) is equivalent to 
    \begin{equation}
        \sup_{\rho, \sigma \in \cS} E_{e^\varepsilon}^{\mathcal{M}}\!\left( \mathcal{A}(\rho) \Vert \mathcal{A}(\sigma) \right) \leq \delta.
    \end{equation}
\end{proposition}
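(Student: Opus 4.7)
The plan is to show both directions by direct algebraic manipulation of the QLDP condition, leveraging the fact that $\varepsilon \geq 0$ implies $\gamma \coloneqq e^\varepsilon \geq 1$, so that the $(1-\gamma)_+$ term in the definition of $E_\gamma^{\cM}$ vanishes. This reduces the measured hockey-stick divergence to the clean expression $E_{e^\varepsilon}^{\cM}(\mathcal{A}(\rho) \Vert \mathcal{A}(\sigma)) = \sup_{M \in \cM_2} \operatorname{Tr}[M(\mathcal{A}(\rho) - e^\varepsilon \mathcal{A}(\sigma))]$, which aligns almost tautologically with the rearranged inequality from \cref{def:restricted_QLDP}.

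For the forward direction ($\Rightarrow$), I would start from the QLDP condition $\operatorname{Tr}[M\mathcal{A}(\rho)] \leq e^\varepsilon \operatorname{Tr}[M \mathcal{A}(\sigma)] + \delta$, which holds by hypothesis for all $\rho, \sigma \in \cS$ and all $M \in \cM_2$. Moving the terms to one side gives $\operatorname{Tr}[M(\mathcal{A}(\rho) - e^\varepsilon \mathcal{A}(\sigma))] \leq \delta$. Since this inequality holds for every $M \in \cM_2$, taking the supremum over $\cM_2$ on the left yields $E_{e^\varepsilon}^{\cM}(\mathcal{A}(\rho) \Vert \mathcal{A}(\sigma)) \leq \delta$, using that $(1-e^\varepsilon)_+ = 0$. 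Finally, since $\rho, \sigma \in \cS$ were arbitrary, taking the supremum over $\cS \times \cS$ preserves the bound.

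For the reverse direction ($\Leftarrow$), I would unfold the assumed bound: if $\sup_{\rho, \sigma \in \cS} E_{e^\varepsilon}^{\cM}(\mathcal{A}(\rho) \Vert \mathcal{A}(\sigma)) \leq \delta$, then for each fixed pair $\rho, \sigma \in \cS$, $\sup_{M \in \cM_2} \operatorname{Tr}[M(\mathcal{A}(\rho) - e^\varepsilon \mathcal{A}(\sigma))] \leq \delta$. Hence every individual $M \in \cM_2$ satisfies $\operatorname{Tr}[M(\mathcal{A}(\rho) - e^\varepsilon \mathcal{A}(\sigma))] \leq \delta$, which rearranges back to~\eqref{eq:QLDP-def-M}. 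This recovers the QLDP condition for all admissible $\rho, \sigma, M$, completing the equivalence.

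There is essentially no obstacle here: the proof is a bookkeeping exercise that hinges on (i) recognizing that for $\varepsilon \geq 0$ the $(1 - e^\varepsilon)_+$ correction is zero, and (ii) the fact that the supremum over the quantifiers in \cref{def:restricted_QLDP} matches exactly the supremum defining $E_{e^\varepsilon}^{\cM}$ followed by the supremum over $\cS$. The only subtlety worth flagging is ensuring the $\cM_2$ set (rather than $\cM$ itself) appears consistently on both sides, which is already enforced by~\cref{def:HS_Measured} and \cref{def:restricted_QLDP}, so no extra argument is needed.
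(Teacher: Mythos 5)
Your proof is correct and follows essentially the same route as the paper, which simply rearranges the terms of \cref{def:restricted_QLDP} and invokes \cref{def:HS_Measured} with $\gamma = e^\varepsilon \geq 1$ (so that $(1-e^\varepsilon)_+ = 0$). Your more explicit treatment of both directions and of the quantifier bookkeeping over $\cM_2$ and $\cS \times \cS$ is a faithful expansion of the paper's one-line argument.
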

\begin{IEEEproof}
   This follows by rearranging terms in~\cref{def:restricted_QLDP} and recalling \cref{def:HS_Measured} with $\gamma =e^\varepsilon \geq 1$. 
\end{IEEEproof}
\begin{remark}[Operational Interpretation]
    \cref{prop:QLDP_Equi_M} provides an operational interpretation for the measured hockey-stick divergence $E_{e^\varepsilon}^\cM$ as the optimal $\delta$ for fixed $\varepsilon$ when the adversary is allowed to use measurements belonging to the set $\cM$. 
\end{remark}

\begin{proposition}[QLDP for Werner States]\label{prop:QPP_Werner}
    Let $\cS= \{ \omega^p: 0\leq p \leq 1\}$. Then, for $\cM={\cM_{\operatorname{ALL}}}$, the identity channel is not private; i.e., $\delta=1$ for all $\varepsilon \geq 0$. 

    Furthermore, the identity channel is 
    $(\varepsilon, 2/(d+1))$- QLDP under $(\cM,\cS)$ for $\varepsilon \geq 0$ and for $\cM \in \{\cM_{\operatorname{LO}^\star}, \cM_{\operatorname{1W- LOCC}}, \cM_{\operatorname{LOCC}}, \cM_{\operatorname{PPT}}\}$.
\end{proposition}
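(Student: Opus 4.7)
The plan is to reduce the proposition to evaluating two suprema over pairs of Werner states by means of Proposition~\ref{prop:QLDP_Equi_M}. Taking $\mathcal{A}$ to be the identity channel and $\cS = \{\omega^p : p \in [0,1]\}$, Proposition~\ref{prop:QLDP_Equi_M} says that $(\varepsilon,\delta)$-QLDP under $(\cM,\cS)$ is equivalent to
\begin{equation}
\sup_{p,q \in [0,1]} E_{e^\varepsilon}^{\cM}\!\left(\omega^q \Vert \omega^p\right) \leq \delta.
\end{equation}
So the task reduces to computing (or upper bounding) this supremum for the relevant measurement classes, and plugging in the analytical formulas already derived in Section~\ref{Sec:special_Setting}.

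For the first claim, I would take $\cM = \cM_{\operatorname{ALL}}$ and invoke Proposition~\ref{prop:HS_Werner_general}, which evaluates $E_\gamma(\omega^q \Vert \omega^p)$ exactly for all $p,q$. Choosing the extremal Werner states $\omega^1 = \Theta$ and $\omega^0 = \Theta^\perp$ (which are orthogonal), the formula gives $E_{\gamma}(\omega^1 \Vert \omega^0) = \max\{0,\, 1,\, -\gamma\} = 1$ for every $\gamma = e^\varepsilon \geq 1$. Since the supremum lower bounds any admissible $\delta$, and $\delta$ ranges over $[0,1]$, this forces $\delta = 1$ for every $\varepsilon \geq 0$, which is precisely the non-privacy claim.

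For the second claim, I would use Proposition~\ref{prop:Measured_HS_Werner} with $\cM \in \{\cM_{\operatorname{LO}^\star}, \cM_{\operatorname{1W-LOCC}}, \cM_{\operatorname{LOCC}}, \cM_{\operatorname{PPT}}\}$, which asserts that $E_\gamma^{\cM}(\omega^q \Vert \omega^p) = \max\{0,\, A,\, 1 - \gamma - A\}$ with $A \coloneqq \tfrac{2(q - \gamma p)}{d+1}$. I would bound each of the three terms separately in $p,q \in [0,1]$: the term $A$ is maximized at $(q,p)=(1,0)$, yielding $\tfrac{2}{d+1}$, while the term $1 - \gamma - A$ is maximized at $(q,p)=(0,1)$, yielding $1 - \gamma + \tfrac{2\gamma}{d+1} = 1 - \gamma\cdot\tfrac{d-1}{d+1}$, which for $\gamma \geq 1$ and $d \geq 2$ is at most $\tfrac{2}{d+1}$ with equality at $\gamma=1$. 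Taking the maximum over all three terms and over $(p,q)$ therefore gives $\sup_{p,q} E_{e^\varepsilon}^{\cM}(\omega^q \Vert \omega^p) = \tfrac{2}{d+1}$, which combined with the equivalence above yields the $(\varepsilon, 2/(d+1))$-QLDP conclusion for every $\varepsilon \geq 0$.

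There is no serious obstacle here: both claims are corollaries of the already-established Propositions~\ref{prop:HS_Werner_general} and~\ref{prop:Measured_HS_Werner} together with Proposition~\ref{prop:QLDP_Equi_M}. The only mildly delicate point is verifying that the $1-\gamma-A$ branch of the measured formula never beats $\tfrac{2}{d+1}$ in the regime $\gamma \geq 1$, which I would handle by the one-line algebraic manipulation indicated above.
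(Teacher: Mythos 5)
Your proposal is correct and follows essentially the same route as the paper: reduce via Proposition~\ref{prop:QLDP_Equi_M} with $\cA=\cI$, then apply the exact formulas of Propositions~\ref{prop:HS_Werner_general} and~\ref{prop:Measured_HS_Werner}. The only cosmetic difference is that the paper invokes convexity to handle the supremum over $\cS$, whereas you optimize the closed-form expressions directly over $p,q\in[0,1]$, which is equally valid (and your check that the $1-\gamma-A$ branch never exceeds $2/(d+1)$ for $\gamma\ge 1$, $d\ge 2$ is the right delicate point to verify).
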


\begin{IEEEproof}
    The proof follows by choosing $\cA= \cI$ in~\cref{prop:QLDP_Equi_M}, using the convexity of hockey-stick divergence in~\cref{prop:properties_m_HS}, and applying Propositions~\ref{prop:HS_Werner_general} and~\ref{prop:Measured_HS_Werner}.
\end{IEEEproof}

\medskip

This shows explicitly how the measurement class plays an inherent role in ensuring privacy, if we are aware of the measurements that the adversary can perform. In particular, an algorithm will be completely non-private with a certain class of measurements (all measurements in~\cref{prop:QPP_Werner}) and almost completely private for another class of measurements (LOCC measurements in~\cref{prop:QPP_Werner} with increasing dimension).

\begin{remark}[Privacy Auditing]
    Auditing privacy refers to identifying whether an algorithm is private, as demanded by~\cref{def:restricted_QLDP}. In many practical scenarios, the set $\mathcal{S}$ relevant to an $(\mathcal{M},\mathcal{S})$-restricted QLDP setting contains a finite number of elements, which allows for a computationally feasible method to audit privacy. For a channel $\mathcal{A}$, a measurement set $\mathcal{M} \subseteq \mathcal{M}_{\operatorname{PPT}}$, and a set of states $\mathcal{S}$ with finite number of elements, one can compute $E^{\operatorname{PPT}}_{e^{\varepsilon}}\!\left(\mathcal{A}\!\left(\rho\right),\mathcal{A}\!\left(\sigma\right)\right)$ for each pair $\left(\rho,\sigma\right) \in \mathcal{S}\times \mathcal{S}$ via an SDP. If $\max_{\rho,\sigma \in \mathcal{S}}E^{\operatorname{PPT}}_{e^{\varepsilon}}\!\left(\mathcal{A}\!\left(\rho\right),\mathcal{A}\!\left(\sigma\right)\right) \le \delta$, then the mechanism $\mathcal{A}$ is guaranteed to be $(\varepsilon,\delta)$-restricted QLDP under $(\mathcal{M},\mathcal{S})$.

    Although the method described above provides a computationally feasible way to verify if a mechanism is private, the time complexity of this algorithm is exponential in the number of qubits and polynomial in the cardinality of $\mathcal{S}$. This calls for more efficient algorithms to estimate the measured hockey-stick divergence and its optimization over a given set of states, which we leave for future investigation.
\end{remark}

It was shown in~\cite{nuradha2024contraction,Christoph2024sample} that the contraction of quantum divergences under the processing of private channels is an essential tool in the study of statistical tasks under privacy constraints. Under restricted QLDP constraints, we obtain the following contraction of the measured trace distance.
\begin{proposition}[Contraction of Trace Distance under Restricted QLDP] \label{Cor:contraction_trace_distance}
    Let $\cA \in \mathcal{B}^{\varepsilon,\delta}_{\cM,\cS}$
    suppose that $\cA$ is $(\varepsilon,\delta)$-QLDP under $(\cM, \cS)$ defined in~\cref{def:restricted_QLDP}.
    Then for all states $\rho$ and $\sigma \in \cS$, we have that
    \begin{align}
       E_{1}^\cM\!\left( \cA(\rho) \Vert \cA(\sigma) \right)  \leq \frac{e^\varepsilon -1 + 2\delta}{e^\varepsilon +1}.
    \end{align}
\end{proposition}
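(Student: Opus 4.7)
The plan is to unfold the $\gamma=1$ case of the measured hockey-stick divergence via \cref{def:HS_Measured},
\begin{equation}
E_1^\cM(\cA(\rho)\Vert\cA(\sigma)) = \sup_{M\in\cM_2}\Tr[M(\cA(\rho)-\cA(\sigma))],
\end{equation}
and to reduce matters to bounding $\Tr[M(\cA(\rho)-\cA(\sigma))]$ uniformly over $M\in\cM_2$. Fix any such $M$ and set $a\coloneqq\Tr[M\cA(\rho)]$ and $b\coloneqq\Tr[M\cA(\sigma)]$, so the task becomes to show $a-b\le (e^\varepsilon-1+2\delta)/(e^\varepsilon+1)$.

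The key observation is that the set $\cM_2$ defined in~\eqref{eq:M_2_set} is symmetric under $M\mapsto I-M$, so both $M$ and $I-M$ are admissible test operators in the QLDP condition~\eqref{eq:QLDP-def-M}. Applying that condition to the pair $(\rho,\sigma)$ with operator $M$ gives $a\le e^\varepsilon b+\delta$, which rearranges to $a-b\le(e^\varepsilon-1)\,b+\delta$. Applying it instead to the swapped pair $(\sigma,\rho)$ with operator $I-M$ gives $1-b\le e^\varepsilon(1-a)+\delta$, which rearranges to $a-b\le(e^\varepsilon-1)(1-a)+\delta$.

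Adding these two inequalities is the crucial step, because the right-hand sides telescope via $b+(1-a)=1-(a-b)$, producing
\begin{equation}
2(a-b) \le (e^\varepsilon-1)\bigl(1-(a-b)\bigr) + 2\delta,
\end{equation}
a single linear inequality in $a-b$ that rearranges to $a-b\le(e^\varepsilon-1+2\delta)/(e^\varepsilon+1)$. Taking the supremum over $M\in\cM_2$ then delivers the proposition. I do not expect a real obstacle here: the only non-routine point is recognizing that one must invoke the QLDP guarantee both for $M$ and for $I-M$ in order to obtain two-sided control on $(a,b)$, and this symmetry is exactly what the construction of $\cM_2$ makes available.
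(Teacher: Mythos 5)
Your proof is correct, and at its core it is the same argument the paper makes, just unwound at the level of individual measurement operators instead of divergences. The paper routes the claim through \cref{lem:E_1_gamma}, namely $(\gamma+1)E_1^\cM(\rho\Vert\sigma)\le E_\gamma^\cM(\rho\Vert\sigma)+E_\gamma^\cM(\sigma\Vert\rho)+\gamma-1$, combined with \cref{prop:QLDP_Equi_M} at $\gamma=e^\varepsilon$; your two inequalities $a-b\le(e^\varepsilon-1)b+\delta$ and $a-b\le(e^\varepsilon-1)(1-a)+\delta$ are exactly the pointwise content of the two terms $\operatorname{Tr}[M(\rho-\gamma\sigma)]$ and $\operatorname{Tr}[M(\gamma\rho-\sigma)]$ in that lemma's telescoping decomposition $(\gamma+1)(\rho-\sigma)=(\rho-\gamma\sigma)+(\gamma\rho-\sigma)$, and your use of $I-M$ plays the role of the reversal identity $E_\gamma^\cM(\rho\Vert\sigma)=\gamma E_{1/\gamma}^\cM(\sigma\Vert\rho)$ from \cref{lem:hockey_stick_gamma_leq}. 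Your version is slightly more self-contained (it never needs the general lemma and correctly observes that $\cM_2$ is closed under $M\mapsto I-M$ and that \cref{def:restricted_QLDP} quantifies over both orderings of $(\rho,\sigma)$), while the paper's factorization buys a reusable inequality between $E_1^\cM$ and $E_\gamma^\cM$ that holds for arbitrary states independently of any privacy assumption; either route yields the stated bound.
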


\begin{IEEEproof}
   The proof follows because $  \left(\gamma +1\right)   E_1^\cM(\rho \Vert \sigma) \leq E_\gamma^\cM(\rho \Vert \sigma) + E_\gamma^\cM(\sigma \Vert \rho) + \gamma -1$ (shown in~\cref{lem:E_1_gamma}),
 substituting $\gamma= e^\varepsilon$, and applying~\cref{prop:QLDP_Equi_M}.
\end{IEEEproof}

\section{Measured Channel Divergences}
In this section, we extend the study of hockey-stick divergence to quantum channels.

\begin{definition}[Hockey-Stick Divergence for Channels ]
  Let $\cP_{A\to B}$, $\cQ_{A \to B}$ be channels, and let $\cM$ be the allowed set of measurements on systems~$R$ and $B$. Then for $\gamma \geq 0$, the channel divergence for the measured hockey-stick divergence is defined as    \begin{equation}\label{eq:def_channel_HS}
    E^\cM_\gamma\!\left(\cP \Vert \cQ \right) \coloneqq \sup_{\rho_{RA}} E_\gamma^\cM\!\left( \cP(\rho_{RA}) \Vert \cQ({\rho_{RA})} \right),
\end{equation}
where the optimization is over all states with unbounded reference system $R$.
\end{definition}
Using the Schmidt decomposition theorem and the convexity of measured hockey-stick divergence in~\cref{prop:properties_m_HS}, the optimization in~\eqref{eq:def_channel_HS} can be restricted to pure states with the $R$ system isomorphic to the $A$ system.

Next, we show how the channel divergence of the measured hockey-stick divergence appears in  applications related to ensuring privacy for quantum channels and channel discrimination of two quantum channels.

\textbf{Privacy for Channels:}
Previously, \textit{privacy of states} has been considered under the QDP and QPP frameworks~\cite{QDP_computation17,hirche2023quantum,nuradha_QPP}. However, there are also scenarios where one needs to privatize the channel used in the quantum operation. These will find use in quantum reading \cite{Pirandola2011}, where encoding different values correspond to applying different channels. In this setting, we require the channels to remain indistinguishable to the adversary. To accomplish that, we define a privacy framework that ensures \textit{privacy for channels}. This is accomplished via superchannels, which are linear maps that transform one channel to another channel~\cite{chiribella2008transforming,gour2019comparison}.

\begin{definition}[$(\cM, \cC)$-QLDP for Channels]\label{def:QLDP_channels}
    Fix $\varepsilon \geq 0$ and $\delta \in [0,1]$. 
    Let $\Theta$ be a superchannel,  and let $\cC$ be a subset of channels from $\cL(\cH_A) \to \cL(\cH_B)$. The algorithm/supermechanism~$\Theta$ is $ (\varepsilon, \delta)$-local differentially private under $(\mathcal{M},\mathcal{C})$ if for all $\cP_{A \to B}$,$\cQ_{A \to B} \in \cC$  the following condition is satisfied for all  $\rho_{RA}  \in \mathcal{D}(\mathcal{H}_R \otimes \cH_A)$ and for all $ M \in  \mathcal{M}_2 $:
\begin{equation} \Tr\!\left[M \left(\Theta(\cP)(\rho_{RA} )\right)\right] \leq e^\varepsilon \Tr\!\left[M\left( \Theta(\cQ)(\rho_{RA} )\right)\right] + \delta ,
\label{eq:QLDP-def-M_Channel}
\end{equation}
where $\cM_2$ is defined in~\eqref{eq:M_2_set} with the chosen $\cM$.
\end{definition}
Similar to \cref{prop:QLDP_Equi_M}, we next show that there is an equivalent representation for~\cref{def:QLDP_channels} using the channel hockey-stick divergence. 
\begin{proposition}
    [Equivalent Representation]\label{prop:QLDP_Equi_M_Channels}
 $ (\varepsilon, \delta)$-local differential privacy under $(\mathcal{M}, \cC)$ of a supermechanism $\Theta$  (as defined in~\cref{def:QLDP_channels}) is equivalent to 
    \begin{equation}
        \sup_{\cP, \cQ \in \cC} E_{e^\varepsilon}^{\mathcal{M}}\!\left( \Theta (\cP) \Vert \Theta(\cQ) \right) \leq \delta.
    \end{equation}
\end{proposition}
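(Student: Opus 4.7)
The plan is to mirror the proof of Proposition~3 (the analogous statement for state mechanisms), lifting it from states to channels by first fixing an input state $\rho_{RA}$ and then taking the supremum over all such inputs. Since every ingredient in Definition~4 depends on $\rho_{RA}$ only through the output states $\Theta(\cP)(\rho_{RA})$ and $\Theta(\cQ)(\rho_{RA})$, the argument reduces to algebra on divergences between pairs of states, with the extra supremum over $\rho_{RA}$ matching the one in the channel divergence in~\eqref{eq:def_channel_HS}.

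Concretely, I would proceed as follows. First, for fixed $\cP, \cQ \in \cC$ and fixed $\rho_{RA}$, rewrite the inequality in~\eqref{eq:QLDP-def-M_Channel} as
\begin{equation*}
    \operatorname{Tr}\!\left[M\bigl(\Theta(\cP)(\rho_{RA}) - e^\varepsilon\, \Theta(\cQ)(\rho_{RA})\bigr)\right] \leq \delta
\end{equation*}
and observe that requiring this for every $M \in \cM_2$ is the same as requiring the supremum of the left-hand side over $M \in \cM_2$ to be at most $\delta$. Second, since $\gamma := e^\varepsilon \geq 1$ forces $(1-\gamma)_+ = 0$, the latter supremum is exactly $E_{e^\varepsilon}^{\cM}\!\left(\Theta(\cP)(\rho_{RA}) \,\Vert\, \Theta(\cQ)(\rho_{RA})\right)$ by \cref{def:HS_Measured}. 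Third, asking that this hold for every $\rho_{RA} \in \cD(\cH_R \otimes \cH_A)$ is, by definition~\eqref{eq:def_channel_HS}, the same as the bound $E_{e^\varepsilon}^{\cM}\!\left(\Theta(\cP) \,\Vert\, \Theta(\cQ)\right) \leq \delta$. Fourth, asking that this hold for every $\cP, \cQ \in \cC$ produces precisely the stated supremum condition. Each of these equivalences goes through in both directions simply by unpacking or repacking a supremum, so the ``if and only if'' follows.

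There is no real obstacle: the statement is a direct rewriting of Definition~\ref{def:QLDP_channels} through the layered suprema (over $M$, then $\rho_{RA}$, then $\cP, \cQ$) that build up the channel measured hockey-stick divergence. The only mild point to be careful about is that Definition~\ref{def:HS_Measured} subtracts $(1-\gamma)_+$, which is why restricting to $\gamma = e^\varepsilon \geq 1$ is important; otherwise one would have to carry that additional term through the rearrangement. This parallels the reasoning already given for \cref{prop:QLDP_Equi_M}, so in the writeup I would simply indicate that the argument proceeds as in that proof, with the extra supremum over $\rho_{RA}$ absorbed into the definition of the channel divergence.
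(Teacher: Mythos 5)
Your argument is correct and follows exactly the paper's route: rearrange the defining inequality, note that $(1-e^\varepsilon)_+=0$ so the supremum over $M\in\cM_2$ recovers $E_{e^\varepsilon}^{\cM}$ per \cref{def:HS_Measured}, and then absorb the suprema over $\rho_{RA}$ and over $\cP,\cQ\in\cC$ into the channel divergence~\eqref{eq:def_channel_HS} and the stated condition. You simply spell out the layered suprema that the paper's one-line proof leaves implicit.
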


\begin{IEEEproof}
    This follows by rearranging terms in~\cref{def:QLDP_channels},  recalling \eqref{eq:def_channel_HS}, and applying  \cref{def:HS_Measured} for $\gamma =e^\varepsilon \geq 1$.
\end{IEEEproof}

\medskip 

Notably, \cref{prop:QLDP_Equi_M_Channels} shows that the channel hockey-stick divergence is the fundamental quantity in the task of guaranteeing privacy of quantum channels, thus providing an operational interpretation for it.

\subsubsection{Special Measurements and Channels}
Here, we focus on the channel divergence by considering special classes of measurements, including PPT measurements and special classes of channels, namely, jointly-covariant channels.

    For the class of all measurements, the channel divergence is SDP computable. For depolarizing channels, we derive analytical expressions for it. See \cref{App:channel_div_all_meas} for a detailed study. 
We next show that the channel divergence under PPT measurements is also SDP computable. 
\begin{proposition}[Channel Divergence under PPT Measurements] \label{prop:channel_div_PPT}
 Let $\cP_{A \to B}$ and $\cQ_{A \to B}$ be two quantum channels, and let $\gamma \geq 1$. Then 
$E_\gamma^{\operatorname{PPT}}(\cP \Vert \cQ)$  is equal to
    \begin{align}
    &\sup_{ \rho_{R},\Omega_{RB} \geq 0} \left\{ \begin{array}
[c]{c}\Tr\!\left[ \Omega_{RB} (\Gamma_{RB}^\cP - \gamma \Gamma_{RB}^\cQ )\right] :\\\begin{aligned}
           & \Tr[ \rho_R]=1, \ \Omega_{RB} \leq \rho_R \otimes I_B \\ 
           & T_B[\Omega_{RB}] \geq 0, \ T_B[\Omega_{RB}] \leq \rho_R \otimes I_B
        \end{aligned}
        \end{array}
        \right\} =\\
        &\inf_{ \substack{\mu \geq 0, Z_{RB} \geq 0 \\ L_{RB}, Y_{RB} \geq 0} }\left\{ \mu : \begin{aligned}
            &\Gamma_{RB}^\cP - \gamma \Gamma_{RB}^\cQ \leq Z_{RB} +T_B(Y_{RB}-L_{RB}) \\ & \mu I_{RB} \geq \Tr_B[ Z_{RB}] + \Tr_B[Y_{RB}] 
         \end{aligned}\right\},
    \end{align}
    where $\Gamma_{RB}^\cN$ is the Choi operator of the channel $\cN_{A \to B}$. 
\end{proposition}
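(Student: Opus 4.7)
The plan is to derive the primal from the definition of the channel divergence and then obtain the dual by standard semidefinite programming duality. Since $\gamma \geq 1$ gives $(1-\gamma)_+ = 0$, combining~\cref{def:HS_Measured} with~\cref{prop:PPT_Measured_HS_SDP} and the Schmidt-decomposition/convexity remark after~\eqref{eq:def_channel_HS} (which restricts the input to pure states with $\cH_R \cong \cH_A$) lets me rewrite
\begin{equation}
E_\gamma^{\operatorname{PPT}}(\cP \Vert \cQ) = \sup_{\psi_{RA},\, M_{RB}} \Tr\!\left[ M_{RB}\!\left((\mathrm{id}_R \otimes \cP)(\psi_{RA}) - \gamma(\mathrm{id}_R \otimes \cQ)(\psi_{RA})\right)\right],
\end{equation}
where $M_{RB}$ ranges over PSD operators satisfying $M_{RB} \leq I$ and $0 \leq T_B(M_{RB}) \leq I$.

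Next I would parametrize $|\psi\rangle_{RA} = (X_R \otimes I_A)|\Gamma\rangle_{RA}$ with $X_R \geq 0$ and $\Tr[X_R^2] = 1$, where $|\Gamma\rangle_{RA} = \sum_i |i\rangle_R |i\rangle_A$, so that $(\mathrm{id}_R \otimes \cN)(\psi_{RA}) = (X_R \otimes I_B)\Gamma_{RB}^{\cN}(X_R \otimes I_B)$ for any channel $\cN$. Setting $\rho_R \coloneqq X_R^2$ and $\Omega_{RB} \coloneqq (X_R \otimes I_B) M_{RB} (X_R \otimes I_B)$ turns the objective into $\Tr[\Omega_{RB}(\Gamma_{RB}^\cP - \gamma \Gamma_{RB}^\cQ)]$. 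Because $T_B$ commutes with conjugation by operators that act only on $R$, the four constraints on $M_{RB}$ conjugate cleanly to $0 \leq \Omega_{RB} \leq \rho_R \otimes I_B$ and $0 \leq T_B(\Omega_{RB}) \leq \rho_R \otimes I_B$, exactly yielding the stated primal. The reverse direction, needed for the opposite inequality, builds a compatible $M_{RB}$ from any primal-feasible $(\rho_R, \Omega_{RB})$ via $X_R = \rho_R^{1/2}$ and the Moore--Penrose pseudo-inverse on $\mathrm{supp}(\rho_R)$ (with an arbitrary extension off-support), and a small perturbation $\rho_R \mapsto \rho_R + \varepsilon I_R/d_R$ followed by $\varepsilon \to 0$ handles the rank-deficient case by continuity.

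Finally I would apply Lagrangian duality to the primal with multipliers $\mu \in \mathbb{R}$ for $\Tr[\rho_R] = 1$, $Z_{RB} \geq 0$ for $\Omega_{RB} \leq \rho_R \otimes I_B$, $L_{RB} \geq 0$ for $T_B(\Omega_{RB}) \geq 0$, and $Y_{RB} \geq 0$ for $T_B(\Omega_{RB}) \leq \rho_R \otimes I_B$. Using $\Tr[L\cdot T_B(\Omega)] = \Tr[T_B(L)\,\Omega]$ and $\Tr[Z(\rho_R \otimes I_B)] = \Tr[\rho_R \Tr_B[Z]]$, I collect the terms linear in $\Omega_{RB}$ and in $\rho_R$ and demand finiteness of the inner supremum; this forces $\Gamma_{RB}^\cP - \gamma\Gamma_{RB}^\cQ \leq Z_{RB} + T_B(Y_{RB} - L_{RB})$ and $\mu I_R \geq \Tr_B[Z_{RB}] + \Tr_B[Y_{RB}]$, with objective value $\mu$, matching the stated dual. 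Strong duality follows from Slater's condition, since e.g.\ $\rho_R = I_R/d_R$ and $\Omega_{RB} = I_{RB}/(2 d_R d_B)$ is strictly feasible in the primal. The main obstacle is the reverse direction of the change of variables when $\rho_R$ is rank-deficient: one must carefully verify that every primal-feasible pair is actually attained (or approached in the limit) by a genuine pure-state/PPT-measurement pair rather than merely being dominated by one; the Choi substitution, the constraint translation under conjugation, and the Lagrangian bookkeeping are otherwise entirely standard.
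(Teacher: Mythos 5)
Your proposal is correct and follows essentially the same route as the paper: reduce to pure inputs via convexity, parametrize $|\psi\rangle_{RA}=(X_R\otimes I_A)|\Gamma\rangle_{RA}$, substitute $\rho_R=X_R^\dag X_R$ and $\Omega_{RB}=X_R^\dag M_{RB}X_R$ to get the primal, then apply Lagrangian duality with multipliers $\mu, Z_{RB}, L_{RB}, Y_{RB}$ and invoke Slater's condition. The only cosmetic difference is that the paper handles rank-deficient $\rho_R$ by restricting to the dense set of pure states with $\phi_R>0$ rather than your pseudo-inverse/perturbation argument, and it exhibits a strictly feasible dual point rather than a strictly feasible primal one.
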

\begin{IEEEproof}
See~\cref{App:channel_div_PPT_proof}.
\end{IEEEproof}
\medskip


Let $G$ be a finite group, and for every $g \in G$, let $g \to U_A(g)$ and $g \to V_B(G)$ be unitary representations acting on the input and output spaces of the channel, respectively. A quantum channel $\cN_{A \to B}$ is covariant with respect to these representations if the following equality holds for every input density operator $\rho_A \in \cL(\cH_A)$ and group element $g \in G$: 
    $\cN_{A \to B}\!\left(U_A(g) \rho_A U_A^\dag(g)\right) =V_B(g) \cN_{A \to B}(\rho_A)V_B^\dag(g)$.

For covariant quantum channels, the input states  can be restricted  to a specific form in the optimization in~\eqref{eq:def_channel_HS} and can further be characterized by the Choi states of the channels if the representation $\{ U_A(g)\}_{g \in G}$ is irreducible.
\begin{proposition} \label{prop:PPT_measured_joint_covariance}
    Let $G$ be a finite group with $\{ U_A(g)\}_{g\in G}$ and $\{V_B(g)\}_{g \in G}$ unitary representations. Let $\cP_{A \to B}$ and $\cQ_{A \to B}$ be quantum channels that are covariant with respect to the group $G$. Then $E_\gamma^{\operatorname{PPT}}(\cP \Vert \cQ)$ is equal to
    \begin{align} \label{eq:reduced_form_chan_PPT}
\sup_{\substack{\psi_{RA} :\\ \psi_A = \cT_G (\psi_A)}}\left\{ E_\gamma^{\operatorname{PPT}}\!\left(\cP_{A \to B}(\psi_{RA}) \Vert \cQ_{A \to B} (\psi_{RA})\right) 
          \right\},
    \end{align}
where system $R$ is isomorphic to system $A$, $\psi_A \coloneqq \Tr_{R}[\psi_{RA}]$, and 
   $ \cT_G (\omega_A) \coloneqq \frac{1} {|G|} \sum_{g \in G} U_A(g) \omega_A U_A^\dag(g).$

Furthermore, if the representation $\{ U_A(g)\}_{g \in G}$ is irreducible, then an optimal state in~\eqref{eq:reduced_form_chan_PPT} is the maximally entangled state $\Phi_{RA}$, so that
$E_\gamma^{\operatorname{PPT}}(\cP \Vert \cQ)=  E_\gamma^{\operatorname{PPT}}\!\left( \cP_{A \to B}( \Phi_{RA}) \Vert \cQ_{A \to B}( \Phi_{RA}) \right)$.
\end{proposition}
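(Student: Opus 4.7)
The plan is to prove the equality~\eqref{eq:reduced_form_chan_PPT} by a symmetrization argument that exploits the joint covariance of $\cP$ and $\cQ$, and then to identify the optimizer in the irreducible case using Schur's lemma together with the purification equivalence on the reference system. Throughout, I rely on data processing from~\cref{prop:properties_m_HS} under $\cM_{\operatorname{PPT}}$-compatible channels, so the technical heart of the proof is checking $\cM_{\operatorname{PPT}}$-compatibility for each auxiliary map that I introduce.

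For the reduction to $\cT_G$-invariant marginals, fix an arbitrary input $\rho_{RA}$ and introduce a classical flag register $R'$ of dimension $|G|$. I would form the averaged state
\[ \bar{\rho}_{R'RA} \coloneqq \frac{1}{|G|}\sum_{g\in G} |g\rangle\!\langle g|_{R'} \otimes (I_R \otimes U_A(g))\,\rho_{RA}\,(I_R \otimes U_A^\dag(g)), \]
whose marginal on $A$ is $\cT_G(\rho_A)$ and hence lies in the fixed-point set of $\cT_G$. Joint covariance then yields
\[ \cP(\bar{\rho}_{R'RA}) = \frac{1}{|G|}\sum_{g} |g\rangle\!\langle g|_{R'} \otimes (I_R \otimes V_B(g))\,\cP(\rho_{RA})\,(I_R \otimes V_B^\dag(g)), \]
together with the analogous identity for $\cQ$. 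Define the recovery channel $\cD_{R'B\to B}$ that measures $R'$ in the computational basis and, conditioned on outcome $g$, applies $V_B^\dag(g)$ to $B$. A direct computation shows $(\mathrm{id}_R \otimes \cD_{R'B\to B})(\cP(\bar{\rho}_{R'RA})) = \cP(\rho_{RA})$, and likewise for $\cQ$. Its adjoint sends $M_{RB}$ to the block-diagonal operator $\sum_g |g\rangle\!\langle g|_{R'} \otimes (I_R \otimes V_B(g))\,M_{RB}\,(I_R \otimes V_B^\dag(g))$, and since partial transpose on $B$ commutes with operations on $R$ or $R'$ and is intertwined under conjugation by $I_R \otimes V_B(g)$ with a unitary conjugation by $I_R \otimes \overline{V_B(g)}$, each block preserves both the $[0,I]$-bound and the PPT condition. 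Thus $\mathrm{id}_R \otimes \cD_{R'B\to B}$ is $\cM_{\operatorname{PPT}}$-compatible, and data processing gives $E_\gamma^{\operatorname{PPT}}(\cP(\bar{\rho})\Vert\cQ(\bar{\rho})) \ge E_\gamma^{\operatorname{PPT}}(\cP(\rho)\Vert\cQ(\rho))$. Purifying $\bar{\rho}_{R'RA}$ on a further auxiliary reference $R''$ produces a pure $\psi_{R''R'RA}$ with $\psi_A = \cT_G(\psi_A)$; a second application of data processing (the adjoint of $\Tr_{R''}$ is tensoring with $I_{R''}$, which is manifestly PPT-preserving on $R''RB$) lifts the inequality to $\psi$, and absorbing $R''R'R$ into a single unbounded reference yields~\eqref{eq:reduced_form_chan_PPT}.

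For the second claim, suppose $\{U_A(g)\}_{g\in G}$ is irreducible. The twirl $\cT_G$ is the orthogonal projection onto the commutant of the representation, which by Schur's lemma equals $\mathbb{C}\,I_A$. Hence any pure $\psi_{RA}$ satisfying $\psi_A = \cT_G(\psi_A)$ must have $\psi_A = I_A/d_A$, so $\psi_{RA}$ is a purification of the maximally mixed state and is therefore related to $\Phi_{R'A}$ (with $R'$ of dimension $d_A$) by an isometry $W$ acting solely on the reference system, $|\psi\rangle_{RA} = (W\otimes I_A)|\Phi\rangle_{R'A}$. The isometric map $W(\cdot)W^\dag$ on the reference is $\cM_{\operatorname{PPT}}$-compatible, because its adjoint $W^\dag(\cdot)W$ tensored with $\mathrm{id}_B$ preserves both $0 \le \cdot \le I$ and $0 \le T_B(\cdot) \le I$ (reference-side isometric conjugation commutes with $T_B$ and satisfies $W^\dag W = I$). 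Data processing therefore yields $E_\gamma^{\operatorname{PPT}}(\cP(\psi)\Vert\cQ(\psi)) \le E_\gamma^{\operatorname{PPT}}(\cP(\Phi_{R'A})\Vert\cQ(\Phi_{R'A}))$, while the reverse inequality is immediate since $\Phi_{R'A}$ is itself feasible in the supremum of~\eqref{eq:reduced_form_chan_PPT}. Combining this with the first part identifies $\Phi_{RA}$ as an optimizer.

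The main obstacle I anticipate is the bookkeeping required to verify $\cM_{\operatorname{PPT}}$-compatibility of the recovery channel $\cD_{R'B\to B}$: because partial transpose on $B$ interacts nontrivially with unitary conjugation on $B$, one must carefully track that the induced action on $T_B(M_{RB})$ by $V_B(g)$ is itself unitary conjugation (by $\overline{V_B(g)}$), so that each block-diagonal summand of $\cD^\dag$ remains sandwiched between $0$ and $I$. Once this compatibility is secured, the symmetrization and purification steps assemble routinely from the properties in~\cref{prop:properties_m_HS}.
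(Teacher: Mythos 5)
Your proof is correct and follows essentially the same route as the paper's: a classically flagged group average, joint covariance to convert $U_A(g)$ into $V_B(g)$ on the output, a PPT-preserving controlled-$V_B^\dagger(g)$ recovery (the paper merely splits your single recovery channel into a dephasing on the flag, a controlled unitary, and a partial trace), and Schur's lemma plus reference-side isometric invariance for the irreducible case. The only bookkeeping you elide is compressing the enlarged reference $R''R'R$ back down to a system isomorphic to $A$ so as to land exactly in the form of~\eqref{eq:reduced_form_chan_PPT}, which follows from the same reference-side isometric invariance you already invoke.
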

\begin{IEEEproof}
See~\cref{App:PPT_measured_joint_cov_Proof}. The proof relies on the data-processing of $E_\gamma^{\operatorname{PPT}}$ under PPT-preserving channels and similar techniques used in the proof of~\cite[Proposition~7.84]{KW20}.
\end{IEEEproof}

Next, by using~\cref{prop:PPT_measured_joint_covariance} and~\cref{prop:Measured_HS_Isotro}, we derive an analytical expression for $E_\gamma^{\operatorname{PPT}}$ for two depolarizing channels.

\begin{proposition}[Depolarizing Channels] \label{prop:channel_div_PPT_depol}
    Let $p,q \in [0,1]$ and $\gamma \geq 1$. Then 
$E_\gamma^{\operatorname{PPT}}\!\left( \cA_{\operatorname{Dep}}^q \middle \Vert \cA_{\operatorname{Dep}}^p \right)=  \max\left\{0,  1-q -\gamma (1-p) + \frac{(q-\gamma p)}{d}, \frac{d-1}{d} \left(q -\gamma p \right) \right\}$, 
    where $\cA_{\operatorname{Dep}}^p$ is a depolarizing channel with parameter $p$ 
    and $d$ is the dimension of the input space of the channel $\cA_{\operatorname{Dep}}^p$.  
\end{proposition}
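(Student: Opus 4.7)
The plan is to reduce this channel divergence computation to the corresponding computation between the Choi states of the two depolarizing channels, which are isotropic states, and then invoke the analytical expression for $E_\gamma^{\operatorname{PPT}}$ on isotropic states referenced in Remark~\ref{Rem:Isotropic_E_g} (\cref{prop:Measured_HS_Isotro}).

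First, I would recall that the depolarizing channel $\cA_{\operatorname{Dep}}^p(\rho) = (1-p)\rho + p\, I/d$ is covariant with respect to any unitary $U$: $\cA_{\operatorname{Dep}}^p(U \rho U^\dag) = U \cA_{\operatorname{Dep}}^p(\rho) U^\dag$. In particular, it is covariant under the Heisenberg--Weyl representation (or the full unitary group), whose representation on $\cH_A$ is irreducible. Since both $\cA_{\operatorname{Dep}}^q$ and $\cA_{\operatorname{Dep}}^p$ are jointly covariant with respect to this same irreducible representation, \cref{prop:PPT_measured_joint_covariance} applies and yields
\begin{equation}
E_\gamma^{\operatorname{PPT}}\!\left( \cA_{\operatorname{Dep}}^q \,\middle\Vert\, \cA_{\operatorname{Dep}}^p \right) = E_\gamma^{\operatorname{PPT}}\!\left( \cA_{\operatorname{Dep}}^q(\Phi_{RA}) \,\middle\Vert\, \cA_{\operatorname{Dep}}^p(\Phi_{RA}) \right),
\end{equation}
where $\Phi_{RA}$ denotes the maximally entangled state.

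Next, I would identify the Choi states $(\operatorname{id}_R \otimes \cA_{\operatorname{Dep}}^p)(\Phi_{RA}) = (1-p)\Phi_{RA} + p\, I_{RA}/d^2$ as isotropic states with the appropriate fidelity parameter, so that the channel divergence equals a measured hockey-stick divergence between two isotropic states whose parameters are explicit affine functions of $p$ and $q$. At this point, I would invoke \cref{prop:Measured_HS_Isotro} to substitute the analytical expression for the measured hockey-stick divergence between isotropic states under any measurement class in $\{\cM_{\operatorname{LO}^\star}, \cM_{\operatorname{1W-LOCC}}, \cM_{\operatorname{LOCC}}, \cM_{\operatorname{PPT}}\}$.

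The remaining work is algebraic: match the isotropic-state parameters to the depolarizing parameters $p,q$ and simplify the resulting $\max\{\cdot,\cdot,\cdot\}$ expression to the claimed form
\begin{equation}
\max\!\left\{0,\; 1-q-\gamma(1-p) + \tfrac{q-\gamma p}{d},\; \tfrac{d-1}{d}(q-\gamma p)\right\}.
\end{equation}
The main obstacle, and the only genuinely non-formal step, is the bookkeeping for this simplification: the isotropic-state formula from \cref{prop:Measured_HS_Isotro} is naturally stated in terms of a fidelity-type parameter and dimension $d^2$ on the bipartite system, whereas the depolarizing-channel parameters live on the $d$-dimensional input, so care is needed to verify the three branches of the maximum correspond precisely. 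I expect no conceptual difficulty beyond this reparametrization, since the structural content of the proof is entirely carried by covariance (\cref{prop:PPT_measured_joint_covariance}) and the previously-established isotropic-state evaluation.
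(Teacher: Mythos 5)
Your proposal matches the paper's proof: both reduce the channel divergence to the maximally entangled input via the joint covariance result (Proposition~\ref{prop:PPT_measured_joint_covariance}, using irreducibility of the unitary representation), identify the Choi states $(1-p)\Phi + p\,I/d^2$ as isotropic states $\zeta^{\eta}$ with $\eta = 1-p+p/d^2$, and then substitute into the isotropic-state formula of Proposition~\ref{prop:Measured_HS_Isotro}. The reparametrization you flag works out exactly as you anticipate, so the argument is complete and essentially the same as the paper's.
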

\begin{IEEEproof}
 The proof follows by applying~\cref{prop:PPT_measured_joint_covariance} together with the analytical expressions for $E_\gamma^{\operatorname{PPT}}$ of isotropic states (\cref{Rem:Isotropic_E_g}). See~\cref{App:channel_div_PPT_depol_proof}.
\end{IEEEproof}
\section{Concluding Remarks and Future Work}

In this paper, we provided a comprehensive study of measured hockey-stick divergences for states and channels. We showed that they satisfy several desirable properties, 
including data processing, triangular inequality, and convexity. We also showed that, for some classes of measurements and states, they are SDP computable or analytically characterized. We discussed their use in providing privacy for states and channels, with an emphasis on quantum pufferfish privacy.

Future work includes finding quantum algorithms to estimate measured hockey-stick divergences. 
Another interesting future research direction is to obtain SDP-computable bounds for locally measured hockey-stick divergences, by using $k$-extendible measurements~\cite{singh2024extendible}. Such a characterization will find use in privacy auditing 
when LO, $\operatorname{LO}^\star$, or 1W-LOCC operations constitute the set of allowed measurements.

\medskip
\textbf{Acknowledgment}: We acknowledge support from the National Science Foundation
under Grant No.~2329662.
\bibliographystyle{ieeetr}
\bibliography{reference}

\onecolumn
 \appendix

 \subsection{Other Properties of Measured Hockey-Stick Divergence}

 \begin{lemma}\label{lem:hockey_stick_gamma_leq}
    Let $\gamma \geq 0$, and let $\rho$ and $\sigma$ be states. 
    The following equality holds: \begin{equation}
        E_\gamma^\mathcal{M}(\rho \Vert \sigma) = \gamma E_{\frac{1}{\gamma}}^\mathcal{M}(\sigma \Vert \rho),
    \end{equation}
where $E_\gamma^\mathcal{M}(\cdot \Vert \cdot)$ is defined in~\eqref{eq:hockey_stick_alter}.
\end{lemma}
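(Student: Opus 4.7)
The plan is to exploit the symmetry that $\mathcal{M}_2 = \{M : M, I-M \in \mathcal{M}\}$ is closed under the complementation $M \mapsto I - M$, which is exactly what is needed to swap the roles of $\rho$ and $\sigma$ while flipping $\gamma \leftrightarrow 1/\gamma$. Assume $\gamma > 0$ (the case $\gamma = 0$ is degenerate). First I would work with the measurement-operator form of Definition~\ref{def:HS_Measured}; the argument for the POVM-based definition in \eqref{eq:hockey_stick_alter} is analogous and will be remarked on at the end.

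The central computation goes as follows. For any $M \in \mathcal{M}_2$, set $N \coloneqq I - M$, which is also in $\mathcal{M}_2$ by construction. Using $\operatorname{Tr}[\rho] = \operatorname{Tr}[\sigma] = 1$, one has
\begin{equation}
\operatorname{Tr}\!\left[M(\rho - \gamma\sigma)\right] \;=\; (1-\gamma) - \operatorname{Tr}\!\left[N(\rho - \gamma\sigma)\right] \;=\; (1-\gamma) + \gamma\,\operatorname{Tr}\!\left[N\!\left(\sigma - \tfrac{1}{\gamma}\rho\right)\right].
\end{equation}
Taking the supremum over $M \in \mathcal{M}_2$ on the left corresponds to taking the supremum over $N \in \mathcal{M}_2$ on the right, so
\begin{equation}
\sup_{M \in \mathcal{M}_2} \operatorname{Tr}\!\left[M(\rho - \gamma\sigma)\right] \;=\; (1-\gamma) + \gamma \sup_{N \in \mathcal{M}_2} \operatorname{Tr}\!\left[N\!\left(\sigma - \tfrac{1}{\gamma}\rho\right)\right].
\end{equation}
Subtracting $(1-\gamma)_+$ from both sides then yields
\begin{equation}
E_\gamma^{\mathcal{M}}(\rho \Vert \sigma) \;=\; (1-\gamma) - (1-\gamma)_+ + \gamma\!\left[\,\sup_{N \in \mathcal{M}_2} \operatorname{Tr}\!\left[N\!\left(\sigma - \tfrac{1}{\gamma}\rho\right)\right] - \left(1-\tfrac{1}{\gamma}\right)_+\,\right] + \gamma\!\left(1-\tfrac{1}{\gamma}\right)_+ .
\end{equation}
The bracketed expression is exactly $E_{1/\gamma}^{\mathcal{M}}(\sigma \Vert \rho)$, so it remains to verify the scalar identity $(1-\gamma) - (1-\gamma)_+ + \gamma(1 - 1/\gamma)_+ = 0$. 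A quick case split finishes this: if $\gamma \geq 1$, then $(1-\gamma)_+ = 0$ and $\gamma(1-1/\gamma)_+ = \gamma - 1$; if $0 < \gamma < 1$, then $(1-\gamma)_+ = 1-\gamma$ and $(1-1/\gamma)_+ = 0$. In both cases the sum vanishes, proving the claim.

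For the POVM-based version in \eqref{eq:hockey_stick_alter}, I would use the pointwise identity $\max\{0, b-a\} = \max\{0, a-b\} + (b-a)$ applied to $a = \operatorname{Tr}[M_x \rho]$ and $b = \gamma\operatorname{Tr}[M_x \sigma]$, summed over $x$; the telescoping $\sum_x (\gamma\operatorname{Tr}[M_x\sigma] - \operatorname{Tr}[M_x\rho]) = \gamma - 1$ then plays the role that the substitution $N = I-M$ played above, and the same scalar identity completes the argument. I do not anticipate a real obstacle; the only delicate point is the bookkeeping of the two $(1-\gamma)_+$ and $(1-1/\gamma)_+$ terms, which is handled by the two-case verification above.
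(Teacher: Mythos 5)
Your proof is correct and uses essentially the same idea as the paper: the closure of $\mathcal{M}_2$ under $M \mapsto I-M$ together with $\operatorname{Tr}[\rho-\gamma\sigma]=1-\gamma$, followed by bookkeeping of the $(1-\gamma)_+$ and $(1-1/\gamma)_+$ offsets. The only cosmetic difference is that you verify the scalar identity in both regimes of $\gamma$ at once, whereas the paper proves the case $0<\gamma\le 1$ and then substitutes $\gamma=1/\gamma'$ to cover $\gamma\ge 1$.
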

\begin{IEEEproof}
Let $0 \leq \gamma \leq 1$.
    By applying~\eqref{eq:hockey_stick_define}, consider that 
    \begin{align}
         E_\gamma^\mathcal{M}(\rho \Vert \sigma) &= \sup_{ M\in \mathcal{M}_2} \Tr\!\left[M(\rho -\gamma \sigma) \right] - (1-\gamma) \\
         &= \sup_{ M\in \mathcal{M}_2} \Tr\!\left[(I-M)(\rho -\gamma \sigma )\right] - (1-\gamma) \\
         &=  \sup_{ M\in \mathcal{M}_2} \Tr\!\left[M(\gamma \sigma -\rho) \right] \\
         &= \gamma \sup_{ M \in \mathcal{M}_2}  \Tr\!\left[M \left( \sigma -\frac{1}{\gamma}\rho \right) \right] \\
         &= \gamma E_{\frac{1}{\gamma}}^\mathcal{M}(\sigma \Vert \rho) \label{eq:relation_gamma_leq_1},
    \end{align}
    where the last inequality follows by applying~\eqref{eq:hockey_stick_define} with $1/\gamma \geq 1$ since $\gamma \leq 1$.

    For $\gamma' \geq 1$, by substituting $\gamma =1/\gamma'$ in~\eqref{eq:relation_gamma_leq_1}, we conclude the proof.
\end{IEEEproof}

 \begin{lemma} \label{lem:E_1_gamma}
    Let $\rho$ and $\sigma$ be states, and let $\gamma \geq 1$. The following inequality holds:
    \begin{equation}
     (\gamma +1)   E_1^\cM(\rho \Vert \sigma) \leq E_\gamma^\cM(\rho \Vert \sigma) + E_\gamma^\cM(\sigma \Vert \rho) + \gamma -1.
    \end{equation}
\end{lemma}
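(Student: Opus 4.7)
My plan is to choose a single clever pair of measurement operators that simultaneously witness both $E_\gamma^\cM$ terms on the right-hand side, so that the cross terms cancel and leave a clean bound involving $E_1^\cM$. Specifically, since we assume $\gamma \geq 1$, Definition~1 gives $(1-\gamma)_+ = 0$, so $E_\gamma^\cM(\rho\Vert\sigma) = \sup_{M \in \cM_2} \Tr[M(\rho - \gamma\sigma)]$ (and similarly with $\rho,\sigma$ swapped). Also, the set $\cM_2$ in~\eqref{eq:M_2_set} is closed under $M \mapsto I - M$, and a direct calculation using $\Tr[(I-M)(\rho-\sigma)] = -\Tr[M(\rho-\sigma)]$ shows that $E_1^\cM$ is symmetric in its arguments.

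Let $M^\star \in \cM_2$ be a near-optimizer for $E_1^\cM(\rho\Vert\sigma)$, so that $\Tr[M^\star(\rho-\sigma)] = E_1^\cM(\rho\Vert\sigma)$ up to an arbitrarily small slack. The idea is to use $M^\star$ itself as a feasible point for $E_\gamma^\cM(\rho\Vert\sigma)$ and to use $I - M^\star$ as a feasible point for $E_\gamma^\cM(\sigma\Vert\rho)$. For the first,
\[
E_\gamma^\cM(\rho\Vert\sigma) \geq \Tr[M^\star(\rho - \gamma\sigma)] = E_1^\cM(\rho\Vert\sigma) - (\gamma-1)\Tr[M^\star \sigma].
\]
For the second, using $\Tr[M^\star \rho] = \Tr[M^\star \sigma] + E_1^\cM(\rho\Vert\sigma)$ and expanding,
\[
E_\gamma^\cM(\sigma\Vert\rho) \geq \Tr[(I-M^\star)(\sigma - \gamma\rho)] = 1 - \gamma + (\gamma-1)\Tr[M^\star \sigma] + \gamma E_1^\cM(\rho\Vert\sigma).
\]

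Adding the two inequalities, the $(\gamma-1)\Tr[M^\star \sigma]$ contributions cancel exactly, leaving
\[
E_\gamma^\cM(\rho\Vert\sigma) + E_\gamma^\cM(\sigma\Vert\rho) \geq (\gamma+1) E_1^\cM(\rho\Vert\sigma) - (\gamma-1),
\]
which is the claimed inequality after rearrangement. I do not expect a genuine obstacle; the only subtle point is recognizing that pairing $M^\star$ with its complement $I-M^\star$ (which is legitimate precisely because $\cM_2$ is symmetric under complementation) is exactly what makes the undesired $\Tr[M^\star\sigma]$ terms drop out. A minor technical remark is that if the supremum in $E_1^\cM$ is not attained, one should instead take a sequence of near-optimizers and pass to the limit.
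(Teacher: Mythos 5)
Your proof is correct and is essentially the paper's argument viewed from the primal side: the paper splits $(\gamma+1)\Tr[M(\rho-\sigma)] = \Tr[M(\rho-\gamma\sigma)] + \Tr[M(\gamma\rho-\sigma)]$, applies subadditivity of the supremum, and invokes \cref{lem:hockey_stick_gamma_leq} to rewrite the second supremum as $E_\gamma^\cM(\sigma\Vert\rho)+\gamma-1$, while you obtain the same decomposition by evaluating the two right-hand terms at the feasible points $M^\star$ and $I-M^\star$. Your cancellation of the $(\gamma-1)\Tr[M^\star\sigma]$ terms is precisely the content of that last identity, so the two proofs coincide in substance.
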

\begin{proof}
   By applying~\eqref{eq:hockey_stick_define}, consider that 
    \begin{align}
        (\gamma+1) E_1^\cM(\rho \Vert \sigma) &=  (\gamma+1) \sup_{M\in \cM_2} \Tr\! \left[ M (\rho- \sigma)\right]\\ 
        &= \sup_{M \in \cM_2} \left\{\Tr[M(\rho-\gamma \sigma)] + \Tr[M(\gamma\rho- \sigma)]\right\} \\ 
        &\leq \sup_{M\in \cM_2} \left\{\Tr[M(\rho-\gamma \sigma)]\right\} + \sup_{M\in \cM_2} \left\{\Tr[M(\gamma\rho- \sigma)] \right\} \\ 
         &= E_\gamma^\cM (\rho \Vert \sigma) + \gamma \sup_{M \in \cM_2} \Tr\!\left[M\left(\rho- \frac{1}{\gamma}\sigma\right)\right] \\ 
         & = E_\gamma^{\cM}(\rho \Vert \sigma) + \gamma \left(E_{\frac{1}{\gamma}}^\cM(\rho \Vert \sigma) +1 -\frac{1}{\gamma} \right)\\
        &= E_\gamma^\cM(\rho \Vert\sigma) +E_\gamma^\cM(\sigma \Vert \rho) + \gamma -1,
    \end{align}
where the penultimate equality follows from~\eqref{eq:hockey_stick_define} with $1/\gamma \leq 1$ and the last equality follows by some algebraic manipulations, along with~\cref{lem:hockey_stick_gamma_leq}.
   
\end{proof}
 \subsection{Proof of \cref{prop:equi_hockey_stick}} \label{App:eqqui_hockey_stick_proof}
Consider an arbitrary POVM $\left\{M,I-M\right\} \in \mathcal{M}_2$, where $\mathcal{M}_2$ is defined in~\eqref{eq:M_2_set}. Then the following inequality holds for every $\gamma \ge 1$:
\begin{align}
\operatorname{Tr}\!\left[ M(\rho -\gamma \sigma )\right] 
    &\leq \max\left\{0, \operatorname{Tr}\!\left[ M(\rho -\gamma \sigma )\right] \right\} \\ 
    & \leq  \max\left\{0, \operatorname{Tr}\!\left[ M(\rho -\gamma \sigma )\right] \right\}+  \max\left\{0, \operatorname{Tr}\!\left[ (I-M)(\rho -\gamma \sigma )\right] \right\} \\ 
     &\leq \sup_{\{M_x \in \mathcal{M}\}_x}  \sum_{x} \max \left\{0,\operatorname{Tr}\!\left[ M_x(\rho -\gamma \sigma )\right] \right \},
\end{align}
where the last inequality follows because $\{M, I-M\}$ is a POVM in $\mathcal{M}$. By optimizing the left-hand side  over all $M \in \mathcal{M}_2$, we arrive at 
\begin{equation}
\label{eq:one_side_bound}
     E_\gamma^{\mathcal{M}}(\rho \Vert \sigma) \leq \widehat{E}_\gamma^{\mathcal{M}}(\rho \Vert \sigma).
\end{equation}

To prove the reverse inequality, we consider a POVM $\{M_x \in \mathcal{M}\}_{x\in \mathcal{X}}$ that obeys the coarse-graining condition; that is, $M_x + M_{x'} \in \mathcal{M}$ for all $x,x' \in \mathcal{X}$. Consider the following equality:
\begin{align}
   \sum_{x} \max \left\{0,\operatorname{Tr}\!\left[ M_x(\rho -\gamma \sigma )\right] \right \}  
   &= \sum_{x:\operatorname{Tr}\left[ M_x(\rho -\gamma \sigma )\right] \geq 0 } \operatorname{Tr}\!\left[ M_x(\rho -\gamma \sigma )\right] \\ 
   &=\operatorname{Tr}\!\left[ M_+(\rho -\gamma \sigma )\right] \\
   & \leq\sup_{M \in \mathcal{M}_2} \operatorname{Tr}\!\left[ M(\rho -\gamma \sigma )\right],
\end{align}
where the second equality follows by defining 
\begin{equation}
    M_+ \coloneqq \sum_{x:\operatorname{Tr}\left[ M_x(\rho -\gamma \sigma )\right] \geq 0 } M_x
\end{equation}
and the last inequality follows because $M_+, I- M_+\in \mathcal{M}$ with the coarse-graining assumption and 
\begin{equation}
    I-M_+ =  \sum_{x:\operatorname{Tr}\left[ M_x(\rho -\gamma \sigma )\right] < 0 } M_x.
\end{equation}
Finally, optimizing over all POVMs $\{M_x \in \mathcal{M}\}_x$, we obtain the inequality
\begin{equation}
 \widehat{E}_\gamma^{\mathcal{M}}(\rho \Vert \sigma) \leq      E_\gamma^{\mathcal{M}}(\rho \Vert \sigma),
\end{equation}
and together with~\eqref{eq:one_side_bound} we conclude the proof.

\subsection{Proof of \cref{prop:properties_m_HS}} 

\label{App:properties_proof}

    \underline{Data Processing:}
    Under the assumption that the channel $\mathcal{N}$ is $\mathcal{M}$-compatible, it follows that  $\cN^\dag (M) \in \cM$ for all $M \in \cM$. Since the adjoint of a trace-preserving map is unital, it also follows that $\cN^\dag(I)=I$, leading to $ \cN^\dag(I-M)= I- \cN^\dag (M) \in \cM$. With that and fixing $M\in \cM_2$ such that $M, I-M \in \cM$,
 consider that 
    \begin{align}
        \Tr\!\left[M \left(\cN(\rho) -\gamma \cN(\sigma)\right) \right] & = \Tr\!\left[ \cN^\dag(M) \left(\rho-\gamma \sigma\right) \right] \\ 
        & \leq \sup_{M'\in \cM_2} \Tr\!\left[ M' \left(\rho-\gamma \sigma\right) \right]
        \\& =E_\gamma^\cM(\rho \Vert \sigma).
    \end{align}
We arrive at the desired inequality by optimizing the left-hand side over all $M \in \cM$ such that $I-M \in \cM$.

\underline{Triangular Inequality:}
Let $\gamma_1,\gamma_2 \ge 1$. Then
\begin{align}  E_{\gamma_1 \gamma_2}^\cM (\rho \Vert \sigma)
   &= \sup_{M \in \cM_2}  \Tr\!\left[ M \left(\rho-\gamma_1 \gamma_2 \sigma\right) \right] \\ 
   &= \sup_{M\in \cM_2}  \Tr\!\left[ M \left(\rho- \gamma_1 \tau + \gamma_1 \tau -\gamma_1 \gamma_2 \sigma\right) \right]  \\ 
   & \leq \sup_{M \in \cM_2}  \Tr\!\left[ M \left(\rho-\gamma_1 \tau\right) \right] + \sup_{M\in \cM_2}  \Tr\!\left[ M \left(\gamma_1 \tau-\gamma_1 \gamma_2 \sigma\right) \right] \\
   &=E_{\gamma_1}^\cM(\rho \Vert \tau) + \gamma_1 E_{\gamma_2}^\cM(\tau \Vert \sigma).
\end{align}

\underline{Monotonicity:} 
Let $\gamma_1\ge \gamma_2\ge 1$. Then $\gamma_1 \sigma \geq \gamma_2 \sigma$ and $\rho -\gamma_1 \sigma \leq \rho-\gamma_2 \sigma$. 
Since $M \geq 0$, we have that for all $M \in \cM_2$
\begin{equation}
    \Tr\!\left[ M \left(\rho-\gamma_1 \sigma\right) \right] \leq \Tr\!\left[ M \left(\rho-\gamma_2 \sigma\right) \right].
\end{equation}
Supremizing over all $M \in \cM_2$, we obtain the desired inequality.

\underline{Convexity:} 
Let $\left\{p_x\right\}_{x\in \mathcal{X}}$ be a probability distribution, and let $\left\{\rho_{x}\right\}_{x\in \mathcal{X}}$ and $\left\{\sigma_{x}\right\}_{x\in \mathcal{X}}$ be sets of quantum states. Then for every $M\in \cM_2$ and $\gamma\ge 1$, the following equality holds: 
\begin{align}
      \Tr\!\left[ M \left(\rho-\gamma \sigma\right) \right] &= \sum_x p_x \Tr\!\left[ M \left(\rho_x-\gamma \sigma_x\right) \right] \\
      & \leq \sum_x p_x 
 \sup_{M \in \cM_2} \Tr\!\left[ M \left(\rho_x-\gamma \sigma_x\right) \right] \\
 &= \sum_x p_x E_\gamma^\cM(\rho_x \Vert \sigma_x).
\end{align}
By optimizing over $M \in \cM_2$ on the left-hand side, we conclude the proof.

\subsection{Proof of \cref{prop:PPT_Measured_HS_SDP}}

\label{app:PPT_Measured_HS_SDP}

Recall the standard form of primal and dual SDPs, as characterized by the Hermitian matrices $A$ and $B$ and a Hermiticity-preserving superoperator $\Phi$ \cite[Definition~2.26]{KW20}:
\begin{align}
 \sup_{X\geq0}\left\{  \operatorname{Tr}[AX]:\Phi(X)\leq B\right\} ,\\
 \inf_{Y\geq0}\left\{  \operatorname{Tr}[BY]:\Phi^{\dag}(Y)\geq A\right\}  .
\end{align}
The SDP for the PPT measured hockey-stick divergence can be written in the standard form as follows: 
\begin{align}
A&= \rho- \lambda \sigma, \quad X=M, \\
B  &  =
\begin{bmatrix}
0 & 0 & 0 & 0\\
0 & 0 & 0 & 0\\
0 & 0 & I_{AB} & 0 \\
0 & 0 & 0 & I_{AB}
\end{bmatrix}
,\\
\Phi(X)  &  =
\begin{bmatrix}
-M & 0 & 0 & 0 \\
0 & -T_B(M) & 0 & 0 \\
0 & 0 & M & 0 \\
0  & 0 & 0 & T_B(M)
\end{bmatrix}.
\end{align}
Setting
\begin{equation}
Y=
\begin{bmatrix}
Y_1 & 0 & 0 & 0 \\
0  & Y_2 & 0 & 0 \\
0  & 0 & Y_3  & 0\\
0 & 0 & 0 & Y_4
\end{bmatrix}
,
\end{equation}
we find that
\begin{align}
   \Tr\!\left[ \Phi(X) Y \right] 
   &= \Tr\!\left[ -M Y_1 - T_B(M) Y_2 + M Y_3 +T_B(M) Y_4\right] \\
    &=\Tr\!\left[ M (Y_3 -Y_1) + T_B(M) (Y_4 -  Y_2)\right] \\ 
    &=\Tr\!\left[ M \left(Y_3- Y_1 +T_B(Y_4 -Y_2)\right)\right],
\end{align}
where the last equality holds due to $\Tr[T_B(L_{AB}) S_{AB}]= \Tr\!\left[ L_{AB} T_B(S_{AB}) \right]$ with the adjoint of the partial transpose superoperator being the partial transpose superoperator.
This leads to
\begin{equation}
    \Phi^\dag(Y) =Y_3- Y_1 +T_B(Y_4 -Y_2).
\end{equation}

Strong duality holds due to Slater's conditions by the following choice of feasible and strictly feasible solutions: choose $Y_3= (\rho -\gamma \sigma)_+$, $Y_i=0$ for all $i \in \{1,2,4\}$ as a feasible solution for the dual SDP and choose $M= (1-\delta) I_{AB}$ with $\delta \in (0,1)$ as a strictly feasible solution to the primal SDP.
Together with the strong duality, we conclude the proof.

\subsection{Werner States} \label{App:Werner_States}
Here we analyze the measured hockey-stick divergence between two Werner states. We first obtain a simpler expression for the measured hockey-stick divergence between two Werner states using the symmetries of Werner states, which we state in Lemma~\ref{lem:meas_hs_wer_symm}. We then use the statement of Lemma~\ref{lem:meas_hs_wer_symm} to prove Propositions~\ref{prop:HS_Werner_general} and~\ref{prop:Measured_HS_Werner}.

\medskip

Let $\Pi^{\operatorname{sym}}_{AB}$ and $\Pi^{\operatorname{asym}}_{AB}$ be the projections onto the symmetric and antisymmetric subspaces, respectively. These projections can be written in terms of the identity and swap operator as follows:
\begin{align}
    \Pi^{\operatorname{sym}}_{AB} &\coloneqq \frac{1}{2}\!\left(I_{AB}+F_{AB}\right),\\
    \Pi^{\operatorname{asym}}_{AB} &\coloneqq \frac{1}{2}\!\left(I_{AB}-F_{AB}\right).
\end{align}
Note that the states $\Theta_{AB}$ and $\Theta^{\perp}_{AB}$ can be obtained by normalizing the aforementioned projectors. That is,
\begin{align}
    \Theta_{AB} &= \frac{\Pi^{\operatorname{sym}}_{AB}}{\operatorname{Tr}\!\left[\Pi^{\operatorname{sym}}_{AB}\right]} = \frac{2}{d(d+1)}\Pi^{\operatorname{sym}}_{AB},\label{eq:symm_proj_state}\\
    \Theta^{\perp}_{AB} &= \frac{\Pi^{\operatorname{asym}}_{AB}}{\operatorname{Tr}\!\left[\Pi^{\operatorname{asym}}_{AB}\right]} = \frac{2}{d(d-1)}\Pi^{\operatorname{asym}}_{AB}.\label{eq:asym_proj_state}
\end{align}

\begin{lemma}\label{lem:meas_hs_wer_symm}
    Fix $p,q \in [0,1]$. For $\gamma \ge 1$, the measured hockey-stick divergence between two Werner states, $\omega^q_{AB}$ and $\omega^p_{AB}$, is equal to the following:
    \begin{equation}\label{eq:meas_hs_wer_symm}
        E^{\mathcal{M}}_{\gamma}\!\left(\omega^q_{AB}\Vert\omega^p_{AB}\right) = \sup_{M \in \mathcal{M}_2} \left[ \frac{\operatorname{Tr}\!\left[M_{AB}\Pi^{\operatorname{sym}}_{AB}\right]}{\operatorname{Tr}\!\left[\Pi^{\operatorname{sym}}_{AB}\right]}(q-\gamma p) + \frac{\operatorname{Tr}\!\left[M_{AB}\Pi^{\operatorname{asym}}_{AB}\right]}{\operatorname{Tr}\!\left[\Pi^{\operatorname{asym}}_{AB}\right]}(1-q-\gamma(1-p))\right],
    \end{equation}
    where $\Pi^{\operatorname{sym}}_{AB}$ and $\Pi^{\operatorname{asym}}_{AB}$  are the projections onto the symmetric and antisymmetric subspaces, respectively.
\end{lemma}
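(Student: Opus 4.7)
The plan is to prove this by a direct algebraic calculation, exploiting the structure of Werner states as linear combinations of the normalized symmetric and antisymmetric projectors. Because $\gamma \geq 1$, the correction term $(1-\gamma)_+$ in \cref{def:HS_Measured} vanishes, so
\[
E_\gamma^{\mathcal{M}}(\omega^q \Vert \omega^p) = \sup_{M \in \mathcal{M}_2} \operatorname{Tr}\!\left[M(\omega^q - \gamma \omega^p)\right],
\]
and the entire task reduces to rewriting the operator $\omega^q - \gamma \omega^p$ in a convenient form and taking the trace against $M$.

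First, I would substitute $\Theta = \Pi^{\operatorname{sym}}/\operatorname{Tr}[\Pi^{\operatorname{sym}}]$ and $\Theta^{\perp} = \Pi^{\operatorname{asym}}/\operatorname{Tr}[\Pi^{\operatorname{asym}}]$ from \eqref{eq:symm_proj_state}--\eqref{eq:asym_proj_state} into the definitions of $\omega^q = q\Theta + (1-q)\Theta^{\perp}$ and $\omega^p = p\Theta + (1-p)\Theta^{\perp}$. Collecting terms in each projector gives
\[
\omega^q - \gamma \omega^p = \frac{q - \gamma p}{\operatorname{Tr}[\Pi^{\operatorname{sym}}]}\,\Pi^{\operatorname{sym}} + \frac{(1-q) - \gamma(1-p)}{\operatorname{Tr}[\Pi^{\operatorname{asym}}]}\,\Pi^{\operatorname{asym}}.
\]
Taking the Hilbert--Schmidt inner product with an arbitrary $M$ and using linearity of the trace immediately yields the bracketed expression that appears inside the supremum on the right-hand side of \eqref{eq:meas_hs_wer_symm}.

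Second, I would justify that the supremum can be taken over $\mathcal{M}$ instead of $\mathcal{M}_2$. For each measurement class actually invoked in conjunction with this lemma (all measurements, PPT, $\operatorname{LO}^{\star}$, 1W-LOCC, LOCC), the implication $M \in \mathcal{M} \Leftrightarrow I - M \in \mathcal{M}$ is easy to verify from the explicit definitions in \cref{Sec:special_Setting}, so $\mathcal{M}_2 = \mathcal{M}$. In particular, the PPT case is noted explicitly in the paper, and for $\operatorname{LO}^{\star}$ one replaces the weights $T(x,y) \in [0,1]$ by $1 - T(x,y)$ to obtain $I - M$ in the same class.

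There is essentially no obstacle: the content of the lemma is purely structural, isolating the observation that $\omega^q - \gamma \omega^p$ lies in the two-dimensional span of $\{\Pi^{\operatorname{sym}}, \Pi^{\operatorname{asym}}\}$, so a measurement operator $M$ enters the divergence only through the two overlaps $\operatorname{Tr}[M \Pi^{\operatorname{sym}}]$ and $\operatorname{Tr}[M \Pi^{\operatorname{asym}}]$. This is precisely the reduction that powers \cref{prop:Measured_HS_Werner}, where the remaining task is to bound these two overlaps for PPT-type measurement operators using the specific form of the symmetric and antisymmetric subspaces.
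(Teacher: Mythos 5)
Your proof is correct, but it takes a more elementary route than the paper. The paper proves this lemma by introducing the twirling channel $\mathcal{T}_{AB}(\cdot) = \int dU\, (U_A\otimes U_B)(\cdot)(U_A\otimes U_B)^{\dagger}$, using that Werner states are invariant under $\mathcal{T}_{AB}$ and that $\mathcal{T}_{AB}$ is self-adjoint to move the twirl onto the measurement operator, and then invoking the known form of a twirled operator as a combination of $\Pi^{\operatorname{sym}}_{AB}$ and $\Pi^{\operatorname{asym}}_{AB}$. You instead observe directly that $\omega^q - \gamma\omega^p$ already lies in the span of the two orthogonal projectors, namely
\begin{equation*}
\omega^q - \gamma \omega^p = \frac{q - \gamma p}{\operatorname{Tr}\!\left[\Pi^{\operatorname{sym}}_{AB}\right]}\,\Pi^{\operatorname{sym}}_{AB} + \frac{(1-q) - \gamma(1-p)}{\operatorname{Tr}\!\left[\Pi^{\operatorname{asym}}_{AB}\right]}\,\Pi^{\operatorname{asym}}_{AB},
\end{equation*}
so that linearity of the trace gives the claimed expression with no symmetrization of $M$ needed. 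The two arguments are mathematically equivalent here (the paper's twirl of $M$ only ever enters through the overlaps $\operatorname{Tr}[M\Pi^{\operatorname{sym}}]$ and $\operatorname{Tr}[M\Pi^{\operatorname{asym}}]$, which is exactly your point), but the twirling template is what the paper reuses verbatim for isotropic states and what makes the reduction systematic for other symmetric families; your version is shorter and makes the structural content of the lemma more transparent. You are also slightly more careful than the paper on one point: the definition of $E_\gamma^{\mathcal{M}}$ optimizes over $\mathcal{M}_2$ while the lemma's right-hand side optimizes over $\mathcal{M}$, and you explicitly verify $\mathcal{M}_2=\mathcal{M}$ for the classes to which the lemma is applied (correctly excluding plain $\operatorname{LO}$, for which this identity fails), whereas the paper passes over this silently.
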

\begin{proof}
Consider the following twirling channel:
\begin{equation}
    \mathcal{T}_{AB}\!\left(\cdot\right) \coloneqq \int dU \left(U_A\otimes U_B\right)\!\left(\cdot\right)\left(U_A\otimes U_B\right)^{\dagger}.
\end{equation}
The action of this channel on an operator $X_{AB} \in \mathcal{L}(\mathcal{H_{A}\otimes \mathcal{H}_B})$ results in an operator of the following form~\cite{Wer89}:
\begin{equation}\label{eq:twirled_werner_op}
    \mathcal{T}_{AB}\!\left(X_{AB}\right) = \frac{\operatorname{Tr}\!\left[X_{AB}\Pi^{\operatorname{sym}}_{AB}\right]}{\operatorname{Tr}\!\left[\Pi^{\operatorname{sym}}_{AB}\right]}\Pi^{\operatorname{sym}}_{AB} + \frac{\operatorname{Tr}\!\left[X_{AB}\Pi^{\operatorname{asym}}_{AB}\right]}{\operatorname{Tr}\!\left[\Pi^{\operatorname{asym}}_{AB}\right]}\Pi^{\operatorname{asym}}_{AB}.
\end{equation}
It is easily verified that the Werner states are invariant under the action of $\mathcal{T}_{AB}$. Also, note that  the Hilbert--Schmidt adjoint of $\mathcal{T}_{AB}$ is $\mathcal{T}_{AB}$ itself. 

The measured hockey-stick divergence between two Werner states, $\omega^q_{AB}$ and $\omega^p_{AB}$, can be written as follows:
\begin{align}
    E^{\mathcal{M}}_{\gamma}\!\left(\omega^q_{AB}\Vert\omega^p_{AB}\right) &= \sup_{M \in \mathcal{M}_2} \operatorname{Tr}\!\left[M_{AB}\!\left(\omega^q_{AB} - \gamma \omega^p_{AB}\right)\right]\\
    &= \sup_{M \in \mathcal{M}_2} \operatorname{Tr}\!\left[M_{AB}\!\left(\mathcal{T}_{AB}\!\left(\omega^q_{AB}\right) - \gamma \mathcal{T}_{AB}\!\left(\omega^p_{AB}\right)\right)\right]\\
    &= \sup_{M \in \mathcal{M}_2} \operatorname{Tr}\!\left[\mathcal{T}_{AB}\!\left(M_{AB}\right)\!\left(\omega^q_{AB} - \gamma \omega^p_{AB}\right)\right],\label{eq:twirl_measurement_opt_werner}\\
    &= \sup_{M \in \mathcal{M}_2} \frac{\operatorname{Tr}\!\left[M_{AB}\Pi^{\operatorname{sym}}_{AB}\right]}{\operatorname{Tr}\!\left[\Pi^{\operatorname{sym}}_{AB}\right]}\operatorname{Tr}\!\left[\Pi^{\operatorname{sym}}_{AB}\!\left(\omega^q_{AB} - \gamma \omega^p_{AB}\right)\right] + \frac{\operatorname{Tr}\!\left[M_{AB}\Pi^{\operatorname{asym}}_{AB}\right]}{\operatorname{Tr}\!\left[\Pi^{\operatorname{asym}}_{AB}\right]}\operatorname{Tr}\!\left[\Pi^{\operatorname{asym}}_{AB}\!\left(\omega^q_{AB} - \gamma \omega^p_{AB}\right)\right]\\
    &= \sup_{M \in \mathcal{M}_2} \frac{\operatorname{Tr}\!\left[M_{AB}\Pi^{\operatorname{sym}}_{AB}\right]}{\operatorname{Tr}\!\left[\Pi^{\operatorname{sym}}_{AB}\right]}(q-\gamma p) + \frac{\operatorname{Tr}\!\left[M_{AB}\Pi^{\operatorname{asym}}_{AB}\right]}{\operatorname{Tr}\!\left[\Pi^{\operatorname{asym}}_{AB}\right]}(1-q-\gamma(1-p)),
\end{align}
where the second equality follows from the fact that Werner states are invariant under the action of the twirling channel $\mathcal{T}_{AB}$, the third equality follows from the fact that $\mathcal{T}_{AB}$ is self-adjoint, the penultimate equality follows from~\eqref{eq:twirled_werner_op}, and the ultimate equality follows from the definition of Werner states along with~\eqref{eq:symm_proj_state} and~\eqref{eq:asym_proj_state}.
\end{proof}

\medskip

\noindent \textbf{Proof of~\cref{prop:HS_Werner_general}:}
The statement of Proposition~\ref{prop:HS_Werner_general} can be obtained from Lemma~\ref{lem:meas_hs_wer_symm} by considering $\mathcal{M}$ to be the set of all measurements. 

For $\gamma \ge 1$, at most one of the two quantities can be positive: $q-\gamma p$ or $1-q-\gamma (1-p)$. Choosing $\mathcal{M}$ to be the set of all measurements and ignoring the negative term in~\eqref{eq:meas_hs_wer_symm}, we arrive at the following:
\begin{align}
    E_{\gamma}\!\left(\omega^q_{AB}\Vert\omega^p_{AB}\right) &= \sup_{0 \le M \le I} \left[\frac{\operatorname{Tr}\!\left[M_{AB}\Pi^{\operatorname{sym}}_{AB}\right]}{\operatorname{Tr}\!\left[\Pi^{\operatorname{sym}}_{AB}\right]}(q-\gamma p) + \frac{\operatorname{Tr}\!\left[M_{AB}\Pi^{\operatorname{asym}}_{AB}\right]}{\operatorname{Tr}\!\left[\Pi^{\operatorname{asym}}_{AB}\right]}(1-q-\gamma(1-p))\label{eq:hs_werner_symm}\right]\\
    &\le \sup_{0\le M\le I}\max\left\{\frac{\operatorname{Tr}\!\left[M_{AB}\Pi^{\operatorname{sym}}_{AB}\right]}{\operatorname{Tr}\!\left[\Pi^{\operatorname{sym}}_{AB}\right]}(q-\gamma p), \frac{\operatorname{Tr}\!\left[M_{AB}\Pi^{\operatorname{asym}}_{AB}\right]}{\operatorname{Tr}\!\left[\Pi^{\operatorname{asym}}_{AB}\right]}(1-q-\gamma(1-p))\right\}\\
    &\le \max\left\{0, q-\gamma p, 1-q-\gamma(1-p)\right\},\label{eq:hs_werner_ub}
\end{align}
where the last inequality follows from the fact that $M\le I$ and that $M = 0$ is a valid choice of measurement.

To show that the inequality in~\eqref{eq:hs_werner_ub} is saturated, consider $\Pi^{\operatorname{sym}}_{AB}$ to be a specific choice for the measurement operator in~\eqref{eq:hs_werner_symm}, which implies $E_{\gamma}\!\left(\omega^q_{AB}\Vert\omega^p_{AB}\right)\ge q-\gamma p$. Furthermore, choosing $M_{AB} = \Pi^{\operatorname{asym}}_{AB}$ leads to $E_{\gamma}\!\left(\omega^q_{AB}\Vert\omega^p_{AB}\right)\ge 1-q-\gamma(1-p)$, and choosing $M_{AB} = 0$ leads to $E_{\gamma}\!\left(\omega^q_{AB}\Vert\omega^p_{AB}\right)\ge 0$. Combining the three inequalities, we arrive at the following inequality:
\begin{equation}
    E_{\gamma}\!\left(\omega^q_{AB}\Vert\omega^p_{AB}\right)\ge \max\left\{0,q-\gamma p,1-q- \gamma(1-p)\right\},
\end{equation}
which completes the proof.

\bigskip
\noindent \textbf{Proof of~\cref{prop:Measured_HS_Werner}:}
\underline{Lower bound:} 
Recall that for $\gamma \geq 1$
\begin{equation}
    E_\gamma^{\operatorname{LO}^\star}(\rho \Vert \sigma) \coloneqq \sup_{M \in \operatorname{LO}^\star} \Tr\!\left[ M (\rho -\gamma \sigma)\right].
\end{equation}
Since the measurement operators $M=0$ and $I-M =I$, trivially belong to the set of $\operatorname{LO}^\star$ operators, we have
\begin{equation}
    E_\gamma^{\operatorname{LO}^\star}(\omega^q \Vert \omega^p) \geq 0 \label{eq:M_o_bound_2_W}.
\end{equation}

Observe that $M= \sum_{i=1}^d |i \rangle\!\langle i| \otimes |i \rangle\!\langle i|$  and $I-M= \sum_{i\neq j} |i \rangle\!\langle i | \otimes |j\rangle \! \langle j| =I- \sum_{i=1}^d |i \rangle\!\langle i| \otimes |i \rangle\!\langle i|$ belong to the set of $\operatorname{LO}^\star$ operators as we can see by the following argument: This measurement can be implemented by Alice and Bob first measuring their local systems in the computational basis. Then they perform a classical post-processing of their measurement outcomes by accepting if the outcomes match, which corresponds to the measurement operator $\sum_{i=1}^d|i\rangle\!\langle i|_A\otimes |i\rangle\!\langle i|_A$. and rejecting if the outcomes do not match, which corresponds to the measurement operator $I_{AB}-\sum_{i=1}^d|i\rangle\!\langle i|_A\otimes |i\rangle\!\langle i|_A$.
With the former, we obtain the inequality 
\begin{align}
    E_\gamma^{\operatorname{LO}^\star}(\omega^q \Vert \omega^p) &\geq \Tr\!\left[\sum_{i=1}^d |i \rangle\!\langle i| \otimes |i \rangle\!\langle i| \left( (q-\gamma p) \Theta + \left( (1-q)- \gamma (1-p) \right) \Theta^\perp \right) \right] \\ 
    &=  (q-\gamma p) \times \frac{2}{d+1} + \left( (1-q)- \gamma (1-p) \right)  \times 0 \\
    &= \frac{2 (q-\gamma p)}{d+1}, \label{eq:one_M_bound_2_W}
\end{align}
where the first equality followed by 
$\Tr[M \Theta]= 2/(d+1)$ and $\Tr[M \Theta^\perp]=0$.

Furthermore, with the choice $I- M= \sum_{i\neq j} |i \rangle\!\langle i | \otimes |j\rangle \! \langle j|$, we arrive at another lower bound as follows: 
\begin{align}
    E_\gamma^{\operatorname{LO}^\star}(\omega^q \Vert \omega^p) &\geq \Tr\!\left[\left( I- \sum_{i=1}^d |i \rangle\!\langle i| \otimes |i \rangle\!\langle i| \right)\left( (q-\gamma p) \Theta + \left( (1-q)- \gamma (1-p) \right) \Theta^\perp \right) \right] \\ 
    &= 1- \gamma -\frac{2 (q-\gamma p)}{d+1}\label{eq:one_M_bound_3_W},
\end{align}
where the equality follows from~\eqref{eq:one_M_bound_2_W}.

Combining~\eqref{eq:M_o_bound_2_W}, \eqref{eq:one_M_bound_2_W}, and \eqref{eq:one_M_bound_3_W}, we obtain the following inequality:
\begin{equation} \label{eq:LO_bound_2_1_W}
     E_\gamma^{\operatorname{LO}^\star}(\omega^q \Vert \omega^p) \geq  \max\left\{0, \frac{2(q-\gamma p)} {(d+1)}, (1-\gamma)- \frac{2(q-\gamma p)} {(d+1)}\right\}.
\end{equation}

\medskip 
\underline{Upper bound:} Now we will show that the lower bound on $E_\gamma^{\operatorname{LO}^\star}(\omega^q \Vert \omega^p)$ obtained in~\eqref{eq:LO_bound_2_1_W} is also an upper bound on $E_\gamma^{\operatorname{PPT}}(\omega^q \Vert \omega^p)$.

  Let $M_{AB}$ be a PPT measurement operator; that is, $0\le M_{AB} \le I_{AB}$ and $0\le T_B\!\left(M_{AB}\right) \le I_{AB}$. From Lemma~\ref{lem:meas_hs_wer_symm}, we know that
    \begin{equation}\label{eq:PPT_hs_wer_symm_1}
        E^{\operatorname{PPT}}_{\gamma}\!\left(\omega^q_{AB}\Vert\omega^p_{AB}\right) = \sup_{\substack{0\le M\le I\\ 0\le T_B(M)\le I}} \frac{\operatorname{Tr}\!\left[M_{AB}\Pi^{\operatorname{sym}}_{AB}\right]}{\operatorname{Tr}\!\left[\Pi^{\operatorname{sym}}_{AB}\right]}(q-\gamma p) + \frac{\operatorname{Tr}\!\left[M_{AB}\Pi^{\operatorname{asym}}_{AB}\right]}{\operatorname{Tr}\!\left[\Pi^{\operatorname{asym}}_{AB}\right]}(1-q-\gamma(1-p)).
    \end{equation}
    The condition $0\le M_{AB}\le I_{AB}$ implies that 
    \begin{align}
        0 &\le \operatorname{Tr}\!\left[M_{AB}\Pi^{\operatorname{sym}}_{AB}\right] \le \operatorname{Tr}\!\left[\Pi^{\operatorname{sym}}_{AB}\right],\\
        0 &\le \operatorname{Tr}\!\left[M_{AB}\Pi^{\operatorname{asym}}_{AB}\right] \le \operatorname{Tr}\!\left[\Pi^{\operatorname{asym}}_{AB}\right].\label{eq:PPT_asym_max_cond}
    \end{align}

    The partial transpose of the swap operator is the unnormalized maximally entangled vector. That is, $T_B(F_{AB}) = |\Gamma\rangle\!\langle \Gamma|_{AB}$, where 
    \begin{equation}\label{eq:gamma_vec_defn}
        |\Gamma\rangle_{AB} \coloneqq \sum_{i = 0}^{d-1}|i\rangle_A|i\rangle_B,
    \end{equation}
    where $d$ is the dimension of systems $A$ and $B$.
    We will use the notation $\Gamma_{AB}\coloneqq |\Gamma\rangle\!\langle \Gamma|_{AB}$ for conciseness. Now consider the following identity:
    \begin{equation}\label{eq:sym_asym_proj_diff}
        T_B\!\left(\Pi^{\operatorname{sym}}_{AB} - \Pi^{\operatorname{asym}}_{AB}\right) = T_B\!\left(F_{AB}\right) = \Gamma_{AB}.
    \end{equation}
    The PPT condition for $M_{AB}$ implies the following inequality:
    \begin{align}
        0&\le T_B\!\left(M_{AB}\right)\le I_{AB}\\
        \implies 0 &\le \langle \Gamma|T_B\!\left(M_{AB}\right)|\Gamma\rangle \le \langle \Gamma|\Gamma\rangle\\
        \implies 0 &\le \operatorname{Tr}\!\left[T_{B}\!\left(M_{AB}\right)\Gamma_{AB}\right] \le d\\
        \implies 0 &\le \operatorname{Tr}\!\left[T_{B}\!\left(M_{AB}\right)T_B\!\left(\Pi^{\operatorname{sym}}_{AB} - \Pi^{\operatorname{asym}}_{AB}\right)\right] \le d\\
        \implies 0 &\le \operatorname{Tr}\!\left[M_{AB}\!\left(\Pi^{\operatorname{sym}}_{AB} - \Pi^{\operatorname{asym}}_{AB}\right)\right] \le d\label{eq:PPT_cond_symm_1}\\
        \implies \operatorname{Tr}\!\left[M_{AB}\Pi^{\operatorname{asym}}_{AB}\right] &\le \operatorname{Tr}\!\left[M_{AB}\Pi^{\operatorname{sym}}_{AB}\right] \le \operatorname{Tr}\!\left[M_{AB}\Pi^{\operatorname{asym}}_{AB}\right] + d,\label{eq:PPT_cond_symm}
    \end{align}
    where the fourth line follows from~\eqref{eq:sym_asym_proj_diff} and the fifth line follows from the fact that the partial transpose map is the Hilbert--Schmidt adjoint of itself as well as the inverse of itself.

    Let us first assume that $q-\gamma p \ge 0$. We can rewrite~\eqref{eq:PPT_hs_wer_symm_1} as follows:
    \begin{align}
        E^{\operatorname{PPT}}_{\gamma}\!\left(\omega^q_{AB}\Vert\omega^p_{AB}\right) &= \sup_{\substack{0\le M\le I\\ 0\le T_B(M)\le I}} \left(\frac{\operatorname{Tr}\!\left[M_{AB}\Pi^{\operatorname{sym}}_{AB}\right]}{\operatorname{Tr}\!\left[\Pi^{\operatorname{sym}}_{AB}\right]} - \frac{\operatorname{Tr}\!\left[M_{AB}\Pi^{\operatorname{asym}}_{AB}\right]}{\operatorname{Tr}\!\left[\Pi^{\operatorname{asym}}_{AB}\right]}\right)(q-\gamma p) + \frac{\operatorname{Tr}\!\left[M_{AB}\Pi^{\operatorname{asym}}_{AB}\right]}{\operatorname{Tr}\!\left[\Pi^{\operatorname{asym}}_{AB}\right]}(1-\gamma)\label{eq:PPT_hs_wer_q-gp}\\
        &\le \sup_{\substack{0\le M\le I\\ 0\le T_B(M)\le I}} \left(\frac{\operatorname{Tr}\!\left[M_{AB}\Pi^{\operatorname{sym}}_{AB}\right]}{\operatorname{Tr}\!\left[\Pi^{\operatorname{sym}}_{AB}\right]} - \frac{\operatorname{Tr}\!\left[M_{AB}\Pi^{\operatorname{asym}}_{AB}\right]}{\operatorname{Tr}\!\left[\Pi^{\operatorname{sym}}_{AB}\right]}\right)(q-\gamma p) + \frac{\operatorname{Tr}\!\left[M_{AB}\Pi^{\operatorname{asym}}_{AB}\right]}{\operatorname{Tr}\!\left[\Pi^{\operatorname{asym}}_{AB}\right]}(1-\gamma)\\
        &\le \sup_{\substack{0\le M\le I\\ 0\le T_B(M)\le I}} \frac{d}{\operatorname{Tr}\!\left[\Pi^{\operatorname{sym}}_{AB}\right]}(q-\gamma p) + \frac{\operatorname{Tr}\!\left[M_{AB}\Pi^{\operatorname{asym}}_{AB}\right]}{\operatorname{Tr}\!\left[\Pi^{\operatorname{asym}}_{AB}\right]}(1-\gamma)\\
        &\le \frac{2(q-\gamma p)}{d+1},\label{eq:PPT_hs_q_ge_gp}
    \end{align}
    where the first inequality follows by modifying the denominator of the second term because $\operatorname{Tr}\!\left[\Pi^{\operatorname{sym}}_{AB}\right]\ge \operatorname{Tr}\!\left[\Pi^{\operatorname{asym}}_{AB}\right]$, the second inequality follows from~\eqref{eq:PPT_cond_symm_1}, and the final inequality follows by substituting the value of $\operatorname{Tr}\!\left[\Pi^{\operatorname{sym}}_{AB}\right]$ and ignoring the second term in the penultimate inequality because it is negative.

    Now let us assume that $1-q-\gamma(1-p) \ge 0$, which also implies that $q-\gamma p \le 0$ since $\gamma \ge 1$. We can rewrite~\eqref{eq:PPT_hs_wer_q-gp} as follows:
    \begin{align}
        E^{\operatorname{PPT}}_{\gamma}\!\left(\omega^q_{AB}\Vert\omega^p_{AB}\right) &= \sup_{\substack{0\le M\le I\\ 0\le T_B(M)\le I}} \frac{\operatorname{Tr}\!\left[M_{AB}\Pi^{\operatorname{asym}}_{AB}\right]}{\operatorname{Tr}\!\left[\Pi^{\operatorname{asym}}_{AB}\right]}\left(\left(\frac{\operatorname{Tr}\!\left[\Pi^{\operatorname{asym}}_{AB}\right]}{\operatorname{Tr}\!\left[M_{AB}\Pi^{\operatorname{asym}}_{AB}\right]}\frac{\operatorname{Tr}\!\left[M_{AB}\Pi^{\operatorname{sym}}_{AB}\right]}{\operatorname{Tr}\!\left[\Pi^{\operatorname{sym}}_{AB}\right]} - 1\right)(q-\gamma p) + 1-\gamma\right)\\
        &= \sup_{\substack{0\le M\le I\\ 0\le T_B(M)\le I}} \frac{\operatorname{Tr}\!\left[M_{AB}\Pi^{\operatorname{asym}}_{AB}\right]}{\operatorname{Tr}\!\left[\Pi^{\operatorname{asym}}_{AB}\right]}\left(\frac{\operatorname{Tr}\!\left[\Pi^{\operatorname{asym}}_{AB}\right]}{\operatorname{Tr}\!\left[\Pi^{\operatorname{sym}}_{AB}\right]}\frac{\operatorname{Tr}\!\left[M_{AB}\Pi^{\operatorname{sym}}_{AB}\right]}{\operatorname{Tr}\!\left[M_{AB}\Pi^{\operatorname{asym}}_{AB}\right]}(q-\gamma p)  + 1-q-\gamma(1-p)\right)\\
        &\le \sup_{\substack{0\le M\le I\\ 0\le T_B(M)\le I}} \frac{\operatorname{Tr}\!\left[M_{AB}\Pi^{\operatorname{asym}}_{AB}\right]}{\operatorname{Tr}\!\left[\Pi^{\operatorname{asym}}_{AB}\right]}\left(\frac{\operatorname{Tr}\!\left[\Pi^{\operatorname{asym}}_{AB}\right]}{\operatorname{Tr}\!\left[\Pi^{\operatorname{sym}}_{AB}\right]}(q-\gamma p)  + 1-q-\gamma(1-p)\right)\\
        &\le  \max\left\{\frac{\operatorname{Tr}\!\left[\Pi^{\operatorname{asym}}_{AB}\right]}{\operatorname{Tr}\!\left[\Pi^{\operatorname{sym}}_{AB}\right]}(q-\gamma p)  + 1-q-\gamma(1-p),0\right\}\\
        &=  \max\left\{1-\gamma - \frac{2(q-\gamma p)}{d+1},0\right\},\label{eq:PPT_hs_q_le_gp}
    \end{align}
    where the first inequality follows from~\eqref{eq:PPT_cond_symm} and the fact that $q-\gamma p \le 0$, the second inequality follows from~\eqref{eq:PPT_asym_max_cond} and the fact that $M=0$ is a valid choice, and the final equality follows by substituting the values of $\operatorname{Tr}\!\left[\Pi^{\operatorname{sym}}_{AB}\right]$ and $\operatorname{Tr}\!\left[\Pi^{\operatorname{asym}}_{AB}\right]$.

    For the last case, if both $q-\gamma p \le 0$ and $1-q-\gamma(1-p)\le 0$, then the optimal choice is $M_{AB} = 0$ leading to $E^{\operatorname{PPT}_{\gamma}}\!\left(\omega^q_{AB}\Vert\omega^p_{AB}\right) = 0$. Therefore, the inequalities in~\eqref{eq:PPT_hs_q_ge_gp} and~\eqref{eq:PPT_hs_q_le_gp} imply that
    \begin{equation}\label{eq:PPT_bound_2_1_W}
        E^{\operatorname{PPT}}_{\gamma}\!\left(\omega^q_{AB}\Vert\omega^p_{AB}\right) \le \max\left\{0, \frac{2(q-\gamma p)}{d+1}, 1-\gamma - \frac{2(q-\gamma p)}{d+1}\right\}.
    \end{equation}

Recall that
\begin{equation}
    E^{\operatorname{LO}^{\star}}_{\gamma}\!\left(\omega^q_{AB}\Vert\omega^p_{AB}\right) \le E^{\operatorname{1W-LOCC}}_{\gamma}\!\left(\omega^q_{AB}\Vert\omega^p_{AB}\right) \le E^{\operatorname{LOCC}}_{\gamma}\!\left(\omega^q_{AB}\Vert\omega^p_{AB}\right)\le E^{\operatorname{PPT}}_{\gamma}\!\left(\omega^q_{AB}\Vert\omega^p_{AB}\right).
\end{equation}
Therefore, the inequalities in~\eqref{eq:LO_bound_2_1_W} and~\eqref{eq:PPT_bound_2_1_W} imply that all the aforementioned quanitities are equal for $\gamma\ge 1$.

\subsection{Isotropic States}
\label{App:Isotrpic_States}

In this appendix, we analyze the measured hockey-stick divergence between two isotropic states. We first obtain a simpler expression for the measured hockey-stick divergence between two isotropic states using the symmetries of isotropic states, which we state in~\cref{lem:meas_hs_iso_symm}. We then use the statement of~\cref{lem:meas_hs_iso_symm} to prove Propositions~\ref{prop:HS_Isotropic_general} and~\ref{prop:Measured_HS_Isotro}.
 We define a $d$-dimensional isotropic state as follows: for $p \in [0,1]$
\begin{equation}\label{eq:isotropc_state}
    \zeta^p \coloneqq p~\Phi + (1-p) \Phi^{\perp},
\end{equation}
where
\begin{equation}\label{eq:max_entang_ortho}
    \Phi \coloneqq \frac{1}{d} \sum_{i,j=1}^d |i \rangle\!\langle j| \otimes |i \rangle\!\langle j|  \quad \textnormal{and} \quad \Phi^{\perp} \coloneqq \frac{I_{AB} - \Phi}{d^2-1}.
\end{equation}
Note that $\Phi_{AB}$ and $I_{AB}-\Phi_{AB}$ are orthogonal projections.

\begin{lemma}\label{lem:meas_hs_iso_symm}
    Fix $p,q \in [0,1]$. The following equality holds for all $\gamma \ge 1$:
    \begin{equation}\label{eq:meas_hs_iso_sym}
        E^{\mathcal{M}}_{\gamma}\!\left(\omega^q_{AB}\Vert\omega^p_{AB}\right) = \sup_{M\in \mathcal{M}} \operatorname{Tr}\!\left[M_{AB}\Phi_{AB}\right](q-\gamma p) + \frac{\operatorname{Tr}\!\left[M_{AB}(I_{AB}-\Phi_{AB})\right]}{d^2-1}(1-q-\gamma(1-p)).
    \end{equation}
\end{lemma}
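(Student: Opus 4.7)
The plan is to mirror the strategy used for Werner states in Lemma~\ref{lem:meas_hs_wer_symm}, replacing the $U\otimes U$ twirl with the isotropic $U\otimes \bar{U}$ twirl. Concretely, I would introduce the channel
\begin{equation}
    \mathcal{T}_{AB}(\cdot) \coloneqq \int dU \, \left(U_A \otimes \bar{U}_B\right)(\cdot)\left(U_A \otimes \bar{U}_B\right)^{\dagger},
\end{equation}
and record three facts that drive the proof: (i) $\mathcal{T}_{AB}$ is self-adjoint under the Hilbert--Schmidt inner product (being an average of unitary conjugations by a compact group); (ii) by Schur--Weyl duality for the $U\otimes \bar U$ action, the commutant is spanned by $\{\Phi_{AB}, I_{AB}-\Phi_{AB}\}$, so that for any operator $X_{AB}$,
\begin{equation}
    \mathcal{T}_{AB}(X_{AB}) = \operatorname{Tr}[X_{AB}\Phi_{AB}]\,\Phi_{AB} + \frac{\operatorname{Tr}[X_{AB}(I_{AB}-\Phi_{AB})]}{d^2-1}\,(I_{AB}-\Phi_{AB}),
\end{equation}
where I use $\operatorname{Tr}[\Phi_{AB}]=1$ and $\operatorname{Tr}[I_{AB}-\Phi_{AB}]=d^2-1$; (iii) isotropic states $\zeta^p_{AB}$ lie in this commutant, so $\mathcal{T}_{AB}(\zeta^p_{AB})=\zeta^p_{AB}$.

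Next, for any $M\in \mathcal{M}$ I would push the twirl from the states to the measurement:
\begin{equation}
    \operatorname{Tr}[M_{AB}(\zeta^q_{AB}-\gamma\zeta^p_{AB})] = \operatorname{Tr}[M_{AB}\,\mathcal{T}_{AB}(\zeta^q_{AB}-\gamma\zeta^p_{AB})] = \operatorname{Tr}[\mathcal{T}_{AB}(M_{AB})(\zeta^q_{AB}-\gamma\zeta^p_{AB})],
\end{equation}
using invariance of isotropic states and self-adjointness of $\mathcal{T}_{AB}$. Substituting the two-term form of $\mathcal{T}_{AB}(M_{AB})$ and computing the four scalar traces $\operatorname{Tr}[\Phi_{AB}\zeta^p_{AB}]=p$ and $\operatorname{Tr}[(I_{AB}-\Phi_{AB})\zeta^p_{AB}]=1-p$ (which follow from $\Phi_{AB}^2=\Phi_{AB}$, orthogonality of $\Phi_{AB}$ and $\Phi^\perp_{AB}$, and the definitions in \eqref{eq:isotropc_state}--\eqref{eq:max_entang_ortho}), I get exactly
\begin{equation}
    \operatorname{Tr}[M_{AB}(\zeta^q_{AB}-\gamma\zeta^p_{AB})] = \operatorname{Tr}[M_{AB}\Phi_{AB}](q-\gamma p) + \frac{\operatorname{Tr}[M_{AB}(I_{AB}-\Phi_{AB})]}{d^2-1}\bigl(1-q-\gamma(1-p)\bigr).
\end{equation}

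Taking the supremum over $M\in\mathcal{M}$ on both sides yields the claimed identity (the $(1-\gamma)_+$ correction in Definition~\ref{def:HS_Measured} is the same on both sides and cancels, as in the Werner argument). I expect no serious obstacles here; the only subtlety to be careful about is the commutant structure step, i.e., justifying that $\mathcal{T}_{AB}$ outputs only a combination of $\Phi_{AB}$ and $I_{AB}-\Phi_{AB}$, which requires invoking the fact that the $U\otimes \bar U$ representation on $\mathcal{H}_A\otimes\mathcal{H}_B$ decomposes into the trivial representation spanned by the maximally entangled vector and its orthogonal complement. Once that is noted, the remaining calculation is essentially trace bookkeeping.
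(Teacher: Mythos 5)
Your proposal is correct and follows essentially the same route as the paper's proof: the paper also introduces the $U\otimes\bar U$ twirling channel, uses its self-adjointness and the invariance of isotropic states to move the twirl onto the measurement operator, and then applies the two-term commutant formula (cited from the literature rather than derived via Schur--Weyl, but to the same effect). The only cosmetic remark is that for $\gamma\ge 1$ the $(1-\gamma)_+$ term is simply zero, and that the $\omega$'s in the lemma statement should read $\zeta$'s (isotropic states), as your computation correctly assumes.
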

\begin{IEEEproof}
    Consider the following twirling channel:
    \begin{equation}
        \widetilde{\mathcal{T}}_{AB}\!\left(\cdot\right) \coloneqq \int dU (U_A\otimes \overline{U}_B)(\cdot)(U_A\otimes \overline{U}_B)^{\dagger},
    \end{equation}
    where $\overline{U}$ denotes the complex conjugate of $U$. The action of $\widetilde{\mathcal{T}}_{AB}$ on an arbitrary $X_{AB}\in \mathcal{L}(\mathcal{H}_{A}\otimes \mathcal{H}_B)$ results in an operator of the following form~\cite{HH99}:
    \begin{equation}\label{eq:twirled_op_isotropic}
        \widetilde{\mathcal{T}}_{AB}\!\left(X_{AB}\right) = \operatorname{Tr}\!\left[X_{AB}\Phi_{AB}\right]\Phi_{AB} + \frac{\operatorname{Tr}\!\left[X_{AB}(I_{AB}-\Phi_{AB})\right]}{d^2-1}\left(I_{AB} - \Phi_{AB}\right).
    \end{equation}
    It can be easily verified that isotropic states remain invariant under the action of $\widetilde{\mathcal{T}}_{AB}$. Also note that $\widetilde{\mathcal{T}}_{AB}$ is the Hilbert--Schmidt adjoint of itself. The measured hockey-stick divergence between two isotropic states can then be calculated as follows:
    \begin{align}
        E^{\mathcal{M}}_{\gamma}\!\left(\zeta^q_{AB}\Vert\zeta^p_{AB}\right) &= \sup_{M\in \mathcal{M}} \operatorname{Tr}\!\left[M_{AB}\!\left(\zeta^q_{AB} - \gamma \zeta^p_{AB}\right)\right]\\
        &= \sup_{M\in \mathcal{M}} \operatorname{Tr}\!\left[M_{AB}\!\left(\widetilde{\mathcal{T}}_{AB}\!\left(\zeta^q_{AB}\right) - \gamma \widetilde{\mathcal{T}}_{AB}\!\left(\zeta^p_{AB}\right)\right)\right]\\
        &= \sup_{M\in \mathcal{M}} \operatorname{Tr}\!\left[\widetilde{\mathcal{T}}_{AB}\!\left(M_{AB}\right)\!\left(\zeta^q_{AB} - \gamma \zeta^p_{AB}\right)\right]\\
        &= \sup_{M\in \mathcal{M}} \operatorname{Tr}\!\left[M_{AB}\Phi_{AB}\right](q-\gamma p) + \frac{\operatorname{Tr}\!\left[M_{AB}\!\left(I_{AB} - \Phi_{AB}\right)\right]}{d^2-1}(1-q-\gamma(1-p)),
    \end{align}
    where the second equality follows from the invariance of isotropic states under the twirling channel $\widetilde{\mathcal{T}}_{AB}$, the third equality follows from the fact that $\widetilde{\mathcal{T}}_{AB}$ is self-adjoint, and the final equality follows from~\eqref{eq:twirled_op_isotropic} and the definition of isotropic states in~\eqref{eq:isotropc_state}.
\end{IEEEproof}

\begin{proposition}[Hockey-Stick Divergence for Isotropic States]\label{prop:HS_Isotropic_general}
    Let $p,q \in [0,1]$. We have that for $\gamma \geq 1$
\begin{equation}
    E_\gamma(\zeta^q \Vert \zeta^p) 
    =\max\{0,q-\gamma p, (1-q) -\gamma (1-p)\}.
\end{equation} 
\end{proposition}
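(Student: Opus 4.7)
The plan is to mirror the strategy used to prove \cref{prop:HS_Werner_general} for Werner states, now invoking \cref{lem:meas_hs_iso_symm} instead of \cref{lem:meas_hs_wer_symm}. Since $\gamma \geq 1$ gives $(1-\gamma)_+ = 0$, the hockey-stick divergence reduces to the supremum in \cref{def:HS_Measured} taken over the set $\mathcal{M}_{\operatorname{ALL}} = \{M : 0 \leq M \leq I\}$, so applying \cref{lem:meas_hs_iso_symm} with $\mathcal{M} = \mathcal{M}_{\operatorname{ALL}}$ yields
\begin{equation}
E_\gamma(\zeta^q \Vert \zeta^p) = \sup_{0 \leq M \leq I} \operatorname{Tr}[M \Phi]\,(q - \gamma p) + \frac{\operatorname{Tr}[M(I - \Phi)]}{d^2 - 1}\,(1 - q - \gamma(1-p)).
\end{equation}

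Next, I would exploit that $\Phi$ and $I - \Phi$ are orthogonal projectors of ranks $1$ and $d^2 - 1$, respectively (cf.~\eqref{eq:max_entang_ortho}). For any $0 \leq M \leq I$, setting $a \coloneqq \operatorname{Tr}[M\Phi]$ and $b \coloneqq \operatorname{Tr}[M(I-\Phi)]/(d^2-1)$, one has $0 \leq a \leq 1$ and $0 \leq b \leq 1$. Conversely, every pair $(a,b) \in [0,1]^2$ is realized by the explicit measurement operator $M = a\Phi + b(I - \Phi)$, which satisfies $0 \leq M \leq I$ because $\Phi$ and $I - \Phi$ are orthogonal projectors. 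Hence the optimization collapses to the linear program
\begin{equation}
E_\gamma(\zeta^q \Vert \zeta^p) = \sup_{0 \leq a, b \leq 1} \bigl[\, a(q - \gamma p) + b(1 - q - \gamma(1-p))\,\bigr].
\end{equation}

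Since $\gamma \geq 1$, at most one of the coefficients $q - \gamma p$ and $1 - q - \gamma(1-p)$ can be strictly positive (their sum is $1 - \gamma \leq 0$). The maximum over $[0,1]^2$ is therefore attained at a vertex: take $(a,b) = (0,0)$ to obtain $0$, $(a,b) = (1,0)$ (i.e., $M = \Phi$) to obtain $q - \gamma p$, and $(a,b) = (0,1)$ (i.e., $M = I - \Phi$) to obtain $1 - q - \gamma(1-p)$. Combining these three achievable values with the fact that the coefficients are never simultaneously positive yields the claimed equality $E_\gamma(\zeta^q \Vert \zeta^p) = \max\{0,\, q - \gamma p,\, (1-q) - \gamma(1-p)\}$.

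There is no substantive obstacle: the twirling argument of \cref{lem:meas_hs_iso_symm} has already reduced the problem to a two-parameter scalar optimization, and the independence of $a$ and $b$ over $[0,1]^2$ follows directly from the orthogonality of $\Phi$ and $I - \Phi$. The only point requiring mild care is checking that the linear-program bounds are simultaneously achievable by a single valid measurement operator, which is immediate from the explicit parametrization $M = a\Phi + b(I-\Phi)$.
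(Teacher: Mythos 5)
Your proposal is correct and follows essentially the same route as the paper: both invoke \cref{lem:meas_hs_iso_symm} to reduce to the two scalar quantities $\operatorname{Tr}[M\Phi]$ and $\operatorname{Tr}[M(I-\Phi)]/(d^2-1)$, and both certify the three candidate values with the measurement choices $M=0$, $M=\Phi$, and $M=I-\Phi$. Your exact characterization of the feasible set as $[0,1]^2$ via $M=a\Phi+b(I-\Phi)$ is a slightly tidier way of packaging the paper's separate upper- and lower-bound arguments, but it is the same proof in substance.
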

\begin{IEEEproof}
   For $\gamma \ge 1$, at most one of the two quantities can be positive: $q-\gamma p$ or $1-q-\gamma (1-p)$. The equality in~\eqref{eq:meas_hs_iso_sym} leads to the following:
\begin{align}
    E_{\gamma}\!\left(\zeta^q_{AB}\Vert\zeta^p_{AB}\right) &= \sup_{0 \le M \le I} \operatorname{Tr}\!\left[M_{AB}\Phi_{AB}\right](q-\gamma p) + \frac{\operatorname{Tr}\!\left[M_{AB}\!\left(I_{AB}-\Phi_{AB}\right)\right]}{d^2-1}(1-q-\gamma(1-p))\label{eq:hs_iso_symm}\\
    &\le \sup_{0\le M\le I}\max\left\{\operatorname{Tr}\!\left[M_{AB}\Phi_{AB}\right](q-\gamma p), \frac{\operatorname{Tr}\!\left[M_{AB}\!\left(I_{AB}-\Phi_{AB}\right)\right]}{d^2-1}(1-q-\gamma(1-p))\right\}\\
    &\le \max\left\{0, q-\gamma p, 1-q-\gamma(1-p)\right\},\label{eq:hs_iso_ub}
\end{align}
where the first inequality follows by ignoring the negative term in~\eqref{eq:hs_iso_symm} and the last inequality follows from the fact that $0\le M_{AB}\le I_{AB}$.

To show that the inequality in~\eqref{eq:hs_iso_ub} is saturated, consider $\Phi_{AB}$ to be a specific choice for the measurement operator in~\eqref{eq:hs_iso_symm}, which implies $E_{\gamma}\!\left(\zeta^q_{AB}\Vert\zeta^p_{AB}\right)\ge q-\gamma p$. Furthermore, choosing $M_{AB} = I_{AB} - \Phi_{AB}$ leads to $E_{\gamma}\!\left(\zeta^q_{AB}\Vert\zeta^p_{AB}\right)\ge 1-q-\gamma(1-p)$, and choosing $M_{AB} = 0$ leads to $E_{\gamma}\!\left(\zeta^q_{AB}\Vert\zeta^p_{AB}\right)\ge 0$. Combining the three inequalities, we arrive at the following inequality:
\begin{equation}
    E_{\gamma}\!\left(\zeta^q_{AB}\Vert\zeta^p_{AB}\right)\ge \max\left\{0,q-\gamma p,1-q- \gamma(1-p)\right\},
\end{equation}
which completes the proof.
\end{IEEEproof}

\begin{proposition}[Measured Hockey-Stick Divergence for Isotropic States]
\label{prop:Measured_HS_Isotro}

Let $p,q \in [0,1]$.
     We have the following equality for $\gamma \geq 1$ and $\cM \in \left\{ \cM_{\operatorname{LO}^\star}, \cM_{\operatorname{1W-LOCC}}, \cM_{\operatorname{LOCC}}, \cM_{\operatorname{PPT}}\right\}$:
    \begin{align}
&E_\gamma^{\cM}(\zeta^q \Vert \zeta^p)=  \max\Big\{0, q-\gamma p+ \frac{(1-q)- \gamma (1-p)}{d+1}, 
   \frac{d}{d+1} \left((1-q) -\gamma (1-p) \right) \Big\}.
    \end{align}
\end{proposition}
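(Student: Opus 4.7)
The plan is to adapt the strategy used for Werner states (\cref{prop:Measured_HS_Werner}), with \cref{lem:meas_hs_iso_symm} playing the role of \cref{lem:meas_hs_wer_symm}. By that lemma, it suffices to maximize
\begin{equation}
c_1(q-\gamma p) + c_2\bigl(1-q-\gamma(1-p)\bigr),
\end{equation}
where $c_1 \coloneqq \operatorname{Tr}[M\Phi]$ and $c_2 \coloneqq \operatorname{Tr}[M(I-\Phi)]/(d^2-1)$ as $M$ ranges over the relevant measurement set.

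For the ``$\geq$'' inequality, I would exhibit three $\operatorname{LO}^\star$ operators that individually realize each argument of the max: $M=0$ yields $0$; the classical measurement $M = \sum_{i=1}^d |i\rangle\!\langle i|\otimes|i\rangle\!\langle i|$ produces $c_1=1$ and $c_2=1/(d+1)$ by direct computation, matching the middle term; and its complement $I - \sum_{i=1}^d |i\rangle\!\langle i|\otimes|i\rangle\!\langle i|$, which is equally $\operatorname{LO}^\star$, gives $c_1=0$ and $c_2=d/(d+1)$, matching the last term. This settles the lower bound simultaneously for all four measurement classes, since each of them contains $\operatorname{LO}^\star$.

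For the ``$\leq$'' direction, it is enough to handle $\cM = \cM_{\operatorname{PPT}}$, because the other three classes are contained in it. My plan is to use the interlacing identity $T_B \circ \widetilde{\mathcal{T}} = \mathcal{T}\circ T_B$, where $\mathcal{T}$ is the Werner twirl of \cref{lem:meas_hs_wer_symm}; this follows from the identity $T_B[(U\otimes \bar U)X(U^\dagger\otimes U^T)] = (U\otimes U)T_B(X)(U\otimes U)^\dagger$. Writing $\widetilde{\mathcal{T}}(M) = c_1\Phi + c_2(I-\Phi)$ and applying $T_B$ yields
\begin{equation}
T_B\!\left(\widetilde{\mathcal{T}}(M)\right) = \frac{c_1 + (d-1)c_2}{d}\,\Pi^{\operatorname{sym}} + \frac{(d+1)c_2 - c_1}{d}\,\Pi^{\operatorname{asym}}.
\end{equation}
Combined with $0 \leq \widetilde{\mathcal{T}}(M) \leq I$, the PPT condition reduces to $c_1, c_2 \in [0,1]$ together with $c_1 \leq (d+1)c_2$ and $(d+1)c_2 \leq c_1 + d$. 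A standard case analysis on the signs of $q-\gamma p$ and $1-q-\gamma(1-p)$ maximizes the linear objective over this polytope and recovers exactly the three candidate values in the stated maximum.

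The main subtlety I expect is the upper constraint $(d+1)c_2 \leq c_1 + d$ arising from the symmetric-subspace eigenvalue of $T_B(M)$: it is precisely what caps $c_2$ at $d/(d+1)$ when $c_1=0$ and thereby produces the nontrivial factor $d/(d+1)$ in the last term. Without this constraint one would erroneously recover the strictly larger ``all measurements'' value from \cref{prop:HS_Isotropic_general}. Finally, the sandwich $E^{\operatorname{LO}^\star}_\gamma \le E^{\operatorname{1W-LOCC}}_\gamma \le E^{\operatorname{LOCC}}_\gamma \le E^{\operatorname{PPT}}_\gamma$, together with matching upper and lower bounds, collapses all four divergences to the common value.
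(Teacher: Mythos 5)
Your proposal is correct and follows essentially the same route as the paper: the same three $\operatorname{LO}^\star$ operators for the lower bound, the same reduction via the isotropic twirl (\cref{lem:meas_hs_iso_symm}), the same PPT constraint $0 \le (d+1)c_2 - c_1 \le d$, and the same sandwich over the four measurement classes. The only cosmetic difference is that you extract that constraint from the eigenvalues of $T_B(\widetilde{\mathcal{T}}(M))$ via the interlacing identity $T_B\circ\widetilde{\mathcal{T}} = \mathcal{T}\circ T_B$, whereas the paper traces $T_B(M)$ directly against the antisymmetric projector; the resulting inequalities and case analysis are identical.
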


\begin{IEEEproof}
    \underline{Lower bound:}
 Recall that for $\gamma \geq 1$
\begin{equation}
    E_\gamma^{\operatorname{LO}^\star}(\rho \Vert \sigma) \coloneqq \sup_{M \in \operatorname{LO}^\star} \Tr\!\left[ M (\rho -\gamma \sigma)\right].
\end{equation}
Since the measurement operators $M=0$ and $I-M =I$, trivially belong to the set of local operators, we have
\begin{equation}
    E_\gamma^{\operatorname{LO}^\star}(\zeta^q \Vert \zeta^p) \geq 0 \label{eq:M_o_bound_2}.
\end{equation}

Observe that $M= \sum_{i=1}^d |i \rangle\!\langle i| \otimes |i \rangle\!\langle i|$  and $I-M= \sum_{i\neq j} |i \rangle\!\langle i | \otimes |j\rangle \! \langle j| =I- \sum_{i=1}^d |i \rangle\!\langle i| \otimes |i \rangle\!\langle i|$ belong to the set $\operatorname{LO}^\star$ operators. 
With the former, we arrive at the inequality 
\begin{align}
    E_\gamma^{\operatorname{LO}^\star}(\zeta^q \Vert \zeta^p) &\geq \Tr\!\left[\sum_{i=1}^d |i \rangle\!\langle i| \otimes |i \rangle\!\langle i| \left( (q-\gamma p) \Phi + \left( (1-q)- \gamma (1-p) \right) \Phi^\perp \right) \right] \\
    &= (q-\gamma p)\operatorname{Tr}\!\left[\left(\sum_{i=1}^d |i \rangle\!\langle i|_A \otimes |i \rangle\!\langle i|_B\right)\Phi_{AB}\right] + \left( (1-q)- \gamma (1-p) \right)\operatorname{Tr}\!\left[\left(\sum_{i=1}^d |i \rangle\!\langle i|_A \otimes |i \rangle\!\langle i|_B\right)\Phi^{\perp}_{AB}\right]\\
    &= (q-\gamma p) +\left( (1-q)- \gamma (1-p) \right) \frac{1}{d+1}, \label{eq:one_M_bound_2}
\end{align}
where the first equality followed by linearity of trace operator and by substituting~\eqref{eq:max_entang_ortho}.

Furthermore, with the choice $I- M= \sum_{i\neq j} |i \rangle\!\langle i | \otimes |j\rangle \! \langle j|$, we arrive at another lower bound as follows: 
\begin{align}
    E_\gamma^{\operatorname{LO}^\star}(\zeta^q \Vert \zeta^p) &\geq \Tr\!\left[\left( I- \sum_{i=1}^d |i \rangle\!\langle i| \otimes |i \rangle\!\langle i| \right)\left( (q-\gamma p) \Phi + \left( (1-q)- \gamma (1-p) \right) \Phi^\perp \right) \right] \\ 
    &= 1- \gamma -\left( (q-\gamma p) +\left( (1-q)- \gamma (1-p) \right) \frac{1}{d+1}\right) \\
    &= \frac{d}{d+1} \left( (1-q) -\gamma (1-p) \right)\label{eq:one_M_bound_3},
\end{align}
where the first equality follows from~\eqref{eq:one_M_bound_2}.

Combining~\eqref{eq:M_o_bound_2},\eqref{eq:one_M_bound_2} and \eqref{eq:one_M_bound_3}, we obtain that
\begin{equation}
     E_\gamma^{\operatorname{LO}^\star}(\zeta^q \Vert \zeta^p)  \geq   \max\left\{0, q-\gamma p+ \frac{1}{d+1}(1-q-\gamma(1-p)), \frac{d}{d+1} \left( (1-q) -\gamma (1-p) \right)\right\}. \label{eq:LO_bound_2_1}
\end{equation}

\underline{Upper bound:} Let us first consider the following inequalities that hold for every PPT measurement operator:
\begin{align}
    0 &\le T_B\!\left(M_{AB}\right) \le I_{AB}\\
    \implies 0 &\le \frac{1}{2}\operatorname{Tr}\!\left[T_B\!\left(M_{AB}\right)\!\left(I_{AB}-F_{AB}\right)\right]\le \frac{1}{2}\operatorname{Tr}\!\left[I_{AB}-F_{AB}\right]\\
    \implies 0&\le \frac{1}{2}\operatorname{Tr}\!\left[M_{AB}T_B\!\left(I_{AB}-F_{AB}\right)\right] \le \frac{1}{2}\operatorname{Tr}\!\left[I_{AB}-F_{AB}\right]\\
    \implies 0 &\le \frac{1}{2}\operatorname{Tr}\!\left[M_{AB}\!\left(I_{AB} - d\Phi_{AB}\right)\right] \le \frac{1}{2}(d^2-d)\\
    \implies 0&\le \operatorname{Tr}\!\left[M_{AB}\!\left(I_{AB}-\Phi_{AB} + (1-d)\Phi_{AB}\right)\right] \le d^2-d\\
    \implies 0& \le \operatorname{Tr}\!\left[M_{AB}\!\left(I_{AB}-\Phi_{AB}\right)\right] - (d-1)\operatorname{Tr}\!\left[M_{AB}\Phi_{AB}\right] \le d(d-1)\\
    \implies \frac{1}{d+1} &\le \frac{1}{d^2-1}\times \frac{\operatorname{Tr}\!\left[M_{AB}\!\left(I_{AB}-\Phi_{AB}\right)\right]}{\operatorname{Tr}\!\left[M_{AB}\Phi_{AB}\right]} \le \frac{1}{d+1} + \frac{d}{(d+1)\operatorname{Tr}\!\left[M_{AB}\Phi_{AB}\right]},\label{eq:meas_proj_ratio_iso}
\end{align}
where $F_{AB}$ is the swap operator defined just after~\eqref{eq:werner_extremes}. In the set of inequalities mentioned above, the second line follows from the fact that $\frac{1}{2}(I_{AB}-F_{AB})$ is a projector, the third line follows from the fact that the partial transpose is a self-adjoint map, the fourth line follows from the fact that the partial transpose of the swap operator is the unnormalized maximally entangled vector, and the last line follows by rearranging the terms.

Recall the equality in~\eqref{eq:meas_hs_iso_sym}, which leads to the following:
\begin{align}
    E_{\gamma}^{\operatorname{PPT}}\!\left(\zeta^q_{AB}\Vert\zeta^p_{AB}\right) &= \sup_{\substack{0\le M\le I,\\ 0\le T_B(M)\le I}} \operatorname{Tr}\!\left[M_{AB}\Phi_{AB}\right](q-\gamma p) + \frac{\operatorname{Tr}\!\left[M_{AB}\!\left(I_{AB}-\Phi_{AB}\right)\right]}{d^2-1}(1-q-\gamma (1-p))\\
    &= \sup_{\substack{0\le M\le I,\\ 0\le T_B(M)\le I}} \operatorname{Tr}\!\left[M_{AB}\Phi_{AB}\right]\!\left(q-\gamma p + \frac{\operatorname{Tr}\!\left[M_{AB}\!\left(I_{AB}-\Phi_{AB}\right)\right]}{(d^2-1)\operatorname{Tr}\!\left[M_{AB}\Phi_{AB}\right]}(1-q-\gamma (1-p))\right).\label{eq:PPT_meas_hs_iso_sym_ratio_eq}
\end{align}
Consider the case when $q-\gamma p \ge 0$, which implies that $1-q-\gamma (1-p)\le 0$. The inequality in~\eqref{eq:meas_proj_ratio_iso} yields the following inequality:
\begin{align}
    E_{\gamma}^{\operatorname{PPT}}\!\left(\zeta^q_{AB}\Vert\zeta^p_{AB}\right) &\le \sup_{\substack{0\le M\le I,\\ 0\le T_B(M)\le I}} \operatorname{Tr}\!\left[M_{AB}\Phi_{AB}\right]\!\left(q-\gamma p + \frac{1}{d+1}\!\left(1-q-\gamma (1-p)\right)\right)\\
    &\le \max\left\{q-\gamma p + \frac{1}{d+1}\!\left(1-q-\gamma (1-p)\right),0\right\},\label{eq:PPT_meas_hs_iso_q_ge_gp}
\end{align}
where the last inequality follows from the fact that $M_{AB}\le I_{AB}$ and $M=0$ is a valid choice.

Now consider the case when $1-q-\gamma (1-p) \ge  0$, which implies that $q-\gamma p \le 0$. The equality in~\eqref{eq:PPT_meas_hs_iso_sym_ratio_eq} and the inequality in~\eqref{eq:meas_proj_ratio_iso} yield the following inequality:
\begin{align}
    &E_{\gamma}^{\operatorname{PPT}}\!\left(\zeta^q_{AB}\Vert\zeta^p_{AB}\right)\notag \\ &\le \sup_{\substack{0\le M\le I,\\ 0\le T_B(M)\le I}} \operatorname{Tr}\!\left[M_{AB}\Phi_{AB}\right]\!\left(q-\gamma p + \left(\frac{1}{d+1} + \frac{d}{(d+1)\operatorname{Tr}\!\left[M_{AB}\Phi_{AB}\right]}\right)(1-q-\gamma (1-p))\right)\\
    &= \sup_{\substack{0\le M\le I,\\ 0\le T_B(M)\le I}} \operatorname{Tr}\!\left[M_{AB}\Phi_{AB}\right]\!\left(\left(1-\frac{1}{d+1}\right)(q-\gamma p) - \frac{1}{d+1}(\gamma - 1) + \frac{d}{(d+1)\operatorname{Tr}\!\left[M_{AB}\Phi_{AB}\right]}(1-q-\gamma (1-p))\right)\\
    &\le \sup_{\substack{0\le M\le I,\\ 0\le T_B(M)\le I}} \operatorname{Tr}\!\left[M_{AB}\Phi_{AB}\right]\!\left(\frac{d}{(d+1)\operatorname{Tr}\!\left[M_{AB}\Phi_{AB}\right]}(1-q-\gamma (1-p))\right)\\
    &= \frac{d}{d+1}(1-q-\gamma(1-p)),\label{eq:PPT_meas_hs_iso_q_le_gp}
\end{align}
where we arrived at the last inequality by ignoring the non-positive terms.

Finally, if both $q-\gamma p<0 $ and $1-q-\gamma (1-p)<0$, then the optimal choice is $M_{AB} = 0$, which leads to $E^{\operatorname{PPT}}_{\gamma}\!\left(\zeta^q_{AB}\Vert\zeta^p_{AB}\right) = 0$. Combining this fact with~\eqref{eq:PPT_meas_hs_iso_q_ge_gp} and~\eqref{eq:PPT_meas_hs_iso_q_le_gp} leads to the following inequality:
\begin{equation}
    E^{\operatorname{PPT}}_{\gamma}\!\left(\zeta^q_{AB}\Vert\zeta^p_{AB}\right) \le \max\left\{0, q-\gamma p + \frac{1}{d+1}(1-q-\gamma(1-p)), \frac{d}{d+1}(1-q-\gamma(1-p))\right\}.
\end{equation}
Recall that the right-hand side of the above inequality is a lower bound on $E^{\operatorname{LO}^{\star}}_{\gamma}\!\left(\zeta^q_{AB}\Vert\zeta^p_{AB}\right)$ in~\eqref{eq:LO_bound_2_1}. Since
\begin{equation}
    E^{\operatorname{LO}^{\star}}_{\gamma}\!\left(\zeta^q_{AB}\Vert\zeta^p_{AB}\right) \le E^{\operatorname{1W-LOCC}}_{\gamma}\!\left(\zeta^q_{AB}\Vert\zeta^p_{AB}\right) \le E^{\operatorname{LOCC}}_{\gamma}\!\left(\zeta^q_{AB}\Vert\zeta^p_{AB}\right) \le E^{\operatorname{PPT}}_{\gamma}\!\left(\zeta^q_{AB}\Vert\zeta^p_{AB}\right),
\end{equation}
we conclude that all of the above quantities are equal to each other, and their value is given by $\max\left\{0, q-\gamma p + \frac{1}{d+1}(1-q-\gamma(1-p)), \frac{d}{d+1}(1-q-\gamma(1-p))\right\}$.
\end{IEEEproof}

\begin{remark}[High Dimensions]
    When $d \to \infty$, we see that the quantity in~\cref{prop:Measured_HS_Isotro} is equal to the quantity in~\cref{prop:HS_Isotropic_general}. 
\end{remark}

\subsection{Channel Divergence with All Possible Measurements} \label{App:channel_div_all_meas}
Let $\bar{\mathcal{M}}$ denote the set of all measurements. Then,
\begin{equation}
    E_\gamma(\cP \Vert \cQ) \equiv E_\gamma^{\bar{\cM}}(\cP \Vert \cQ).
\end{equation}
\begin{proposition}[Properties of Hockey-Stick Divergence for Channels]
The hockey-stick divergence for channels satisfies the following properties:
\begin{enumerate}
   \item 
   Quasi Sub-Additivity: Let $\cP_i$ and $\cQ_i$ for $i\in \{0,1\}$ be channels such that $\cP_0$ and $\cQ_0$ are linear mappings from $A$ to $A'$ and $\cP_1$ and $\cQ_1$ are linear mappings from $A'$ to $B$.
   Also, let $\gamma_1,\gamma_2 \geq 1$. Then,
    \begin{align}
        & E_{\gamma_1 \gamma_2}(\cP_1 \circ \cP_0 \Vert \cQ_1 \circ \cQ_0) 
        \leq \min\big\{ E_{\gamma_1}(\cP_0 \Vert \cQ_0) + \gamma_1 E_{\gamma_2}(\cP_1 \Vert \cQ_1),  E_{\gamma_1}(\cP_1 \Vert \cQ_1) + \gamma_1 E_{\gamma_2}(\cP_0 \Vert \cQ_0)\big\}.
    \end{align}
\end{enumerate} 
\end{proposition}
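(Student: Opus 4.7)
The plan is to derive each of the two bounds in the minimum separately by applying the triangular inequality (Proposition~2, property~2) to an appropriately inserted intermediate state, and then invoking data processing (Proposition~2, property~1) twice to reduce each piece to a channel divergence. Since the hockey-stick divergence here is with respect to the set of all measurements, every quantum channel is measurement-compatible, so data processing applies without restriction.

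Fix any input state $\rho_{RA}$ with an arbitrary reference system $R$. For the first bound, insert the intermediate state $\tau \coloneqq (\operatorname{id}_R \otimes \cP_1)((\operatorname{id}_R \otimes \cQ_0)(\rho_{RA}))$ and apply the triangular inequality with parameters $\gamma_1,\gamma_2 \geq 1$ to obtain
\begin{align}
& E_{\gamma_1\gamma_2}\!\left( (\cP_1 \circ \cP_0)(\rho_{RA}) \,\Vert\, (\cQ_1 \circ \cQ_0)(\rho_{RA}) \right) \nonumber \\
&\quad \leq E_{\gamma_1}\!\left( \cP_1(\cP_0(\rho_{RA})) \,\Vert\, \cP_1(\cQ_0(\rho_{RA})) \right) + \gamma_1 E_{\gamma_2}\!\left( \cP_1(\cQ_0(\rho_{RA})) \,\Vert\, \cQ_1(\cQ_0(\rho_{RA})) \right).
\end{align}
The first term on the right is bounded by $E_{\gamma_1}(\cP_0(\rho_{RA}) \Vert \cQ_0(\rho_{RA}))$ by data processing under $\cP_1$, which is in turn at most $E_{\gamma_1}(\cP_0 \Vert \cQ_0)$. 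The second term is at most $E_{\gamma_2}(\cP_1 \Vert \cQ_1)$ by using $\cQ_0(\rho_{RA})$ as a feasible input in the supremum defining the channel divergence. Taking the supremum over $\rho_{RA}$ on the left-hand side yields
\begin{equation}
E_{\gamma_1\gamma_2}(\cP_1 \circ \cP_0 \Vert \cQ_1 \circ \cQ_0) \leq E_{\gamma_1}(\cP_0 \Vert \cQ_0) + \gamma_1 E_{\gamma_2}(\cP_1 \Vert \cQ_1).
\end{equation}

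For the second bound, repeat the argument symmetrically with the alternative intermediate state $\tau' \coloneqq (\operatorname{id}_R \otimes \cQ_1)((\operatorname{id}_R \otimes \cP_0)(\rho_{RA}))$. The triangular inequality then gives an upper bound whose first term is controlled by $E_{\gamma_1}(\cP_1 \Vert \cQ_1)$ (using $\cP_0(\rho_{RA})$ as the input state in the channel divergence) and whose second term is controlled by $E_{\gamma_2}(\cP_0 \Vert \cQ_0)$ via data processing under $\cQ_1$. Taking the supremum over $\rho_{RA}$ gives the second bound. Since both inequalities hold for the same left-hand side, the minimum of the two upper bounds is also an upper bound, which is exactly the claim.

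I do not anticipate any real obstacle here: both bounds are obtained by the same two-step pattern (triangle inequality plus data processing), and the only care needed is in tracking which intermediate channel is inserted so that the subsequent data-processing step reduces each term to a genuine channel divergence rather than a state divergence that depends on $\rho_{RA}$. The fact that data processing holds for arbitrary CPTP maps (since $\bar{\cM}$ is the set of all measurements and $\cN^\dagger$ is completely positive and unital for any channel $\cN$) is what makes this application clean.
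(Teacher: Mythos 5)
Your proposal is correct and follows essentially the same route as the paper's proof: for each bound, insert the hybrid intermediate state ($\cP_1\circ\cQ_0(\rho_{RA})$ or $\cQ_1\circ\cP_0(\rho_{RA})$), apply the triangle inequality, data-process the term where both arguments share an outer channel, and absorb the remaining term into the channel divergence by treating the processed input as a feasible state. The only cosmetic difference is that you invoke the measured-divergence properties from Proposition~\ref{prop:properties_m_HS} specialized to all measurements, whereas the paper cites the corresponding facts for the unmeasured hockey-stick divergence from the literature; the logic is identical.
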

\begin{IEEEproof}
    \underline{Quasi sub-additivity:}
     Let $\cP_i$ and $\cQ_i$ for $i\in \{0,1\}$ be channels such that $\cP_0$ and $\cQ_0$ are linear mappings from $A$ to $A'$ and $\cP_1$ and $\cQ_1$ are linear mappings from $A'$ to $B$. Also, $R$ is a reference system isomorphic to $A$.

    Consider that
    \begin{align}
        &E_{\gamma_1 \gamma_2}(\cP_1 \circ \cP_0 \Vert \cQ_1 \circ \cQ_0) \notag \\
        &=\sup_{\rho_{RA}} E_{\gamma_1 \gamma_2}\!\left(\cP_1 \circ \cP_0 (\rho_{RA}) \Vert \cQ_1 \circ \cQ_0(\rho_{RA})  \right) \\
        &\leq \sup_{\rho_{RA}} E_{\gamma_1}\!\left(\cP_1 \circ \cP_0 (\rho_{RA}) \Vert \cP_1 \circ \cQ_0(\rho_{RA})  \right) + \gamma_1 E_{\gamma_2}\!\left(\cP_1 \circ \cQ_0 (\rho_{RA}) \Vert \cQ_1 \circ \cQ_0(\rho_{RA})  \right)\\
        &\leq \sup_{\rho_{RA}} E_{\gamma_1}\!\left(\cP_0 (\rho_{RA}) \Vert \cQ_0(\rho_{RA})  \right) + \gamma_1 E_{\gamma_2}\!\left(\cP_1 \circ \cQ_0 (\rho_{RA}) \Vert \cQ_1 \circ \cQ_0(\rho_{RA})  \right) \\
        &\leq \sup_{\rho_{RA}} E_{\gamma_1}\!\left(\cP_0 (\rho_{RA}) \Vert \cQ_0(\rho_{RA})  \right) + \gamma_1 \sup_{\rho_{RA}} E_{\gamma_2}\!\left(\cP_1 \circ \cQ_0 (\rho_{RA}) \Vert \cQ_1 \circ \cQ_0(\rho_{RA})  \right) \\
        &= E_{\gamma_1}(\cP_0 \Vert \cQ_0)+ \gamma_1 \sup_{\rho_{RA}} E_{\gamma_2}\!\left(\cP_1 \circ \cQ_0 (\rho_{RA}) \Vert \cQ_1 \circ \cQ_0(\rho_{RA})  \right) \\
        & \le E_{\gamma_1}(\cP_0 \Vert \cQ_0)+ \gamma_1 \sup_{\sigma_{RA'}} E_{\gamma_2}\!\left(\cP_1 (\sigma_{RA'}) \Vert \cQ_1 (\sigma_{RA'})  \right) \\
        &= E_{\gamma_1}(\cP_0 \Vert \cQ_0) + \gamma_1 E_{\gamma_2}(\cP_1 \Vert \cQ_1),
    \end{align}
    where the first equality follows from the triangular property of the hockey-stick divergence~\cite[Eq~II.16]{hirche2023quantum}; second inequality from  the data processing of the hockey-stick divergence~\cite[Lemma~4]{sharma2012strong}. 

The second expression can be obtained by choosing $\cQ_1 \circ \cP_0 (\rho_{RA})$ instead of $\cP_1 \circ \cQ_0 (\rho_{RA})$ in the second equality above and proceeding with the similar decompositions and arguments.
\end{IEEEproof}

\medskip
We show that the hockey-stick channel divergence and the PPT measured hockey-stick channel divergence are SDP computable. 
\begin{proposition}[SDP for $E_\gamma$ Channel Divergence] \label{prop:E_gamma_channel}
    Let $\cP$ and $\cQ$ be two quantum channels, and let $\gamma \geq 1$. Then, the channel divergence $E_\gamma(\cP \Vert \cQ)$ is equivalent to the following expression:
    \begin{align}
        &E_\gamma(\cP \Vert \cQ) =
        \sup_{ \Omega_{RB} \geq 0, \rho_{R} \geq 0} \left\{ 
     \begin{array}
[c]{c}
\Tr\!\left[ \Omega_{RB} (\Gamma_{RB}^\cP - \gamma \Gamma_{RB}^\cQ )\right] :  
\Tr[ \rho_R]=1, \ \Omega_{RB} \leq \rho_R \otimes I_B 
\end{array}  
        \right\},
    \end{align}
    where $\Gamma_{RB}^\cN$ is the Choi operator of the channel $\cN_{A \to B}$.

    Furthermore, its dual expression evaluates to the following:
    \begin{equation}
        E_\gamma(\cP \Vert \cQ)= \inf_{\mu \geq 0, Z_{RB} \geq 0} \left\{ 
        \begin{array}
[c]{c}
        \mu : 
        \Gamma_{RB}^\cP - \gamma \Gamma_{RB}^\cQ \leq Z_{RB}, \  \mu I_{RB} \geq \Tr_B[ Z_{RB}] 
        \end{array}
        \right\}.
    \end{equation}
\end{proposition}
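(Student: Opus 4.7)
\medskip
\textbf{Proof proposal for \cref{prop:E_gamma_channel}.}
The plan is to massage the definition \eqref{eq:def_channel_HS} into the primal SDP by a change of variables that eliminates the pure state in favor of its reduced density operator on $R$, and then to derive the dual via the standard SDP template of \cite[Definition~2.26]{KW20}. First, using Schmidt decomposition together with the convexity of $E_\gamma$ (the unrestricted case of \cref{prop:properties_m_HS}), the optimization in \eqref{eq:def_channel_HS} can be restricted to pure states $\psi_{RA}$ with $R \cong A$. Every such pure state may be written as $|\psi\rangle_{RA} = (\sqrt{\rho_R}\otimes I_A)|\Gamma\rangle_{RA}$, where $|\Gamma\rangle_{RA}$ is the unnormalized maximally entangled vector as in~\eqref{eq:gamma_vec_defn} and $\rho_R = \Tr_A[\psi_{RA}]$ is a state. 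A direct computation then gives
\begin{equation}
\cN_{A\to B}(\psi_{RA}) = (\sqrt{\rho_R}\otimes I_B)\,\Gamma^{\cN}_{RB}\,(\sqrt{\rho_R}\otimes I_B),
\end{equation}
for any channel $\cN_{A\to B}$, so that after expanding the measured hockey-stick divergence using~\eqref{eq:hockey_stick_define} with $\gamma\geq 1$ (whence $(1-\gamma)_+ = 0$),
\begin{equation}
E_\gamma(\cP\Vert\cQ) = \sup_{\rho_R\in\cD(\cH_R)}\ \sup_{0\leq M_{RB}\leq I_{RB}} \Tr\!\left[(\sqrt{\rho_R}\otimes I_B)\,M_{RB}\,(\sqrt{\rho_R}\otimes I_B)\,(\Gamma^{\cP}_{RB}-\gamma\Gamma^{\cQ}_{RB})\right].
\end{equation}

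Next, I substitute $\Omega_{RB} \coloneqq (\sqrt{\rho_R}\otimes I_B)M_{RB}(\sqrt{\rho_R}\otimes I_B)$. The constraint $0\leq M\leq I$ becomes $0\leq \Omega_{RB}\leq \rho_R\otimes I_B$ by conjugating $0\leq I-M$ by $\sqrt{\rho_R}\otimes I_B$; conversely, for any feasible $\Omega_{RB}$ with $\rho_R$ full-rank, one recovers $M_{RB} = (\rho_R^{-1/2}\otimes I_B)\Omega_{RB}(\rho_R^{-1/2}\otimes I_B)$, extending to rank-deficient $\rho_R$ by a standard limiting argument (replacing $\rho_R$ with $(1-\eta)\rho_R+\eta I_R/d$ and sending $\eta\to 0^+$, using continuity of the objective). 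This directly produces the claimed primal SDP.

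For the dual, I write the primal in the standard form $\sup_{X\geq 0}\{\Tr[AX]:\Phi(X)\leq B\}$ with block variable $X = \rho_R\oplus \Omega_{RB}$, block objective $A = 0\oplus(\Gamma^{\cP}_{RB}-\gamma\Gamma^{\cQ}_{RB})$, and map $\Phi(X) = \Tr[\rho_R]\oplus(\Omega_{RB}-\rho_R\otimes I_B)$ with $B = 1\oplus 0$; here I replace the equality $\Tr[\rho_R]=1$ by $\Tr[\rho_R]\leq 1$, justified by noting that enlarging $\rho_R$ only relaxes the second constraint and hence the supremum is attained on the boundary. Computing $\Phi^\dagger$ by the identity $\Tr[Y\Phi(X)] = \Tr[X\Phi^\dagger(Y)]$ with $Y = \mu\oplus Z_{RB}$ yields $\Phi^\dagger(Y) = (\mu I_R - \Tr_B[Z_{RB}])\oplus Z_{RB}$, so that the dual $\inf_{Y\geq 0}\{\Tr[BY]:\Phi^\dagger(Y)\geq A\}$ is exactly the stated one. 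Strong duality follows from Slater's condition: $\rho_R = I_R/d$ and $\Omega_{RB} = (1-\eta)\rho_R\otimes I_B$ for small $\eta>0$ is strictly feasible for the primal (equivalently, one can strictly satisfy the dual by $Z_{RB} = (\Gamma^{\cP}_{RB}-\gamma\Gamma^{\cQ}_{RB})_+ + \eta I_{RB}$ and $\mu$ sufficiently large). The main subtlety I expect is the rank-deficient case in the change of variables, which requires the limiting argument above and a justification that the supremum over $\rho_R$ can be taken over full-rank states without loss of generality; everything else is a mechanical application of SDP duality.
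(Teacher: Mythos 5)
Your overall route is the same as the paper's: reduce to pure states via convexity and the Schmidt decomposition, pull the purification through the channel so that the objective is written against the Choi operators, change variables to $\Omega_{RB}$ and $\rho_R$, and then obtain the dual by standard SDP duality with a Slater check. The dual derivation, the treatment of the rank-deficient case by perturbation (the paper instead invokes density of pure states with $\phi_R>0$), and the strong-duality argument are all fine; your dual strictly feasible point $Z_{RB}=(\Gamma^{\cP}_{RB}-\gamma\Gamma^{\cQ}_{RB})_{+}+\eta I_{RB}$ is in fact cleaner than the paper's choice, which as written need not be positive semi-definite.

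There is, however, one step that is false as stated: \emph{not} every pure state $\psi_{RA}$ with $R\cong A$ can be written as $(\sqrt{\rho_R}\otimes I_A)|\Gamma\rangle_{RA}$ with $\rho_R=\Tr_A[\psi_{RA}]$ (for instance, $|0\rangle_R|1\rangle_A$ has $\rho_R=|0\rangle\!\langle 0|$ but $(\sqrt{\rho_R}\otimes I)|\Gamma\rangle=|0\rangle|0\rangle$). The general form is $(X_R\otimes I_A)|\Gamma\rangle_{RA}$ with $X_R$ an arbitrary operator satisfying $\Tr[X_R^\dagger X_R]=1$, and only $X_R X_R^\dagger$ equals the reduced state; $X_R=\sqrt{\rho_R}\,V$ for some unitary $V$. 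Because of this, your argument establishes that the supremum over the \emph{restricted} family of pure states equals the SDP value, but the inequality ``supremum over all pure states $\le$ SDP value'' is not justified by your reasoning. The fix is short and is exactly what the paper does: work with a general $X_R$, observe that $\cN_{A\to B}\bigl((X_R\otimes I)\Gamma_{RA}(X_R^\dagger\otimes I)\bigr)=(X_R\otimes I_B)\Gamma^{\cN}_{RB}(X_R^\dagger\otimes I_B)$, and set $\Omega_{RB}=(X_R^\dagger\otimes I_B)M_{RB}(X_R\otimes I_B)$ and $\rho_R=X_R^\dagger X_R$, so that every pure state (not just the positive semi-definite ones) maps to a feasible point of the primal SDP. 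Equivalently, you can note that the pure states you omit differ from the included ones by a unitary on $A$, which after the transpose trick amounts to replacing $\rho_R$ by $\bar{V}\rho_R V^{T}$ and is therefore already covered by the optimization over $\rho_R$. With that one-line repair your proof is correct and essentially identical to the paper's.
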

\begin{IEEEproof}
By the joint-convexity of hockey-stick divergence~\cite[Proposition~II.5]{hirche2023quantum} together with the Schmidt decomposition, we have that the supremum in the channel divergence is achieved by pure states.
   Then, together with the variational form of hockey-stick divergence, we have that
    \begin{equation}
E_\gamma\!\left(\cP \Vert \cQ \right) = \sup_{\phi_{RA} \in \cD}  \sup_{0\leq M_{RB} \leq I} \Tr\!\left[ M_{RB} \left(\cP_{A \to B}(\phi_{RA})- \gamma \cQ_{A \to B}(\phi_{RA})\right)\right],
    \end{equation}
where $\phi_{RA}$ is a pure state with the system $R$ isomorphic to system $A$.

We can rewrite the above as follows:
\begin{equation}
E_\gamma\!\left(\cP \Vert \cQ \right) = \sup_{\substack{\phi_{RA} \geq 0, \\ M_{RB} \geq 0}}  \left\{\Tr\!\left[ M_{RB} \left(\cP_{A \to B}(\phi_{RA})- \gamma \cQ_{A \to B}(\phi_{RA})\right)\right] : \begin{aligned}
   & \Tr[\phi_{RA}]= 1, \ \Tr[\phi_{RA}^2]=1, \\
   & M_{RB} \leq I  
\end{aligned}\right\},
    \end{equation}
where $\phi_{RA}$ is a pure bipartitie state that satisfies $\Tr[\phi_{RA}]= 1, \ \Tr[\phi_{RA}^2]=1,  \phi_{RA} \geq 0$.
Note also that is is equivalent to 
\begin{equation}
E_\gamma\!\left(\cP \Vert \cQ \right) = \sup_{\substack{\phi_{RA} \geq 0, \\ M_{RB} \geq 0}}  \left\{\Tr\!\left[ M_{RB} \left(\cP_{A \to B}(\phi_{RA})- \gamma \cQ_{A \to B}(\phi_{RA})\right)\right] : \begin{aligned}
   & \Tr[\phi_{RA}]= 1, \ \Tr[\phi_{RA}^2]=1, \\
   & M_{RB} \leq I, \ \phi_R > 0 
\end{aligned}\right\}, \label{eq:E_ga_rewritten_pure}
    \end{equation}
due to the fact that the set of pure states with reduced state $\phi_R$ positive definite is dense in the set of all pure states. Then, any such pure state can be written as 
\begin{equation} \label{eq:pure_state_with_X_R}
    \phi_{RA} = X_R \Gamma_{RA} X_R^\dag 
\end{equation}
for some linear operator $X$ such that $\Tr [X_R^\dag X_R] = 1$ and $|X_R| > 0$, where $\Gamma_{RA}$ is the unnormalized  maximally entangled operator defined just after~\eqref{eq:gamma_vec_defn}.

Using this, we find that the objective function can be rewritten as
\begin{align}
    &\Tr\!\left[ M_{RB} \left(\cP_{A \to B}(\phi_{RA})- \gamma \cQ_{A \to B}(\phi_{RA})\right)\right] \notag \\ 
    &= \Tr\!\left[ M_{RB} \left(\cP_{A \to B}- \gamma \cQ_{A \to B}\right)(X_R \Gamma_{RA} X_R^\dag)\right] \\
    &=\Tr\!\left[ X_R^\dag M_{RB} X_R \left(\cP_{A \to B}- \gamma \cQ_{A \to B} \right)(\Gamma_{RA} )\right] \\ 
    &=\Tr\!\left[ X_R^\dag M_{RB} X_R \left( \Gamma_{RB}^\cP- \gamma \Gamma_{RB}^\cQ \right)\right]. \label{eq:Objective_fun}
\end{align}

Also, the following equivalence holds:
\begin{equation}
    0 \leq M_{RB} \leq I_{RB} \Longleftrightarrow 0 \leq X_R^\dag M_{RB} X_R \leq X_R^\dag X_R \otimes I_B. 
\end{equation}
With that we choose $\Omega_{RB} \coloneqq X_R^\dag M_{RB} X_R $, $\rho_R \coloneqq X_R^\dag X_R$ since we have $\Tr[X_R^\dag X_R]=1$.

Using the substitutions selected along with~\eqref{eq:E_ga_rewritten_pure} and~\eqref{eq:Objective_fun}, we arrive at 
\begin{equation}
     E_\gamma(\cP \Vert \cQ) =\sup_{ \Omega_{RB} \geq 0, \rho_{R} \geq 0} \left\{ \Tr\!\left[ \Omega_{RB} (\Gamma_{RB}^\cP - \gamma \Gamma_{RB}^\cQ )\right] : \Tr[ \rho_R]=1, \ \Omega_{RB} \leq \rho_R \otimes I_B \right\}.
\end{equation}
This completes the proof of the primal formulation.

To obtain the dual representation, consider 
\begin{align}
  &E_\gamma(\cP \Vert \cQ) \\ \notag
  &=\sup_{ \Omega_{RB} \geq 0, \rho_{R} \geq 0} \left\{ \Tr\!\left[ \Omega_{RB} (\Gamma_{RB}^\cP - \gamma \Gamma_{RB}^\cQ )\right] + \inf_{\mu \in \mathbb{R}, Z_{RB} \geq 0} \mu(1-\Tr[ \rho_R]) + \Tr\!\left[Z_{RB} (\rho_R \otimes I_B-\Omega_{RB}) \right]\right\}  \\
 &=\sup_{ \Omega_{RB} \geq 0, \rho_{R} \geq 0}  \inf_{\mu \in \mathbb{R}, Z_{RB} \geq 0}\left\{ \Tr\!\left[ \Omega_{RB} (\Gamma_{RB}^\cP - \gamma \Gamma_{RB}^\cQ )\right] + \mu(1-\Tr[ \rho_R]) + \Tr\!\left[Z_{RB} (\rho_R \otimes I_B-\Omega_{RB}) \right]\right\} \\
 &\leq  \inf_{\mu \in \mathbb{R}, Z_{RB} \geq 0} \sup_{ \Omega_{RB} \geq 0, \rho_{R} \geq 0} \left\{ \Tr\!\left[ \Omega_{RB} (\Gamma_{RB}^\cP - \gamma \Gamma_{RB}^\cQ )\right] + \mu(1-\Tr[ \rho_R]) + \Tr\!\left[Z_{RB} (\rho_R \otimes I_B-\Omega_{RB}) \right]\right\}\\
 &=\inf_{\mu \in \mathbb{R}, Z_{RB} \geq 0} \sup_{ \Omega_{RB} \geq 0, \rho_{R} \geq 0} \left\{ \mu + \Tr\!\left[ \Omega_{RB} (\Gamma_{RB}^\cP - \gamma \Gamma_{RB}^\cQ - Z_{RB})\right] + \Tr\!\left[\rho_R\left( -\mu I +\Tr_B[Z_{RB}]\right)\right]
 \right\} \\
 &= \inf_{\mu \in \mathbb{R}, Z_{RB} \geq 0} \left\{\mu + \sup_{ \Omega_{RB} \geq 0, \rho_{R} \geq 0}  \Tr\!\left[ \Omega_{RB} (\Gamma_{RB}^\cP - \gamma \Gamma_{RB}^\cQ - Z_{RB})\right] + \Tr\!\left[\rho_R\left( -\mu I +\Tr_B[Z_{RB}]\right)\right] \right\}\\
 &= \inf_{\mu \in \mathbb{R}, Z_{RB} \geq 0} \left\{\mu : \Gamma_{RB}^\cP - \gamma \Gamma_{RB}^\cQ  \leq Z_{RB}, \ \mu I \geq \Tr_B[Z_{RB}]\right\} \\
 &=\inf_{\mu \geq 0, Z_{RB} \geq 0} \left\{\mu : \Gamma_{RB}^\cP - \gamma \Gamma_{RB}^\cQ  \leq Z_{RB}, \ \mu I \geq \Tr_B[Z_{RB}]\right\},
\end{align}
where the last inequality follows since $Z_{RB} \geq 0$ and then $\mu I \geq \Tr_B[Z_{RB}]$ holds only when $\mu \geq 0$.

Now, what is remaining is to show that the strong duality holds. To see this, choose  $Z_{RB} =\Gamma_{RB}^\cP - \gamma \Gamma_{RB}^\cQ +\delta I_{RB}$ and  $\mu$ such that $ \mu I_R = \Tr_B[Z_{RB}] +\delta I_R $ together form a strictly feasible point for all $\delta >0$ and a feasible point for the primal (i.e., supremum formulation) is $\rho_R=\pi_R$, which is the maximally mixed state (in the Hilbert space $\cH_R$) and $\Omega_{RB}= \pi_R \otimes I_B$.
This concludes the proof.
\end{IEEEproof}

\begin{proposition}[Channel Divergence for Depolarizing Channels]\label{prop:all_meas_Depolarized_channel}
    Let $p,q \in [0,1]$ and $\gamma \geq 1$. We have that 
    \begin{align}
       & E_\gamma\!\left( \cA_{\operatorname{Dep}}^q \Vert \cA_{\operatorname{Dep}}^p \right)=
       \max \left\{  0, (1-q) -\gamma (1-p) + \frac{q-\gamma p}{d^2}, q-\gamma p -\frac{q-\gamma p}{d^2} \right\},
    \end{align}
    where $\cA_{\operatorname{Dep}}^p$ is the depolarizing channel with parameter $p$ 
    and $d$ is the dimension of the input space of the channel $\cA_{\operatorname{Dep}}^p$.
\end{proposition}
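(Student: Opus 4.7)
My plan is to reduce the channel divergence to a hockey-stick divergence between Choi states (which turn out to be isotropic) and then invoke \cref{prop:HS_Isotropic_general}. The depolarizing channel is jointly covariant with respect to the unitary group: $\cA_{\operatorname{Dep}}^p(U \rho U^\dagger) = U\cA_{\operatorname{Dep}}^p(\rho)U^\dagger$ for every unitary $U$, since $U \pi U^\dagger = \pi$, and the defining representation of the unitary group on system $A$ is irreducible. I would then mirror the proof of \cref{prop:PPT_measured_joint_covariance} in the all-measurements setting: combining the unitary invariance of $E_\gamma$ (inherited from data processing under CPTP maps) with its joint convexity, the supremum over input states $\psi_{RA}$ in the definition of the channel divergence can be restricted to states whose reduced state $\psi_A$ is fixed by the twirl $\omega_A \mapsto \int dU\, U \omega_A U^\dagger$. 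By Schur's lemma this forces $\psi_A = \pi_A$, and the only such pure state (up to a local unitary on $R$) is the maximally entangled state $\Phi_{RA}$. Hence $E_\gamma(\cA_{\operatorname{Dep}}^q \Vert \cA_{\operatorname{Dep}}^p) = E_\gamma((\cA_{\operatorname{Dep}}^q \otimes \operatorname{id})(\Phi_{RA}) \Vert (\cA_{\operatorname{Dep}}^p \otimes \operatorname{id})(\Phi_{RA}))$.

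Next I would compute the Choi state explicitly. Writing $\cA_{\operatorname{Dep}}^p(\rho) = (1-p)\rho + p\pi$ and using the decomposition $I_{RA} = \Phi_{RA} + (d^2-1)\Phi^{\perp}_{RA}$, one sees that the Choi state equals the isotropic state $\zeta^{p''}$ with $p'' = 1 - p(d^2-1)/d^2$. Applying \cref{prop:HS_Isotropic_general} gives $E_\gamma(\zeta^{q''} \Vert \zeta^{p''}) = \max\{0,\, q''-\gamma p'',\, (1-q'')-\gamma(1-p'')\}$, and a short algebraic simplification yields $q''-\gamma p'' = (1-q)-\gamma(1-p) + (q-\gamma p)/d^2$ and $(1-q'')-\gamma(1-p'') = (q-\gamma p) - (q-\gamma p)/d^2$, recovering the claimed formula.

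The main obstacle is justifying the Choi-state reduction for the all-measurements variant, since the paper explicitly records such a reduction only for the PPT-measured divergence (\cref{prop:PPT_measured_joint_covariance}). However, the only ingredients of that proof are data processing (under PPT-preserving channels) and joint convexity; for $E_\gamma$ these are replaced by data processing under all CPTP maps and joint convexity (both standard), so the argument transfers essentially verbatim. The remaining steps are direct algebra together with the already established isotropic-state formula.
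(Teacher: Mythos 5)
Your proposal is correct and follows essentially the same route as the paper: the paper also reduces $E_\gamma\!\left( \cA_{\operatorname{Dep}}^q \Vert \cA_{\operatorname{Dep}}^p \right)$ to the divergence between the normalized Choi states (citing the generalized-divergence/joint-covariance result of \cite[Proposition~7.84]{KW20} rather than re-deriving the twirling argument as you do), identifies these Choi states as the isotropic states $\zeta^{\eta'}$ and $\zeta^{\eta}$ with $\eta = 1-p+p/d^2$ (your $p''$), and concludes via \cref{prop:HS_Isotropic_general} with the same algebraic simplification.
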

\begin{IEEEproof}
   Since $E_\gamma$ satisfies data-processing, it is a generalized divergence. Also, with the fact that depolarizing channels are jointly-covariant, using~\cite[Proposition~7.84]{KW20}, we have that 
\begin{align}
      E_\gamma\!\left( \cA_{\operatorname{Dep}}^q \Vert \cA_{\operatorname{Dep}}^p \right) = E_\gamma\!\left( \frac{\Gamma_{RB}^{\cA_{\operatorname{Dep}}^q}}{d} \middle \Vert \frac{\Gamma_{RB}^{\cA_{\operatorname{Dep}}^p}}{d} \right),
\end{align}
where $\Gamma_{RB}^{\cA_{\operatorname{Dep}}^p}$ is the unnormalized Choi-state of the channel $\cA_{\operatorname{Dep}}^p$.
By using the fact that 
\begin{equation}
\Gamma_{RB}^{\cA_{\operatorname{Dep}}^p} = (1-p)\Gamma_{RA} + p \frac{I_{d^2}}{d},
\end{equation}
consider 
\begin{align}
\frac{\Gamma_{RB}^{\cA_{\operatorname{Dep}}^p}}{d} &= (1-p) \frac{\Gamma_{RA}}{d} + \frac{p}{d^2} I_{d^2} \\ 
&= (1-p) \Phi + \frac{p}{d^2} \left( I_{d^2} - \Phi + \Phi \right) \\
&= \left( 1-p + \frac{p}{d^2}\right) \Phi + \frac{p}{d^2} (d^2-1) \frac{(I_{d^2} - \Phi)}{d^2-1} \\
&= \zeta^\eta \label{eq:iso_equi_depo}
\end{align}
where $ \eta \coloneqq 1-p + \frac{p}{d^2}$ in~\eqref{eq:isotropc_state}.

With that, by choosing $\eta' \coloneqq  1-q + \frac{q}{d^2}$, we arrive at 
\begin{equation}
  E_\gamma\!\left( \cA_{\operatorname{Dep}}^q \Vert \cA_{\operatorname{Dep}}^p \right)= E_\gamma\!\left( \zeta^{\eta'} \middle \Vert \zeta^\eta \right).  
\end{equation}
Then, we conclude the proof by applying~\cref{prop:HS_Isotropic_general}.
\end{IEEEproof}

\subsection{Proof of~\cref{prop:channel_div_PPT}} \label{App:channel_div_PPT_proof}

    The proof of the supremum formulation (primal) follows similar to the proof of~\cref{prop:E_gamma_channel} with the added constraints as given below. 
    For PPT measurements, we require 
    \begin{equation}
        0 \leq T_B(M_{RB}) \leq I.
    \end{equation}
This is equivalent to enforcing 
\begin{equation}
    0 \leq X_R^\dag T_B(M_{RB}) X_R \leq X_R^\dag X_R \otimes I_B \quad \Longleftrightarrow \quad 0 \leq T_B(\Omega_{RB}) \leq \rho_R \otimes I_B,
\end{equation}
with the choices $\Omega_{RB} \coloneqq X_R^\dag M_{RB} X_R $ and $\rho_R \coloneqq X_R^\dag X_R$.

For the derivation of the dual, consider that
\begin{align}
  &E_\gamma^{\operatorname{PPT}}(\cP \Vert \cQ) \\ \notag
  &=\sup_{ \Omega_{RB} \geq 0, \rho_{R} \geq 0} \left\{ \Tr\!\left[ \Omega_{RB} (\Gamma_{RB}^\cP - \gamma \Gamma_{RB}^\cQ )\right] + \inf_{\substack{\mu \in \mathbb{R}, Z_{RB} \geq 0 \\ L_{RB}, Y_{RB} \geq 0}} \left\{ \begin{aligned}
      &\mu(1-\Tr[ \rho_R]) + \Tr\!\left[Z_{RB} (\rho_R \otimes I_B-\Omega_{RB}) \right]  \\
  & +\Tr\!\left[ L_{RB} T_B(\Omega_{RB}) \right] \\&+\Tr\!\left[Y_{RB} \left(\rho_R \otimes I_B-T_B(\Omega_{RB})\right) \right] 
  \end{aligned} \right\}\right\}  \\
 &=\sup_{ \Omega_{RB} \geq 0, \rho_{R} \geq 0} \inf_{\substack{\mu \in \mathbb{R}, Z_{RB} \geq 0 \\ L_{RB}, Y_{RB} \geq 0}} \left\{ \begin{aligned} &\Tr\!\left[ \Omega_{RB} (\Gamma_{RB}^\cP - \gamma \Gamma_{RB}^\cQ )\right] + \mu(1-\Tr[ \rho_R]) + \Tr\!\left[Z_{RB} (\rho_R \otimes I_B-\Omega_{RB}) \right] \\ & +\Tr\!\left[ L_{RB} T_B(\Omega_{RB}) \right] +\Tr\!\left[Y_{RB} \left(\rho_R \otimes I_B-T_B(\Omega_{RB})\right) \right] 
 \end{aligned}
 \right\} \\
 &\leq \inf_{\substack{\mu \in \mathbb{R}, Z_{RB} \geq 0 \\ L_{RB}, Y_{RB} \geq 0}}\sup_{ \Omega_{RB} \geq 0, \rho_{R} \geq 0}  \left\{ \begin{aligned} &\Tr\!\left[ \Omega_{RB} (\Gamma_{RB}^\cP - \gamma \Gamma_{RB}^\cQ )\right] + \mu(1-\Tr[ \rho_R]) + \Tr\!\left[Z_{RB} (\rho_R \otimes I_B-\Omega_{RB}) \right] \\ & +\Tr\!\left[ L_{RB} T_B(\Omega_{RB}) \right] +\Tr\!\left[Y_{RB} \left(\rho_R \otimes I_B-T_B(\Omega_{RB})\right) \right] 
 \end{aligned}
 \right\}\\
 &= \inf_{\substack{\mu \in \mathbb{R}, Z_{RB} \geq 0 \\ L_{RB}, Y_{RB} \geq 0}}\left\{\mu + \sup_{ \Omega_{RB} \geq 0, \rho_{R} \geq 0} \begin{aligned}
  &\Tr\!\left[ \Omega_{RB} (\Gamma_{RB}^\cP - \gamma \Gamma_{RB}^\cQ - Z_{RB} +T_B(L_{RB}) -T_B(Y_{RB}))\right] \\  &+ \Tr\!\left[\rho_R\left( -\mu I +\Tr_B[Z_{RB}] +\Tr_B[Y_{RB}]\right)\right]  
 \end{aligned} \right\}\\
 &= \inf_{\substack{\mu \in \mathbb{R}, Z_{RB} \geq 0 \\ L_{RB}, Y_{RB} \geq 0}} \left\{\mu : \Gamma_{RB}^\cP - \gamma \Gamma_{RB}^\cQ  \leq Z_{RB} +T_B(Y_{RB})-T_B(L_{RB}), \ \mu I \geq \Tr_B[Z_{RB}]+ \Tr_B[Y_{RB}]\right\} \\
 &\inf_{\substack{\mu \geq 0, Z_{RB} \geq 0 \\ L_{RB}, Y_{RB} \geq 0}} \left\{\mu : \Gamma_{RB}^\cP - \gamma \Gamma_{RB}^\cQ  \leq Z_{RB} +T_B(Y_{RB})-T_B(L_{RB}), \ \mu I \geq \Tr_B[Z_{RB}]+ \Tr_B[Y_{RB}]\right\},
\end{align}
where the last inequality follows since $Z_{RB}, Y_{RB} \geq 0$ and then $\mu I \geq \Tr_B[Z_{RB}] +\Tr_B[Y_{RB}]$ holds only when $\mu \geq 0$.

Now, what is remaining is to show that the strong duality holds. To see this, choose  $Z_{RB} =\Gamma_{RB}^\cP - \gamma \Gamma_{RB}^\cQ $, $Y_{RB}=\delta_1 I_{RB}$, $L_{RB}= \delta_2 I_{RB}$ and  $\mu$ such that $ \mu I_R = \Tr_B[Z_{RB}] + \Tr_B[Y_{RB}] + \delta_3 I_R $ together form a strictly feasible point for all $\delta_i >0$ with $i \in \{1,2,3\}$ and a feasible point for the primal (i.e., supremum formulation) is $\rho_R=\pi_R$, which is the maximally mixed state (in the Hilbert space $\cH_R$) and $\Omega_{RB}= \pi_R \otimes I_B$.
This concludes the proof.

\subsection{Data Processing under PPT Measurements}

Note that $E_\gamma^{\cM}$ for a measurement class $\cM$ may not satisfy data processing under every quantum channel in general. Next, we show that $E_\gamma^{\operatorname{PPT}}$ satisfies data-processing under special classes that are PPT preserving. 
PPT preserving channels are defined as follows:
A bipartite channel $\cN_{AB \to A'B'}$ is called as PPT preserving channel if $T_{A'} \circ \cN_{AB \to A'B'} \circ T_A$ is a completely positive map.

Note that we highlight what is the bipartition considered when PPT-measured hockey-stick divergences are written, and we omit those systems when it is clear from the context.

\begin{proposition}[Data processing of $E_\gamma^{\operatorname{PPT}}$] \label{prop:data_process_E_gamma_PPT}
    Let $\rho_{AR}$ and $\sigma_{AR}$ be quantum states, and let $\cP_{AR \to B R'}$ be a PPT preserving channel with $A,B$ systems belonging to one party and $R,R'$ systems belonging to the other party. 
    Then 
    \begin{equation}
    \label{eq:ineq_DPI}
E_\gamma^{\operatorname{PPT}(B:R')}\!\left(\cP_{AR \to B R'}(\rho_{AR}) \Vert \cP_{AR \to B R'}(\sigma_{AR}) \right) \leq E_\gamma^{\operatorname{PPT}(A:R)}\!\left( \rho_{AR} \Vert \sigma_{AR}\right).
    \end{equation}
Furthermore, for a local isometric channel $\cW_{R \to R' A'}$ acting on the input system $R$, and a channel $\cN_{A \to B}$ on input system $A$, we also have that 
\begin{align} \label{eq:isometry_PPT_local}
    & E_\gamma^{\operatorname{PPT}(B:R)}\!\left(\cN_{A \to B }(\rho_{AR}) \Vert \cN_{A\to B }(\sigma_{AR}) \right) \notag \\&=E_\gamma^{\operatorname{PPT}(B:R'A')}\!\left( \cW_{R \to R'A'} \otimes \cN_{A \to B }(\rho_{AR}) \Vert \cW_{R \to R'A'} \otimes\cN_{A \to B } (\sigma_{AR}) \right).
\end{align}
\end{proposition}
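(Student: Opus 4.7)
The plan is to reduce both parts to the data-processing property under $\cM$-compatible channels established in \cref{prop:properties_m_HS}, specialized to $\cM = \cM_{\operatorname{PPT}}$. The central technical step is to identify the correct class of channels whose Hilbert--Schmidt adjoints preserve PPT measurement operators across the relevant bipartition.

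For~\eqref{eq:ineq_DPI}, I would first verify that every PPT-preserving channel $\cP_{AR \to BR'}$ is $\cM_{\operatorname{PPT}}$-compatible in the sense of~\eqref{eq:measurement_compatible_channels}. Fix $M_{BR'} \in \cM_{\operatorname{PPT}}$. Since $\cP$ is CPTP, its adjoint $\cP^\dag$ is CP and unital, so that $0 \leq \cP^\dag(M_{BR'}) \leq I$ follows immediately. For the remaining condition $0 \leq T_R(\cP^\dag(M_{BR'})) \leq I$, I would use that $\cP$ being PPT-preserving is equivalent to $T_{R'} \circ \cP \circ T_R$ being CPTP; taking the Hilbert--Schmidt adjoint and invoking $T^\dag = T$ then shows that $T_R \circ \cP^\dag \circ T_{R'}$ is CP and unital. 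Combining this with the identity
\begin{equation}
T_R(\cP^\dag(M_{BR'})) = (T_R \circ \cP^\dag \circ T_{R'})(T_{R'}(M_{BR'}))
\end{equation}
and $0 \leq T_{R'}(M_{BR'}) \leq I$ yields the desired bound. Applying \cref{prop:properties_m_HS} then gives~\eqref{eq:ineq_DPI}.

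For~\eqref{eq:isometry_PPT_local}, the ``$\geq$'' direction is immediate from part~1 above, since the local isometric channel $\cW_{R \to R'A'}$ acts on only one side of the bipartition and is therefore PPT-preserving. For the reverse inequality, I would exhibit an explicit local CPTP inverse. Letting $W$ denote the isometry implementing $\cW$, define
\begin{equation}
\cR_{R'A' \to R}(X) \coloneqq W^\dag X W + \operatorname{Tr}\!\left[(I - WW^\dag) X\right] \tau_R,
\end{equation}
for any fixed state $\tau_R$. A direct check shows that $\cR$ is CPTP, acts locally on $R'A'$ (and is therefore PPT-preserving), and satisfies $\cR \circ \cW = \operatorname{id}_R$. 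A second application of part~1 to $\cR$ then recovers the original divergence from the $\cW$-extended one, completing the equality. The last equality between the two orderings $\cW \circ \cP$ and $\cP \circ \cW$ is immediate, since $\cP_{A \to B}$ and $\cW_{R \to R'A'}$ act on disjoint tensor factors and hence commute.

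The main obstacle is conceptual rather than computational: I expect the most delicate point to be checking that PPT-preservation of a bipartite channel translates, through the Hilbert--Schmidt adjoint, into preservation of the two-sided PPT condition $0 \leq M, T_{R'}(M) \leq I$ that defines $\cM_{\operatorname{PPT}}$, especially because the input and output bipartitions differ. Once this adjoint duality is established, the remainder is a short assembly of data-processing applications together with the construction of a local inverse for the isometry.
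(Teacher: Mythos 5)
Your proposal is correct and follows essentially the same route as the paper: both reduce~\eqref{eq:ineq_DPI} to showing that the Hilbert--Schmidt adjoint of a PPT-preserving channel maps PPT measurement operators to PPT measurement operators (via positivity and unitality of the transpose-sandwiched adjoint), and both obtain~\eqref{eq:isometry_PPT_local} by applying data processing to the local isometric channel and to a local CPTP reversal of it. Your explicit construction of the reversal channel $\cR$ is in fact slightly more careful than the paper's informal appeal to ``$\cW^{-1}$''.
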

\begin{IEEEproof}
     Let $M_{BR'}$ be such that $0 \leq M_{BR'} \leq I_{BR'}, 0\leq T_B(M_{BR'}) \leq I_{BR'}$ and consider that 
    \begin{align}
        \Tr\!\left[ M_{BR'} \left( \cP_{AR \to B R'}(\rho_{AR}) - \gamma \cP_{AR \to B R'}(\sigma_{AR})\right)\right] = \Tr\!\left[ \cP^\dag_{BR' \to AR}( M_{BR'}) ( \rho_{AR} -\gamma \sigma_{AR})\right].
    \end{align}
Since $0 \leq M_{BR'}, T_B(M_{BR'}) \leq I_{BR'}$ and $\cP$ is a CPTP map ($\cP^\dag$ is also CPTP and unital), we also have that 
\begin{equation}\label{eq:support_PPT}
    0 \leq \cP^\dag_{BR' \to AR}( M_{BR'}) \leq \cP^\dag_{BR' \to AR}(I_{BR'}) =I_{AR}.
\end{equation}

Also consider 
\begin{align}
 T_A\left( \cP^\dag_{BR' \to AR}( M_{BR'}) \right) &=
 T_A\left(   \cP^\dag_{BR' \to AR} T_B \left( T_B( M_{BR'}) \right) \right) \\
 &= T_A \circ \cP^\dag_{BR' \to AR} \circ T_B \left( T_B( M_{BR'}) \right), 
\end{align}
where the first equality holds since the inverse of $T_B$ is $T_B$ itself.

Let $X_{BR'}$ and $ Y_{AR}$ be arbitrary positive semi-definite operators. Since $T_B\circ\mathcal{P}_{AR \to BR'}\circ T_{A}$ is a positive map, 
\begin{align}
  0 &\leq  \Tr\!\left[ X_{BR'} T_B \left(\cP_{AR \to BR'}\left(T_A(Y_{AR})\right)\right)\right]\\
    &=\Tr\! \left[T_B(X_{BR'}) \cP_{AR \to BR'}\left(T_A(Y_{AR})\right)\right] \\
    &=\Tr\! \left[T_A \circ \cP^\dag_{BR' \to AR} \circ T_B(X_{BR'}) Y_{AR}\right],  
\end{align}
which leads to $T_A \circ \cP^\dag_{BR' \to AR} \circ T_B $ being a positive map. 
With that, we arrive at
\begin{equation}
    0 \leq T_A \circ \cP^\dag_{BR' \to AR} \circ T_B \left( T_B( M_{BR'}) \right) \leq T_A \circ \cP^\dag_{BR' \to AR} \circ T_B (I_{BR'}) =I_{AR}
\end{equation}
with the use of $0 \leq T_B(M_{BR'}) \leq I_{BR'}$ and~\eqref{eq:support_PPT}.
Then, this shows that 
\begin{equation}
 \cP^\dag_{BR' \to AR}( M_{BR'}) \in \left\{M_{AR}:  0 \leq M_{AR} \leq I_{AR}, 0\leq T_A(M_{AR}) \leq I_{AR} \right\}. 
\end{equation}

With that, we have 
\begin{align} 
        \Tr\!\left[ M_{BR'} \left( \cP_{AR \to B R'}(\rho_{AR}) - \gamma \cP_{AR \to B R'}(\sigma_{AR})\right)\right] &= \Tr\!\left[ \cP^\dag_{BR' \to AR}( M_{BR'}) ( \rho_{AR} -\gamma \sigma_{AR})\right] \\
        & \leq \sup_{0 \leq M_{AR}, T_A(M_{AR}) \leq I_{AR}} \Tr\!\left[M_{AR}(\rho_{AR} -\gamma \sigma_{AR}) \right] \\
        &= E_\gamma^{\operatorname{PPT}(A:R)}(\rho_{AR} \Vert \sigma_{AR}).
    \end{align}
Finally, by supremizing over $M_{BR'}$ such that $0 \leq M_{BR'}, T_B(M_{BR'}) \leq I_{BR}$, we conclude the proof of the inequality presented in~\eqref{eq:ineq_DPI}.

The equality in~\eqref{eq:isometry_PPT_local} follows by applying the data-processing inequality under isometric channel $\cW_{R \to R'A'}\otimes\mathcal{I}_B$ and its inverse channel $\cW^{-1}_{R \to R'A'}\otimes\mathcal{I}_B$, both of which belong to the set of PPT preserving channels with respect to the partial transpose on system $B$.
\end{IEEEproof}

\subsection{Proof of~\cref{prop:PPT_measured_joint_covariance}} \label{App:PPT_measured_joint_cov_Proof}
   Let $G$ be a finite group with $\{ U_A(g)\}_{g\in G}$ and $\{V_B(g)\}_{g \in G}$ unitary representations. Since the set $\{ \psi_{RA}: \psi_A= \cT_G(\psi_A)\}$  is a subset of all pure states, we immediately obtain the following inequality:
   \begin{equation}\label{eq:forward_PPT}
        E_\gamma^{\operatorname{PPT}}(\cP \Vert \cQ) \geq \sup_{\psi_{RA}}\left\{ E_\gamma^{\operatorname{PPT}}\!\left(\cP_{A \to B}(\psi_{RA}) \Vert \cQ_{A \to B} (\psi_{RA})\right) :  \psi_A = \cT_G (\psi_A)\right\}.
   \end{equation}
What remains to show is the reverse inequality.

Let $\rho_A \in \cL(\cH_A)$ and $\psi_{RA}$ 
be a purification of state $\rho_A$.  Let $\bar{\rho}_A$ be the group average of $\rho_A$, i.e.; 
\begin{equation}
    \bar{\rho}_A \coloneqq \frac{1}{|G|} \sum_{g \in G} U_A(g) \rho_A U_A^\dag(g),
\end{equation}
and let $\phi_{RA}^{\bar{\rho}}$ be a purification of $\bar{\rho}_A$.

Let us also consider the following state $\psi^{\bar{\rho}}_{PR'A}$, which is also a purification of the state $\bar{\rho}_A$ with the purifying systems $P$ and $R'$ with $P$ system being a classical system and $R'$ being isomorphic to system $R$:
\begin{equation}
    |\psi^{\bar{\rho}} \rangle_{PR'A} \coloneqq \frac{1}{\sqrt{|G|}} \sum_{g \in G} |g\rangle_P \otimes (I_{R'} \otimes U_A(g)) | \psi\rangle_{RA},
\end{equation}
where $\{ |g\rangle \}_{g \in G}$ is an orthonormal basis for $\cH_P$ indexed by the elements of $G$.

Since all purifications of a state can be mapped to each other by isometries acting on the purifying systems, there exists an isometry $W_{R \to PR'}$ that forms the isometric channel $\cW_{R \to PR'}$ such that 
\begin{equation}
   \psi^{\bar{\rho}}_{PRA} = \cW_{R \to PR'}( \phi_{RA}^{\bar{\rho}}).
\end{equation}

Consider that, 
\begin{align}
   & E_\gamma^{\operatorname{PPT}(R:B)}\!\left(\cP_{A \to B} (\phi_{RA}^{\bar{\rho}}) \middle \Vert \cQ_{A \to B} (\phi_{RA}^{\bar{\rho}})\right) \notag \\
   &=  E_\gamma^{\operatorname{PPT}(PR':B)}\!\left( \cW_{R \to PR'} \circ \cP_{A \to B} \circ  (\phi_{RA}^{\bar{\rho}}) \middle \Vert \cW_{R \to PR'} \circ \cQ_{A \to B} (\phi_{RA}^{\bar{\rho}})\right) \\
   &= E_\gamma^{\operatorname{PPT}(PR':B)}\!\left(\cP_{A \to B} (\psi_{PR'A}^{\bar{\rho}}) \middle \Vert \cQ_{A \to B} (\psi_{PR'A}^{\bar{\rho}})\right)  \\
    & \geq E_\gamma^{\operatorname{PPT}(PR':B)}\!\Bigg( \frac{1}{ |G|} \sum_{g \in G} |g\rangle\! \langle g|_P \otimes \left(\cP_{A \to B} \circ \cU_A(g) \right)(\psi_{RA})  
   \Bigg \Vert \frac{1}{ |G|} \sum_{g \in G} |g\rangle\! \langle g|_P \otimes \left(\cQ_{A \to B} \circ \cU_A(g) \right)(\psi_{RA})\Bigg) \\ 
   & \geq E_\gamma^{\operatorname{PPT}(PR':B)}\!\Bigg( \frac{1}{ |G|} \sum_{g \in G} |g\rangle\! \langle g|_P \otimes \left( \left(\cV_B(g)\right)^\dag \circ \cP_{A \to B} \circ \cU_A(g) \right)(\psi_{RA}) 
   \notag  \\ &\quad  \quad \quad \quad \quad \quad  \quad 
   \Bigg \Vert \frac{1}{ |G|} \sum_{g \in G} |g\rangle\! \langle g|_P \otimes \left(\left(\cV_B(g)\right)^\dag \circ \cQ_{A \to B} \circ \cU_A(g) \right)(\psi_{RA})\Bigg)\\
   &=E_\gamma^{\operatorname{PPT}(PR':B)}\!\left( \frac{1}{ |G|} \sum_{g \in G} |g\rangle\! \langle g|_P \otimes  \cP_{A \to B}(\psi_{RA}) \middle \Vert \frac{1}{ |G|} \sum_{g \in G} |g\rangle\! \langle g|_P \otimes \cQ_{A \to B}(\psi_{RA})\right) \\
   &\geq  E_\gamma^{\operatorname{PPT}(R':B)}\!\left(   \cP_{A \to B}(\psi_{RA}) \middle \Vert  \cQ_{A \to B}(\psi_{RA})\right), \label{eq:last_DP_PPT}
\end{align}
where the first two equalities follow from~\eqref{eq:isometry_PPT_local}, the first inequality by applying data processing of $E_\gamma^{\operatorname{PPT}}$ in~\cref{prop:data_process_E_gamma_PPT} for the dephasing channel $X \to \sum_{g \in G} |g \rangle\!\langle g| X|g \rangle\!\langle g|$ on system $P$ (a PPT preserving channel); the second inequality by applying the data processing under the unitary channel given by the unitary $\sum_{g \in G} |g\rangle\!\langle g|_P \otimes V_B^\dag(g)$ \footnote{This is because that one could implement this operation on these states as a classically controlled LOCC operation where the von-Neumann measurement ${|g\rangle\!\langle g|}_{g \in G}$ is applied on the register $P$ followed by a rotation given by the unitary channel $(\cV_B(g))^\dag$. Recall that we chose the system $P$ to be classical, so the mentioned procedure can be followed.}; third equality by the joint-covariance of the channels $\cP$ and $\cQ$ such that 
\begin{equation}
    \left(\cV_B(g)\right)^\dag \circ \cP_{A \to B} \circ \cU_A(g) = \cP_{A \to B}, \quad \left(\cV_B(g)\right)^\dag \circ \cQ_{A \to B} \circ \cU_A(g) = \cQ_{A \to B};
\end{equation}
and finally the last inequality by the data processing under the partial trace channel on the system $P$ by using~\cref{prop:data_process_E_gamma_PPT}.

By definition, the pure state $\phi_{RA}^{\bar{\rho}}$ is such that its reduced state on $A$ is invariant under the channel $\cT_G$. Then, by optimizing over all such pure states, and noting that $R'$ is isomorphic to system $R$, we obtain that 
\begin{align}
   E_\gamma^{\operatorname{PPT}(R:B)}\!\left(   \cP_{A \to B}(\psi_{RA}) \middle \Vert  \cQ_{A \to B}(\psi_{RA})\right)  
   &\leq \sup_{\phi_{RA}}\left\{ E_\gamma^{\operatorname{PPT}}\!\left(\cP_{A \to B}(\phi_{RA}) \Vert \cQ_{A \to B} (\phi_{RA})\right) :  \phi_A = \cT_G (\phi_A)\right\}.
\end{align}
Since the above inequality holds for all pure states $\psi_{RA}$, we obtain the reverse inequality, 
\begin{equation}
     E_\gamma^{\operatorname{PPT}}(\cP \Vert \cQ) \leq \sup_{\psi_{RA}}\left\{ E_\gamma^{\operatorname{PPT}}\!\left(\cP_{A \to B}(\psi_{RA}) \Vert \cQ_{A \to B} (\psi_{RA})\right) :  \psi_A = \cT_G (\psi_A)\right\}. \label{eq:reverse_PPT}
\end{equation}
Then, we conclude the first equality by combining~\eqref{eq:forward_PPT} and~\eqref{eq:reverse_PPT}.

 To prove the next claim, if the representation $\{U_A(g)\}_{g \in G}$ acting on the input space of the channel $\cP$ and $\cQ$ is irreducible, then for every state $\psi_{RA}$ such that $\rho_A= \psi_A$, it holds that $\bar{\rho}_A =I_A /D_A$. Then, since the maximally entangled state is a purification of the maximally mixed state, we let $\phi_{RA}^{\bar{\rho}}= \Phi_{RA}$, which implies that 
\begin{equation}
    E_\gamma^{\operatorname{PPT}(R:B)}\!\left(\cP_{A \to B} (\phi_{RA}^{\bar{\rho}}) \middle \Vert \cQ_{A \to B} (\phi_{RA}^{\bar{\rho}})\right) =E_\gamma^{\operatorname{PPT}(R:B)}\!\left( \cP_{A \to B}( \Phi_{RA}) \Vert \cQ_{A \to B}( \Phi_{RA}) \right).
\end{equation}

Using~\eqref{eq:last_DP_PPT}, we obtain that 
\begin{equation}
    E_\gamma^{\operatorname{PPT}(R:B)}\!\left( \cP_{A \to B}( \Phi_{RA}) \Vert \cQ_{A \to B}( \Phi_{RA}) \right) \geq E_\gamma^{\operatorname{PPT}(R:B)}\!\left(   \cP_{A \to B}(\psi_{RA}) \middle \Vert  \cQ_{A \to B}(\psi_{RA})\right) 
\end{equation}
for all states $\psi_{RA}$. So, by optimizing over all states $\psi_{RA}$ we have that 
\begin{equation}
   E_\gamma^{\operatorname{PPT}}(\cP \Vert \cQ) \leq E_\gamma^{\operatorname{PPT}(R:B)}\!\left( \cP_{A \to B}( \Phi_{RA}) \Vert \cQ_{A \to B}( \Phi_{RA}) \right).
\end{equation}
By choosing the pure state to be the maximally entangled state in~\eqref{eq:def_channel_HS}, we obtain the reverse inequality, concluding the proof of~\cref{prop:PPT_measured_joint_covariance}.

\subsection{Proof of~\cref{prop:channel_div_PPT_depol}} \label{App:channel_div_PPT_depol_proof}
    Proof follows similar to the proof of~\cref{prop:all_meas_Depolarized_channel}. With the use of~\cref{prop:PPT_measured_joint_covariance}, we have that 
    \begin{align}
         E_\gamma^{\operatorname{PPT}}\!\left( \cA_{\operatorname{Dep}}^q \Vert \cA_{\operatorname{Dep}}^p \right) & = E_\gamma^{\operatorname{PPT}}\!\left( \frac{\Gamma_{RB}^{\cA_{\operatorname{Dep}}^q}}{d} \middle \Vert \frac{\Gamma_{RB}^{\cA_{\operatorname{Dep}}^p}}{d} \right) \\
         &= E_\gamma^{\operatorname{PPT}}\!\left( \zeta^{\eta'} \middle \Vert \zeta^\eta \right), \label{eq:last_PPT_depol}
    \end{align}
    where the last equality follows by the substitution in~\eqref{eq:iso_equi_depo} together with $\eta\coloneqq 1-p+p/d^2$ and $\eta'\coloneqq 1-q+q/d^2$.
    
    Finally, we conclude the proof by adapting~\cref{prop:Measured_HS_Isotro} to~\eqref{eq:last_PPT_depol}.
\end{document}